\NewDocumentCommand{\MeijerG}{smmmm}
 {
  \IfBooleanTF{#1}
   {
    \vic_meijerg:nnnnnn { #2 } { #3 } { #4 } { #5 } { small } { }
   }
   {
    \vic_meijerg:nnnnnn { #2 } { #3 } { #4 } { #5 } { } { \; }
   }
 }
\newcommand{\RN}[1]{%
	\textup{\uppercase\expandafter{\romannumeral#1}}%
}
\def\bp{{\bar\partial}}
\def\bfs{\boldsymbol}
\def\pa{\partial}
\DeclareMathOperator{\Res}{Res}
\def\NN{\mathcal{N}}
\def\SS{\mathcal{S}}
\def\TT{\mathcal{T}}
\def\C{\mathbb{C}}
\def\P{\mathbf{P}}
\def\R{\mathbb{R}}
\newcommand{\Pf}{{\textup{Pf}}}
\newcommand{\erfc}{\operatorname{erfc}}
\newcommand{\erf}{\operatorname{erf}}
\newcommand{\Prob}{{\mathbb{P}}}
\theoremstyle{plain}
\newtheorem*{thm*}{Theorem}
\newtheorem{thm}{Theorem}[section]
\newtheorem{cor}[thm]{Corollary}
\newtheorem{prop}[thm]{Proposition}
\newtheorem*{prop*}{Proposition}
\newtheorem*{lem*}{Lemma}
\theoremstyle{definition}
\newtheorem*{eg*}{Example}
\newtheorem*{egs*}{Examples}
\newtheorem{defi}{Definition}[section]
\newtheorem*{Q*}{Question}
\theoremstyle{remark}
\newtheorem*{rmk*}{Remark}
\newtheorem*{rmks*}{Remarks}
\newcommand{\abs}[1]{\lvert#1\rvert}
\numberwithin{equation}{section}
\begin{document}
\title[Universal number variance in 2D Coulomb gases]{Universality of the number variance in rotational invariant two-dimensional Coulomb gases}
%%%%%%%%%%%%%%%%%%%%%%%%%%%%% author %%%%%%%%%%%%%%%%%%%%%%%%%%%%
\author{Gernot Akemann}
\address{Faculty of Physics, Bielefeld University, P.O. Box 100131, 33501 Bielefeld, Germany}
\email{akemann@physik.uni-bielefeld.de}

\author{Sung-Soo Byun}
\address{Center for Mathematical Challenges, Korea Institute for Advanced Study, 85 Hoegiro, Dongdaemun-gu, Seoul 02455, Republic of Korea}
\email{sungsoobyun@kias.re.kr}

\author{Markus Ebke}
\address{Department of Mathematics, Friedrich-Alexander-Universit\"at Erlangen-N\"urnberg, Cauerstrasse 11, 91058 Erlangen, Germany}
\email{markus.ebke@fau.de}
%%%%%%%%%%%%%%%%%%%%%%%%%%%%% author %%%%%%%%%%%%%%%%%%%%%%%%%%%%

%\subjclass[2020]{Primary 60B20; Secondary 33C45 }

\date{\today}

\thanks{The work of Gernot Akemann was partly funded by the Deutsche Forschungsgemeinschaft (DFG) grant SFB 1283/2 2021 -- 317210226.
Sung-Soo Byun was partially supported by Samsung Science and Technology Foundation (SSTF-BA1401-51), by the National Research Foundation of Korea (NRF-2019R1A5A1028324) and by a KIAS Individual Grant (SP083201) via the Center for Mathematical Challenges at Korea Institute for Advanced Study.
Markus Ebke was partially supported by the DFG grant IRTG 2235.
}

%%%%%%%%%%%%%%%%%%%%%%Abstract%%%%%%%%%%%%%%%%%%%%%%%%%%%%%%%%%%%
\begin{abstract}
An exact map was established by Lacroix-A-Chez-Toine, Majumdar, and Schehr in \cite{lacroix2019rotating} between the $N$ complex eigenvalues of complex non-Hermitian  random matrices from the Ginibre ensemble, and the positions of $N$ non-interacting Fermions in a rotating trap in the ground state. An important quantity is the statistics of the number of Fermions $\mathcal{N}_a$ in a disc of radius $a$. Extending the work \cite{lacroix2019rotating} covering Gaussian and rotationally invariant potentials $Q$, we present a rigorous analysis in planar complex and symplectic ensembles, which both represent 2D Coulomb gases. We show that the variance of $\mathcal{N}_a$ is universal in the large-$N$ limit, when measured in units of the mean density proportional to $\Delta Q$, which itself is non-universal. This holds in the large-$N$ limit in the bulk and at the edge, when a finite fraction or almost all Fermions are inside the disc. In contrast, at the origin, when few eigenvalues are contained, it is the singularity of the potential that determines the universality class. We present three explicit examples from the Mittag-Leffler ensemble, products of Ginibre matrices, and truncated unitary random matrices. Our proofs exploit the integrable structure of the underlying determinantal respectively Pfaffian point processes and a simple representation of the variance in terms of truncated moments at finite-$N$.
\end{abstract}
%%%%%%%%%%%%%%%%%%%%%%Abstract%%%%%%%%%%%%%%%%%%%%%%%%%%%%%%%%%%%

\maketitle
%\tableofcontents
%%%%%%%%%%%%%%%%%%%%%%%introduction%%%%%%%%%%%%%%%%%%%%%%%%%%%%%%%%%

\section{Introduction and discussion of main results}

There are many applications for random matrices with complex eigenvalues nowadays, for instance, in open quantum systems or quantum field theory with finite baryon chemical potential, and we refer to \cite{Fyodorov_2003} for a comprehensive review. Our main motivation comes from a recent physical realisation of the $N$ complex eigenvalues of the complex Ginibre ensemble which is Gaussian. In \cite{lacroix2019rotating} an exact map was constructed to the positions of $N$ non-interacting Fermions in two dimensions in the ground state, confined by a harmonic trap that is rotating with a particular frequency. It is also equivalent to $N$ free electrons in a perpendicular magnetic field in the lowest Landau level \cite[Section 15.2]{forrester2010log}. The first map was extended to higher order levels \cite{smith2021counting} and may allow for a direct comparison to experiments on cold atoms in such a setup, cf. \cite{Cooper_2008}. For a review of the extensive literature on the existence of such a map in one dimension to Hermitian ensembles, including finite temperature, we refer to \cite{Dean_2019}.

One of the key quantities computed in \cite{lacroix2019rotating}  was the expected number of Fermions in a disc of radius $a$ and its variance. We will investigate if the variance obtained there is universal, that is, if it holds for a much larger class of non-Gaussian ensembles. This question was already initiated in \cite{lacroix2019rotating} for more general rotationally invariant potentials, where also the entanglement entropy and all moments of the cumulant generating function were determined, which yields the variance as a special case. Three different limits were identified in \cite{lacroix2019rotating} where few, a finite fraction or almost all Fermions are contained in the disc. We will denote these by origin, bulk and edge limit, to be discussed in more detail below. We mention in passing that the intermediate and large deviations for the largest eigenvalue have been studied as well, see \cite{cunden2016large,lacroix2018extremes} and references therein.

The computation of the number of eigenvalues in a disc has a long history in non-Hermitian random matrix theory, including \cite{Grobe_1988,MR1181356,Mehta,jancovici1993large}. Perhaps surprisingly, it is much easier than in Hermitian ensembles to compute the probability that a disc is empty (gap probability), contains or exceeds (overcrowding) a prescribed number of eigenvalues, at least in the rotationally invariant case. This is because the corresponding eigenvalues of the Fredholm determinant (or Pfaffian) are explicitly known, and we refer to \cite{MR2536111} for a comprehensive study in complex and symplectic ensembles.
(See also \cite{charlier2021large,ghosh2018point,byun2022almost} and references therein for recent development in this direction.)
Other quantities have been analysed as well, including the number of eigenvalues in a generic, non-rotational invariant domain \cite{AR2016hole,adhikari2018hole,nishry2020forbidden} or the probability of overcrowding a domain \cite{akemann2013hole,akemann2014permanental,krishnapur2006overcrowding}.

Our goal is to extend the realm of universality of the number variance and test its limitations in a rigorous analysis. In the bulk limit, when a  finite fraction of eigenvalues (Fermions) are contained in a disc of radius $a$, the variance grows linearly with $a$ \cite{lacroix2019rotating}, when measured in units of the mean eigenvalue density at $a$ - which is itself a non-universal quantity. In the edge case, when almost all eigenvalues are contained, a universal scaling function
was given in \cite{lacroix2019rotating} for planar complex ensembles.
For the bulk and edge limit, we will show that in a completely different symmetry class, the symplectic Ginibre ensemble and its generalisation to rotationally invariant potentials, the same universal answer prevails. This is despite the breaking of rotational invariance on the level of the joint distribution of eigenvalues through the presence of complex conjugate eigenvalue pairs and a repulsion from the real axis \cite{ginibre1965statistical}. These planar symplectic ensembles also represent a two-dimensional Coulomb gas \cite{forrester2016analogies,kiessling1999note} and are interesting in their own right.

In contrast, in the origin limit the number variance becomes sensitive to the presence of singularities or zeros of the mean eigenvalue density. We will present three different ensembles with distinct behaviour at the origin: the Mittag-Leffler ensembles, products of $m$ Ginibre matrices, and truncated unitary random matrices. In all three cases we consider the symmetry classes with complex and with symplectic matrix elements.
The truncated ensembles are special as a limit of weak non-unitarity exists, where the eigenvalues converge to the unit circle. It was suggested in the limit of quantum systems with few open channels, see e.g.\ \cite{Fyodorov_2003} and references therein. At strong non-unitarity, we are back to the Ginibre universality class, including the origin limit as we will show.

A third ensemble of real non-symmetric Gaussian random matrices was introduced by Ginibre \cite{ginibre1965statistical}. While both the complex and symplectic Ginibre ensemble enjoy a simple integrable structure, yielding an almost identical result for the variance already at finite-$N$ as we will see, the Pfaffian structure of the real Ginibre ensemble is much more intricate. It separates into sectors with a different number of real eigenvalues, compare \cite{forrester2010log}. Thus we can only speculate if the same universal answer for the number variance found here in the bulk and at the edge also holds there.

\subsection{Basic setup} \label{Subsec_model intro}
We study the complex eigenvalues of two types of ensembles of random matrices with complex ($\beta=2$) or quaternion entries ($\beta=4$), whence the name planar ensembles. Furthermore, we will restrict ourselves to rotationally invariant potentials. In the Gaussian case these are represented by the complex respectively symplectic Ginibre ensemble. For higher order potentials with $\beta=2$, we consider random normal matrices.
We will label these two ensembles by the number of independent real degrees of freedom $\beta$ per matrix element. Although both correspond to a two-dimensional Coulomb gas \cite{forrester2010log}, $\beta$ does not represent the inverse temperature for $\beta=4$.

For the planar complex ensemble $(\beta=2)$ the joint probability distribution of points  $\{z_j\}_{j=1}^N\in\mathbb{C}^N$ is defined as
\begin{equation} \label{Gibbs cplx}
 d \P_N^{(2)} (z_1,\ldots, z_N)=\frac{1}{Z_N^{(2)}} \prod_{j>k} |z_j-z_k|^2 \prod_{j=1}^N e^{-NQ_N(z_j)}\, dA(z_j).
\end{equation}
Here $dA(z):=\frac{1}{\pi}d^2z$ is the normalised area measure, and the partition function reads
\begin{equation} \label{Z cplx}
 Z_N^{(2)} =\int_{\mathbb{C}^N}\prod_{j>k}^N |z_j-z_k|^2 \prod_{j=1}^N e^{-NQ_N(z_j)}\, dA(z_j).
\end{equation}
It contains the modulus square of the Vandermonde determinant, $\Delta_N(z_1,\ldots, z_N)=\prod_{j>k}^N (z_j-z_k)=\det_{1\leq j,k\leq N}[z_j^{k-1}]$.
The function $Q_N:\C \to \R$ is called external potential and satisfies suitable conditions to guarantee its existence, see Def.\ \ref{Qsuit} below.
In particular, if $Q_N$ is given by
\begin{equation} \label{Q Gin}
Q^{\textup{Gin}}(z):= \frac{\beta}{2} |z|^2,
\end{equation}
the ensemble \eqref{Gibbs cplx} reduces to the
complex
Ginibre ensemble \cite{ginibre1965statistical}.
Eq.~\eqref{Gibbs cplx} represents a determinantal point process,
see Section \ref{Var-finite-N} and Eq.~\eqref{RNk det} for more details later. All its correlation functions can be expressed in terms of the kernel of planar  orthogonal polynomials, which are monomials due to the rotational invariance here.

%%%changes according to referee
The map to the lowest Landau level of $N$ free electrons in a magnetic field can be seen as follows, and we refer to \cite{lacroix2019rotating} for more details. The single particle Hamiltonian in a two-dimensional harmonic trap with frequency $\omega$, rotating at frequency $\Omega$, is given by
\begin{equation}
\label{Hamiltonian}
H(\vec{p},\vec{r})=\frac{\vec{p}^2}{2m}+\frac{m\omega^2}{2}\vec{r}^2-\Omega\vec{e}_3\cdot \vec{r}\times\vec{p}=\frac{1}{2m}(\vec{p}-m\omega \vec{e}_3\times\vec{r})^2 +(\omega-\Omega)\vec{e}_3\cdot \vec{r}\times\vec{p}.
\end{equation}
Choosing units $\hbar=m=\omega=1$ and restricting ourselves to $0\leq \Omega\leq1$ to have bound states, the energy eigenvalues of the $N$ particle Hamilton operator $\sum_{i=1}^NH(\hat{\vec{p}}_i,\hat{\vec{r}}_i)$ are given by
\begin{equation}
E_{n_1,n_2}=1+(1-\Omega)n_1+(1+\Omega)n_2,
\end{equation}
with $n_1,n_2\in\mathbb{N}$. For $n_2=0$ we are in the lowest Landau level, which for $1-2/N<\Omega<1$ is non-degenerate. The $N$ lowest eigenfunctions $\phi_{n_1}$ are then proportional to
%%%end changes
 \begin{equation}
 \phi_{n_1}(z)\sim z^{n_1} e^{-|z|^2/2},\qquad z=r_1+ir_2.
 \end{equation}
Consequently, the Slater determinant of the $N$ particle wave-function $\Psi_0(z_1,\ldots,z_N) \sim\Delta_N(z_1,\ldots,z_N)$ is proportionally to the Vandermonde determinant, and the amplitude $|\Psi_0|^2$ is the joint density \eqref{Gibbs cplx}.

For the planar symplectic ensemble $(\beta=4)$ the joint probability distribution of points $\{z_j\}_{j=1}^N$ has an additional complex conjugation symmetry, that is the eigenvalues come in complex conjugate pairs.
The point process follows the law
\begin{equation} \label{Gibbs symplectic}
 d \P_N^{(4)}(z_1,\ldots, z_N) =\frac{1}{Z_N^{(4)}} \prod_{j>k} |z_j-z_k|^2 |z_j-\bar{z}_k|^2 \prod_{j=1}^N |z_j-\bar{z}_j|^2 \, e^{-NQ_N(z_j)}\, dA(z_j),
\end{equation}
with
\begin{equation} \label{Z sympl}
 Z_N^{(4)} =\int_{\mathbb{C}^N}\prod_{j>k} |z_j-z_k|^2 |z_j-\bar{z}_k|^2 \prod_{j=1}^N |z_j-\bar{z}_j|^2 \prod_{j=1}^N e^{-NQ_N(z_j)}\, dA(z_j).
\end{equation}
Inserting \eqref{Q Gin} we obtain the symplectic Ginibre ensemble \cite{ginibre1965statistical}. Eq.~\eqref{Gibbs symplectic} represents a Pfaffian point process,
to be detailed in Section \ref{Var-finite-N} Eq.~\eqref{RNk Pfa} later. All its correlation functions can be expressed in terms of the pre-kernel of skew-orthogonal polynomials that will be introduced there.
Although the joint density \eqref{Gibbs symplectic} is proportional to the Vandermonde $\Delta_{2N}(z_1,\ldots, z_N, \bar{z}_1,\ldots,\bar{z}_N)$ \cite{MR1928853}, we are currently lacking a map to the Slater determinant of $N$ electrons in a rotating trap as in \eqref{Hamiltonian}.

Below we  consider a general radially symmetric potential for both classes of point processes,
\begin{equation}\label{Q radially sym}
Q_N(z) =g_N(|z|=r), \qquad g_N: \R_+ \to \R,
\end{equation}
which possibly depends on $N$.
Such an $N$-dependence is useful to describe ensembles that lead to different universality classes, see \eqref{g ML potential}, \eqref{Q products}, and \eqref{Q truncated unitary} below.
If the potential does not depend on $N$, we drop the subscript $N$ and simply write $Q\equiv Q_N$, $g \equiv g_N$.

Let us first discuss the macroscopic behaviour of these ensembles.
For this purpose, we denote $Q:=\lim_{N\to \infty} Q_N$ and $g:=\lim_{N\to \infty} g_N$, which is always assumed to be well defined.
Somewhat roughly speaking, the macroscopic behaviour of the ensembles depends only on $Q$.
It is well known \cite{HM13,MR2934715,kiessling1999note,chafai2014first} that the empirical measure converges to Frostman's equilibrium measure \cite{ST97} of the form
\begin{equation}\label{Frostman}
\rho(z)  =\frac{2}{\beta} \Delta Q(z) \cdot \mathbbm{1}_S(z),
\end{equation}
where $\Delta := \pa \bp=\frac14(\pa_r^2+\frac{1}{r} \pa_r)$, $S$ is a certain compact set called the droplet, and $\mathbbm{1}_S(z)$ is the indicator function on this set.

Let us formulate the following conditions.
\begin{defi}\label{Qsuit}
A rotational invariant potential $Q_N(z) =g_N(|z|)$ will be called \emph{suitable}, if it satisfies the following conditions:
\begin{enumerate}
\item $g_N(r) \gg \log r$ as $r \to \infty$,
\smallskip
\item $g_N \in C^2(0,1]$,
\smallskip
    \item $r g'_N(r)$ increases on $(0,\infty)$,
\smallskip
\item
$\lim_{r \to 0} r g'(r)=0$, and $g'(1)=\beta$.
\end{enumerate}
\end{defi}

The first condition is required to guarantee $Z_N^{(\beta)}<\infty.$
The second condition is made merely for convenience.
Due to the last condition
the support or droplet $S$ of the limiting spectral distribution is given by the unit disc, see \cite[Section \RN{4}.6]{ST97}, \cite[Section 2.7]{HM13} and \cite[Theorem 2.1]{MR2934715}.
For the Gaussian potential \eqref{Q Gin}, it corresponds to the circular law.
Note that
\begin{equation}
(rg'_N(r))'=g'_N(r)+rg''_N(r) = 4r \Delta Q_N(z).
\end{equation}
Thus $rg_N'(r)$ increases on $(0,\infty)$ if and only if $\Delta Q_N(z) > 0$ for $z \not= 0$, as formulated in the third condition.

For $a > 0$, let us write $\NN_a^{(\beta)}$ for the \emph{number of eigenvalues} in the disc of radius $a$ centered around the origin, $D_a=\{ z \in \C : |z|<a \}$.
Here, $\beta$ labels the respective ensemble.
We denote by
\begin{equation}\label{EN VN}
E_N^{(\beta)}(a):=\mathbb{E} \,\NN_a^{(\beta)}, \qquad  V_N^{(\beta)}(a):=\textup{Var }\NN_a^{(\beta)} ,
\end{equation}
the \emph{mean}  $E_N^{(\beta)}$, respectively the \emph{number variance}  $V_N^{(\beta)}$ of $\NN_a^{(\beta)}$.
It is easy to derive the leading order asymptotic behaviour of $E_N^{(\beta)}$, see Corollary \ref{Cor_in prob conv} and details in its proof, Eq.~\eqref{EN asymp}.
The primary purpose of this work is to derive closed formulas for $V_N^{(\beta)}$ in both symmetry classes at finite-$N$ and their asymptotic behaviour as $N \to \infty$.

%%%%%%%%%%%%%%%%%%%%%%%%%%%
\subsection{Number variance in the bulk and at the edge}\label{subsec:var}

In this subsection, we introduce our main results, where we begin
with finite-$N$.
For a general radially symmetric potential \eqref{Q radially sym}, let us write
\begin{equation} \label{ortho norm}
h_j:=\int_\C |z|^{2j} e^{-N Q_N(z)}\,dA(z) = 2 \int_0^\infty r^{2j+1} e^{-N g_N(r)}\,dr,
\end{equation}
for the non-vanishing $j$th moments, which are at the same time the squared norms of planar orthogonal polynomials (monomials).
We also write
\begin{align}
\begin{split} \label{ortho norm trunc}
h_{j,1}(a)&:= \int_{ |z|<a } e^{-NQ_N(z)} |z|^{2j}\,dA(z)=  2 \int_0^a e^{-N g_N(r)} r^{2j+1}  \,dr,
\\
h_{j,2}(a)&:= \int_{ |z|>a } e^{-NQ_N(z)} |z|^{2j}\,dA(z)=  2 \int_a^\infty e^{-N g_N(r)} r^{2j+1}  \,dr
\end{split}
\end{align}
for the truncated moments or squared norms.
Note that $h_j=h_{j,1}(a)+h_{j,2}(a).$
We then obtain the following.

\begin{prop} \textbf{\textup{(Mean and number variance of the number $\NN_a^{(\beta)}$  at finite-$N$)}}  \label{Prop_VN rep}
For each $N$, we have
\begin{equation} \label{EN 2 4 rep}
E_N^{(2)}(a)= \sum_{j=0}^{N-1} \frac{h_{j,1}(a)}{h_j}, \qquad
E_N^{(4)}(a) =  \sum_{j=0}^{N-1} \frac{ h_{2j+1,1}(a) }{ h_{2j+1} },
\end{equation}
and
\begin{gather}
\label{VN 2 4 rep}
V_N^{(2)}(a)= \sum_{j=0}^{N-1} \frac{h_{j,1}(a)\,h_{j,2}(a)}{h_j^2}, \qquad
V_N^{(4)}(a) =  \sum_{j=0}^{N-1} \frac{ h_{2j+1,1}(a) \, h_{2j+1,2}(a)  }{ h_{2j+1}^2 }.
\end{gather}
\end{prop}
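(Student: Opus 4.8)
The plan is to reduce the counting statistic $\NN_a^{(\beta)}$ to integrals of the correlation kernel (respectively the pre-kernel) of the underlying point process, and then let the rotational invariance diagonalise every resulting double sum. I start from the elementary identity, valid for any simple point process with one- and two-point correlation functions $R_N^{(1)},R_N^{(2)}$ (the superscript denoting the order),
\begin{equation}
V_N^{(\beta)}(a)=\int_{D_a}R_N^{(1)}(z)\,dA(z)+\int_{D_a}\int_{D_a}\Big(R_N^{(2)}(z,w)-R_N^{(1)}(z)R_N^{(1)}(w)\Big)\,dA(z)\,dA(w),
\end{equation}
which follows by expanding $(\NN_a^{(\beta)})^2$ and using $E_N^{(\beta)}(a)=\int_{D_a}R_N^{(1)}$. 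Everything then reduces to inserting the explicit correlation functions and carrying out the angular integrals.

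For $\beta=2$ the process \eqref{Gibbs cplx} is determinantal with the Hermitian monomial kernel $K_N(z,w)=e^{-\frac N2(Q_N(z)+Q_N(w))}\sum_{j=0}^{N-1}(z\bar w)^j/h_j$, so that $R_N^{(1)}(z)=K_N(z,z)$ and $R_N^{(2)}(z,w)-R_N^{(1)}(z)R_N^{(1)}(w)=-|K_N(z,w)|^2$. Integrating the diagonal over $D_a$ gives $E_N^{(2)}(a)=\sum_j h_{j,1}(a)/h_j$ straight from the definition of $h_{j,1}$. For the variance one expands $|K_N(z,w)|^2=e^{-N(Q_N(z)+Q_N(w))}\sum_{j,k}(z\bar w)^j(\bar z w)^k/(h_jh_k)$; the angular integrations over $D_a\times D_a$ force $j=k$, leaving $\iint_{D_a^2}|K_N|^2=\sum_j h_{j,1}(a)^2/h_j^2$. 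Subtracting and using $h_j=h_{j,1}(a)+h_{j,2}(a)$ converts $h_{j,1}/h_j-h_{j,1}^2/h_j^2$ into $h_{j,1}h_{j,2}/h_j^2$, which is the assertion.

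For $\beta=4$ I would run the same scheme with \eqref{Gibbs symplectic}, now Pfaffian, expressing $R_N^{(1)},R_N^{(2)}$ through the skew-orthogonal pre-kernel $\kappa_N$ and its associated $2\times2$ matrix kernel. The decisive structural input is that for a radially symmetric weight the skew inner product couples only consecutive monomials, $\langle z^m,z^n\rangle\neq0$ iff $|m-n|=1$; after the standard correction of the even-degree skew-orthogonal polynomials by lower even monomials, the skew-norms collapse to the odd moments $h_{2j+1}$. The same angular integrations then annihilate all off-diagonal terms and isolate the diagonal, yielding $E_N^{(4)}(a)=\sum_j h_{2j+1,1}(a)/h_{2j+1}$ and, after the identical $h=h_1+h_2$ manipulation, $V_N^{(4)}(a)=\sum_j h_{2j+1,1}(a)h_{2j+1,2}(a)/h_{2j+1}^2$. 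I expect the main obstacle to be precisely this symplectic bookkeeping: setting up the Pfaffian two-point function through the matrix kernel, carrying the lower-order corrections to the even polynomials, and checking that every angular integral but the diagonal vanishes so that only the odd skew-norms survive.

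A clean conceptual cross-check, and an alternative derivation, is the Kostlan-type fact that for radially symmetric ensembles the squared moduli decouple into $N$ independent radial variables, with densities $\propto r^{2j+1}e^{-Ng_N(r)}$ in the complex case and $\propto r^{2(2j+1)+1}e^{-Ng_N(r)}$ in the symplectic case. Then $\NN_a^{(\beta)}$ is a sum of independent Bernoulli variables with success probabilities $p_j=h_{j,1}(a)/h_j$, respectively $p_j=h_{2j+1,1}(a)/h_{2j+1}$, so that $E_N^{(\beta)}(a)=\sum_j p_j$ and $V_N^{(\beta)}(a)=\sum_j p_j(1-p_j)$ reproduce both pairs of formulas simultaneously and explain the product form $h_{\cdot,1}h_{\cdot,2}/h_\cdot^2$.
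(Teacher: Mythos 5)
Your proposal is correct, and for $\beta=2$ it is essentially the paper's argument: the same variance identity, the same monomial kernel, the same angular diagonalisation (the paper merely writes the variance in the complementary inside/outside form $\int_{|z|<a}\int_{|w|>a}|K_N|^2$, which is trivially equivalent to your subtraction using $h_j=h_{j,1}+h_{j,2}$). Where you genuinely diverge is $\beta=4$. The paper actually executes the ``symplectic bookkeeping'' you defer: it uses the $w\mapsto\bar w$ symmetry to reduce the Pfaffian two-point function to $|\kappa_N(z,w)|^2$, splits the resulting integral into four terms $V_{N,\mathrm{I}},\dots,V_{N,\mathrm{IV}}$, and kills three of them by an index-counting argument in the expansion $G_N(z,w)=\sum_{l\le k}a_{2k+1}a_{2l}z^{2k+1}w^{2l}$, with the survivor collapsing via $a_{2k}a_{2k+1}=\tfrac{1}{2h_{2k+1}}$. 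Your sketch alone would leave that as a gap, but your Kostlan-type argument closes it by a different and arguably cleaner route: for radial $f$ one has $\mathbb{E}\prod_j f(|z_j|)=\prod_{k=0}^{N-1}h_{2k+1}[f\,e^{-NQ_N}]/h_{2k+1}$, because a radial perturbation of the weight preserves the skew-orthogonality structure ($q_{2k+1}$ monomial, $s_k=2h_{2k+1}$) and $Z_N^{(4)}=N!\prod_k 2h_{2k+1}$ by de Bruijn's formula; taking $f=1+t\,\mathbbm{1}_{D_a}$ exhibits $\NN_a^{(4)}$ as a sum of independent Bernoulli variables with $p_k=h_{2k+1,1}(a)/h_{2k+1}$, giving both formulas at once (and in fact all cumulants, which the kernel computation does not).

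One point to make explicit: unlike the classical $\beta=2$ Kostlan theorem, the $\beta=4$ independence statement is not folklore-free, so you should include the two-line justification above (or cite the gap-probability literature the paper points to) rather than assert it as a known fact.
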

This proposition plays a key role in proving our main result, Theorem~\ref{Thm_number variance} below. Notice that for $\beta=4$ only the odd moments appear.
Let us emphasise that for $\beta=2$, the expressions \eqref{EN 2 4 rep} and \eqref{VN 2 4 rep} also follow from \cite{lacroix2019intermediate} using the cumulant generating function.

We turn to our main result for the number variance in different large-$N$ limits.

\begin{thm} \label{Thm_number variance}
Let $Q_N$ be a suitable  potential according to Definition \ref{Qsuit}.
Then the following holds.
\begin{enumerate}[label=(\roman*)]
    \item \textup{\textbf{(Bulk)}} For $a\in (0,1)$ fixed, we have
\begin{equation} \label{VN bulk}
\lim_{N \to \infty} \frac{\beta }{ \sqrt{N \Delta Q(a)} }  V_N^{(\beta)}(a) =  \frac{2a}{\sqrt{\pi}} .
\end{equation}
\item \textup{\textbf{(Edge)}} For $\SS \in \R$, we have
\begin{equation} \label{VN edge}
\lim_{N \to \infty} \frac{ \beta }{ \sqrt{N \Delta Q(1) } } V_N^{(\beta)}\Big( 1-\frac{ \SS }{ \sqrt{ 2 \Delta Q(1)  N} } \Big) = \frac{2}{\sqrt{\pi}} \, f(\SS),
\end{equation}
where
\begin{equation}  \label{f(SS)}
f(\SS)  :=  \sqrt{ 2\pi }   \int_{-\infty}^{\SS} \frac{\erfc(t)\erfc(-t)}{4}  \,dt=  \frac{\erfc(-\sqrt{2}\SS)}{2} -\frac{ e^{-\SS^2} }{ \sqrt{2} } \erf(\SS) + \sqrt{ \frac{\pi}{2} } \, \SS \,\frac{\erfc(\SS)\erfc(-\SS)}{2} .
\end{equation}
\end{enumerate}
\end{thm}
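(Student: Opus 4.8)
The plan is to work directly from the finite-$N$ identities \eqref{VN 2 4 rep} of Proposition~\ref{Prop_VN rep} and to read off each summand as a Bernoulli-type variance. Introducing the ratios
\[
p_k(a):=\frac{h_{k,1}(a)}{h_k}\in[0,1],\qquad 1-p_k(a)=\frac{h_{k,2}(a)}{h_k},
\]
the two symmetry classes collapse to the single expression $\sum_k p_k(a)\bigl(1-p_k(a)\bigr)$: for $\beta=2$ the index $k$ runs over $0,\dots,N-1$, whereas for $\beta=4$ it runs over the odd values $k=2j+1$, $j=0,\dots,N-1$, i.e.\ the symplectic sum simply skips every second term. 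It therefore suffices to analyse the single profile $k\mapsto p_k(a)$ and to perform the two induced sums.

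The second step is a uniform saddle-point analysis of $p_k(a)$. Writing the integrand in \eqref{ortho norm trunc} as $r^{2k+1}e^{-Ng_N(r)}=e^{-N\phi_k(r)}$ with $\phi_k(r):=g_N(r)-\tfrac{2k+1}{N}\log r$, the required monotonicity of $rg_N'$ in Definition~\ref{Qsuit} guarantees a unique minimiser $r_k$ of $\phi_k$, determined by $r_kg_N'(r_k)=\tfrac{2k+1}{N}$. From $(rg_N')'=4r\Delta Q_N$ one gets $\phi_k''(r_k)=g_N''(r_k)+g_N'(r_k)/r_k=4\Delta Q_N(r_k)$, so near $r_k$ the integrand is a Gaussian of width $(4N\Delta Q_N(r_k))^{-1/2}$. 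Expanding numerator and denominator by the Laplace method yields the error-function profile
\[
p_k(a)=\frac12\,\erfc\!\Bigl((r_k-a)\sqrt{2N\Delta Q_N(r_k)}\Bigr)+o(1),
\]
uniformly while $r_k$ stays within $O(N^{-1/2})$ of $a$; outside this window $p_k(a)=1+o(1)$ if $r_k<a$ and $p_k(a)=o(1)$ if $r_k>a$, so the products $p_k(1-p_k)$ there are negligible. Hence the summand concentrates on the $O(\sqrt N)$ indices with $|r_k-a|=O(N^{-1/2})$.

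For the bulk limit (i), fix $a\in(0,1)$ and set $x_k:=(r_k-a)\sqrt{2N\Delta Q(a)}$, so that $p_k(1-p_k)=\tfrac14\erfc(x_k)\erfc(-x_k)+o(1)$. Differentiating $r_kg_N'(r_k)=\tfrac{2k+1}{N}$ and using $(rg_N')'=4r\Delta Q_N$ gives $\tfrac{dr_k}{dk}=(2Nr_k\Delta Q_N(r_k))^{-1}$, whence a unit step in $k$ advances $x_k$ by $\Delta x=(a\sqrt{2N\Delta Q(a)})^{-1}$ for $\beta=2$ and by $2\Delta x$ for $\beta=4$. The sums converge to Riemann integrals $\tfrac{1}{\Delta x}\int_{\R}\tfrac14\erfc(x)\erfc(-x)\,dx$, and writing each $\erfc$ as a Gaussian integral and passing to the variables $s\pm t$ gives $\int_{\R}\erfc(x)\erfc(-x)\,dx=2\sqrt{2/\pi}$. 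Collecting the constants yields $V_N^{(2)}(a)\sim\tfrac{a}{\sqrt\pi}\sqrt{N\Delta Q(a)}$ and $V_N^{(4)}(a)\sim\tfrac{a}{2\sqrt\pi}\sqrt{N\Delta Q(a)}$; multiplying by $\beta/\sqrt{N\Delta Q(a)}$, the factor-two difference in the spacing cancels exactly against $\beta$ and both classes give the common value $\tfrac{2a}{\sqrt\pi}$.

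For the edge limit (ii), the condition $g'(1)=\beta$ in Definition~\ref{Qsuit} pins the droplet boundary at $r=1$, and $a=1-\SS/\sqrt{2\Delta Q(1)N}$ lies within $O(N^{-1/2})$ of it. The analysis is identical, with the sole difference that the index sum now terminates at $j=N-1$, whose saddle $r_{\max}$ obeys $r_{\max}g_N'(r_{\max})=2-1/N$ (respectively $4-1/N$), so $r_{\max}\to1$. Writing $t:=(r_k-1)\sqrt{2N\Delta Q(1)}+\SS=x_k$, the upper endpoint of the Riemann sum converges to $\SS$ while the lower one tends to $-\infty$, so the sum converges to the incomplete integral $\tfrac{1}{\Delta x}\int_{-\infty}^{\SS}\tfrac14\erfc(t)\erfc(-t)\,dt=\tfrac{1}{\Delta x}\,\tfrac{f(\SS)}{\sqrt{2\pi}}$ by the definition \eqref{f(SS)}. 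Assembling the constants as in the bulk (now with $a\to1$) reproduces $\tfrac{2}{\sqrt\pi}f(\SS)$ for both $\beta$, and letting $\SS\to+\infty$ recovers the bulk value at $a=1$ since $f(+\infty)=1$. The main obstacle is the second step: one must make the Laplace expansion of $p_k(a)$ uniform in $k$ with errors summing to $o(\sqrt N)$ over the $O(\sqrt N)$ active indices, control the Euler--Maclaurin/Riemann-sum error, and handle both the allowed $N$-dependence of $g_N$ and the fact that at the edge $r_k$ approaches the endpoint $r=1$ of the interval $(0,1]$ on which $g_N\in C^2$; this uniform control is where the real work lies.
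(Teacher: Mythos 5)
Your proposal is correct and follows essentially the same route as the paper's proof: both reduce via Proposition~\ref{Prop_VN rep} to a sum of Bernoulli variances $\sum_k p_k(1-p_k)$, perform a Laplace analysis at the saddle $r_k g'(r_k)=\frac{2k+1}{N}$ with $\phi_k''(r_k)=4\Delta Q(r_k)$ to obtain the error-function profile, and evaluate the resulting Riemann sum over the $O(\sqrt N)$ active indices, with the doubled index spacing accounting for $\beta=4$ exactly as in the paper's remark, and with the truncation of the index range at $j=N-1$ (using $g'(1)=\beta$) producing the incomplete integral $f(\SS)$ at the edge. The only differences are cosmetic: you parametrise by $x_k=(r_k-a)\sqrt{2N\Delta Q(a)}$ and the spacing $dr_k/dk=(2Nr_k\Delta Q(r_k))^{-1}$ where the paper writes $j=j_*+\sqrt{N}s$ and computes $x_s$, and you evaluate $\int_{\R}\erfc(x)\erfc(-x)\,dx=2\sqrt{2/\pi}$ by a Gaussian change of variables rather than the paper's integration by parts.
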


Note that $f(\SS) \to 1$ as $\SS \to \infty$.
Thus one can see that as $\SS \to \infty$, the right-hand side of \eqref{VN edge} recovers that of \eqref{VN bulk} in the limit  $a\to1$.
We also remark that the Laplacian of $Q$  in \eqref{VN bulk} and \eqref{VN edge} corresponds to the macroscopic density $\rho$ in the large-$N$ limit, see \eqref{Frostman}. While $\rho$ itself is clearly non-universal, as it depends explicitly on the potential $Q$, the right hand side of the variance measured in units $N\Delta Q(a)$ is universal. To the best of our knowledge, Theorem~\ref{Thm_number variance} for $\beta=4$ has not appeared in the literature before.

The number variance of the complex Ginibre ensemble at $\beta=2$ was computed in \cite{lacroix2019rotating} in these limits, and was argued to be universal for more general rotational invariant potentials \cite{lacroix2018extremes}, using a saddle point approximation.
In \cite{lacroix2019rotating} the limit (i) we call ``bulk'' in Theorem \ref{Thm_number variance} was called ``extended bulk''. It is a global quantity, as the disc of radius $a$ contains a macroscopic, finite fraction of the eigenvalues. The edge regime (ii) (with the same name as in \cite{lacroix2019rotating}) is a local quantity, as we zoom into the vicinity of the edge of the limiting support, normalised to the unit disc (as mentioned after Definition \ref{Qsuit}).  A third limit called ``deep bulk'' was investigated in \cite{lacroix2019rotating} for the complex Ginibre ensemble. It is again a local quantity, as the vicinity of the origin is zoomed into. Although the origin is not special in the complex Ginibre ensemble, the corresponding variance shows a nontrivial behaviour, interpolating between quadratic and the linear behaviour in $a$ in the (extended) bulk.
We will also investigate this local origin limit in the next Subsection \ref{Sub:origin}, allowing for more general classes of potentials with singular behaviour at the origin.
These will give rise to different universality classes.

As a consequence of Theorem~\ref{Thm_number variance}, we obtain the convergence of the random variable $\mathcal{N}_a$ in probability.

\begin{cor}\label{Cor_in prob conv}
Under the same assumptions of Theorem~\ref{Thm_number variance}, we have that for any $a \in (0,1]$,
\begin{equation}\label{in prob conv}
\frac{1}{N} \mathcal{N}_a \to \frac{a g'(a)}{\beta}
\end{equation}
as $N\to\infty$, in probability.
\end{cor}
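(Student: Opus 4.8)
\emph{Proof proposal.} The plan is to use the second moment method (Chebyshev's inequality), so that the statement reduces to two facts: convergence of the normalised mean $\tfrac1N E_N^{(\beta)}(a)$ to $\tfrac{ag'(a)}{\beta}$, and vanishing of the normalised variance $\tfrac1{N^2}V_N^{(\beta)}(a)$. Indeed, writing $X_N:=\tfrac1N\NN_a^{(\beta)}$ and $c:=\tfrac{ag'(a)}{\beta}$, once $\E X_N\to c$ one has for any $\ve>0$ and all large $N$ the inclusion $\{|X_N-c|>\ve\}\subseteq\{|X_N-\E X_N|>\ve/2\}$, whence $\Prob(|X_N-c|>\ve)\le \tfrac{4}{\ve^2}\,\mathrm{Var}(X_N)=\tfrac{4}{\ve^2 N^2}V_N^{(\beta)}(a)$ by Chebyshev.

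The variance estimate is immediate from Theorem~\ref{Thm_number variance}. For $a\in(0,1)$ the bulk asymptotics \eqref{VN bulk} give $V_N^{(\beta)}(a)=O(\sqrt N)$, and for $a=1$ the edge formula \eqref{VN edge} at $\SS=0$ gives $V_N^{(\beta)}(1)\sim \tfrac{2}{\sqrt\pi}f(0)\,\tfrac{\sqrt{N\Delta Q(1)}}{\beta}=O(\sqrt N)$ as well. In either case $\tfrac1{N^2}V_N^{(\beta)}(a)=O(N^{-3/2})\to0$, which disposes of the variance term.

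The substance is therefore the mean. Using Proposition~\ref{Prop_VN rep} I would write $\tfrac1N E_N^{(\beta)}(a)$ as a Riemann sum of the ratios $\gamma_m(a):=h_{m,1}(a)/h_m\in[0,1]$ and analyse each ratio by Laplace's method. The integrand defining $h_{m,1}$ and $h_m$ is $r^{2m+1}e^{-Ng_N(r)}=\exp\big(2m\log r-Ng_N(r)+\log r\big)$; with $m\sim sN$ its exponent is maximised, to leading order, at the radius $r(s)\in(0,1)$ determined by the saddle equation $r(s)\,g'(r(s))=2s$, which has a unique solution because $rg'(r)$ is strictly increasing by condition (3) of Definition~\ref{Qsuit}. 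The mass of $r^{2m+1}e^{-Ng_N(r)}$ concentrates at $r(s)$, so that for fixed $s$ with $r(s)\neq a$ one gets $\gamma_m(a)\to\mathbbm{1}[r(s)<a]=\mathbbm{1}[2s<ag'(a)]$, the last equality again by monotonicity of $rg'$. For $\beta=2$ the sum runs over $m=j$ with $s=j/N\to t$, and the bounded convergence theorem yields $\tfrac1N E_N^{(2)}(a)=\int_0^1\gamma_{\lfloor tN\rfloor}(a)\,dt\to\int_0^1\mathbbm{1}[2t<ag'(a)]\,dt=\tfrac{ag'(a)}{2}$; for $\beta=4$ the sum runs over the odd indices $m=2j+1$ with $s\approx 2t$, giving the threshold $4t<ag'(a)$ and hence the limit $\tfrac{ag'(a)}{4}$. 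Both equal $\tfrac{ag'(a)}{\beta}$, as claimed, and condition (4) ($\lim_{r\to0}rg'(r)=0$ and $g'(1)=\beta$) guarantees both that $r(s)$ is well defined down to $s=0$ and that the $a=1$ value is $1$. As a consistency check, $\tfrac{ag'(a)}{\beta}=\int_{D_a}\rho\,dA$ via \eqref{Frostman} and $\Delta Q=\tfrac14(g''+g'/r)$.

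The main obstacle is the Laplace asymptotics justifying $\gamma_m(a)\to\mathbbm{1}[r(s)<a]$ strongly enough to pass to the limit in the Riemann sum. The delicate range is the transition window of indices with $r(s)\approx a$, i.e. $2s\approx ag'(a)$, where $\gamma_m(a)$ is bounded away from $0$ and $1$; however, by the bound $0\le\gamma_m\le1$ this window contributes $o(N)$ terms to the sum (in fact $O(\sqrt N)$, matching the scale of the variance), so bounded convergence applies and the single threshold value of $t$ carries zero Lebesgue measure. Establishing the pointwise concentration cleanly requires the unimodality of $r\mapsto r^{2m+1}e^{-Ng_N(r)}$, which follows from $\Delta Q>0$ for $z\neq0$ (condition (3)), together with the integrability at $r\to\infty$ from condition (1) and at $r\to0$ from condition (4).
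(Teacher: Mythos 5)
Your proposal is correct, and its skeleton (Chebyshev's inequality plus the variance bound $V_N^{(\beta)}(a)=O(\sqrt N)$ from Theorem~\ref{Thm_number variance}) is exactly the paper's. Where you diverge is in the mean: the paper does not touch the moment ratios at all, but instead invokes the known convergence of the empirical measure to Frostman's equilibrium measure (citing \cite{HM13,MR2934715}), so that $\frac1N R_{N,1}^{(\beta)}(z)\sim\frac{2}{\beta}\Delta Q(z)\mathbbm{1}_S(z)$, and then integrates over the disc, using $4\int_0^a r\,\Delta Q(r)\,dr=\int_0^a (rg'(r))'\,dr=ag'(a)$ together with condition (4) of Definition~\ref{Qsuit} to kill the boundary term at $r=0$. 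You instead derive the mean asymptotics self-containedly from Proposition~\ref{Prop_VN rep}, showing $h_{m,1}(a)/h_m\to\mathbbm{1}[r(s)<a]$ by the same Laplace analysis (saddle at $r(s)g'(r(s))=2s$, monotonicity of $rg'$ from condition (3)) that the paper deploys in Section~\ref{sec:Var} to prove Theorem~\ref{Thm_number variance} itself, and then pass to the limit by bounded convergence, correctly noting that the $O(\sqrt N)$-sized transition window and the single threshold value of $t$ are negligible. Your route buys independence from the external potential-theoretic input and reuses machinery already needed for the main theorem, at the cost of redoing uniform Laplace estimates the paper gets for free by citation; the paper's route is shorter but leans on \cite{HM13,MR2934715}, which are stated for $N$-independent potentials, a point both treatments gloss over. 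One small improvement over the paper: you make explicit that for $a=1$ the bulk formula \eqref{VN bulk} does not apply and the variance bound comes from the edge formula \eqref{VN edge} at $\SS=0$, a case distinction the paper's one-line conclusion leaves implicit.
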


%%%%%%%%%%%%%%%%%%%%%%%%%%%%%%%%%%%%%%%%
\subsection{Number variance at the origin}\label{Sub:origin}

In this subsection we present  the asymptotic behaviour of $V_N^{(\beta)}(a)$ near the origin $a = 0$.  This is achieved by rescaling $a={\TT}/N^\delta$ with some power $\delta>1$.
It depends on the local nature of the ensemble at the origin, and we shall present two related examples for ensembles exhibiting a singular or vanishing behaviour.

\medskip

\begin{eg*}[Mittag-Leffler ensemble]
This ensemble owes its name to the appearance of the two-parameter Mittag-Leffler function in its limiting kernel \cite{ameur2018random}.  It is defined through the potential
\begin{equation} \label{g ML potential}
Q^{\textup{ML}}_N(z):=\frac{\beta}{2b}|z|^{2b}-\frac{2c}{N}\log|z|,  \qquad b>0, \quad c>-1,
\end{equation}
which is suitable according to Definition \ref{Qsuit}.
Such a model was studied in \cite{chau1998structure,charlier2021large} for $\beta=2$ and \cite{akemann2021scaling} for $\beta=4$.
In particular, if $b=1$, the model is known as induced Ginibre ensemble, cf.  \cite{MR2881072}. It reduces to the standard Ginibre ensembles when also setting $c=0$.
The macroscopic density of the Mittag-Leffler ensemble is given by
\begin{equation} \label{ML macro density}
\frac{2}{\beta}\Delta Q^{\textup{ML}}(z) \cdot \mathbbm{1}_{\{ |z|<1 \} }= b |z|^{2b-2} \cdot \mathbbm{1}_{\{ |z|<1 \} }.
\end{equation}
Thus for $b <1(>1)$, the density reveals a singular (vanishing) behaviour at the origin on a macroscopic scale.
Furthermore, the insertion of a point charge $c$ at the origin in \eqref{g ML potential} gives rise to a conical type singularity on a local scale.

When $a$ is away from the origin, the number variance  follows as a specific case of Theorem~\ref{Thm_number variance}.
For $\beta=2$ this result was obtained by Charlier in \cite[Corollary 1.4]{charlier2022asymptotics} in an expansion including higher order correction terms. (See \cite{fenzl2022precise} for an earlier work on the complex Ginibre ensemble.) Indeed, \cite{charlier2022asymptotics} contains precise large-$N$ expansions of all higher cumulants of $\mathcal{N}_a$ for the complex Mittag-Leffler ensemble.
We also refer to \cite{byun2022characteristic,CL22} for a generalisation involving circular-root and merging type singularities.

Recall that the (regularised) incomplete gamma functions $P(a,z)$ and $Q(a,z)$ are given by
\begin{equation}
P(\alpha,z)=\frac{\gamma(\alpha,z)}{\Gamma(\alpha)}, \qquad Q(\alpha,z)=1-P(\alpha,z), \mbox{ with}\quad \gamma(a,z)=\int_0^z t^{a-1}e^{-t}dt,
\end{equation}
see e.g.\ \cite[Chapter 8]{olver2010nist}.

\begin{prop} \label{Prop_number variance ML ensemble}
\textup{\textbf{(Expected number and variance of the Mittag-Leffler ensemble at the origin)}}
Let $Q_N=Q_N^{\textup{ML}}$. After rescaling $a=\TT/N^{\frac{1}{2b}}$ with  $\TT>0$ fixed, we have
\begin{align}
\label{EN ML micro}
E^{(\beta)}( \TT)&=\lim_{N \to \infty} E_N^{(\beta)}\Big( \frac{\TT}{N^{ \frac{1}{2b} }} \Big)=  \sum_{j=1}^{\infty} P\Big( \frac{\beta j+2c}{2b}, \frac{\beta}{2b}\TT^{2b}  \Big) ,\\
 \label{VN ML micro}
V^{(\beta)}(\TT)&=\lim_{N \to \infty} V_N^{(\beta)}\Big( \frac{\TT}{N^{ \frac{1}{2b} }} \Big)=\sum_{j=1}^{\infty} P\Big( \frac{\beta j+2c}{2b}, \frac{\beta}{2b}\TT^{2b}  \Big) Q\Big( \frac{\beta j+2c}{2b}, \frac{\beta}{2b} \TT^{2b} \Big).
\end{align}
\end{prop}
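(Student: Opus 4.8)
The plan is to feed the Mittag-Leffler weight into the exact finite-$N$ formulas of Proposition~\ref{Prop_VN rep} and to evaluate every moment in closed form. For \eqref{g ML potential} the weight factorises as $e^{-N g_N(r)} = r^{2c}\,e^{-\frac{N\beta}{2b}r^{2b}}$, so the integrals in \eqref{ortho norm} and \eqref{ortho norm trunc} are all of the form $\int r^{2j+1+2c}\,e^{-\alpha_N r^{2b}}\,dr$ with $\alpha_N := \frac{N\beta}{2b}$. The substitution $t = \alpha_N r^{2b}$ turns each one into a (complete or incomplete) Gamma integral, and I expect
\begin{equation}
h_j = \frac{1}{b}\,\alpha_N^{-\frac{j+1+c}{b}}\,\Gamma\Big(\tfrac{j+1+c}{b}\Big),\qquad
h_{j,1}(a) = \frac{1}{b}\,\alpha_N^{-\frac{j+1+c}{b}}\,\gamma\Big(\tfrac{j+1+c}{b},\,\alpha_N a^{2b}\Big),
\end{equation}
with $h_{j,2}(a) = h_j - h_{j,1}(a)$; all first arguments are positive since $c>-1$. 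The common prefactor $\frac{1}{b}\,\alpha_N^{-(j+1+c)/b}$ cancels in every ratio appearing in \eqref{EN 2 4 rep}--\eqref{VN 2 4 rep}, leaving exactly
\begin{equation}
\frac{h_{j,1}(a)}{h_j} = P\Big(\tfrac{j+1+c}{b},\,\alpha_N a^{2b}\Big),\qquad
\frac{h_{j,2}(a)}{h_j} = Q\Big(\tfrac{j+1+c}{b},\,\alpha_N a^{2b}\Big).
\end{equation}

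The key point is that the scaling $a = \TT/N^{1/(2b)}$ is engineered so that $\alpha_N a^{2b} = \frac{\beta}{2b}\TT^{2b}$ is \emph{exactly independent of $N$}. Hence each summand of $E_N^{(\beta)}$ and $V_N^{(\beta)}$ already equals, at finite $N$, the corresponding summand of the claimed limit; the only surviving $N$-dependence is the upper summation bound. Relabelling the summation index by $k=j+1$ (which for $\beta=2$ sends $\frac{j+1+c}{b}$ to $\frac{k+c}{b}=\frac{\beta k+2c}{2b}$, and for $\beta=4$ sends $\frac{2j+2+c}{b}$ to $\frac{2k+c}{b}=\frac{\beta k+2c}{2b}$), the representations \eqref{EN 2 4 rep} and \eqref{VN 2 4 rep} become
\begin{equation}
E_N^{(\beta)}\Big(\tfrac{\TT}{N^{1/(2b)}}\Big) = \sum_{k=1}^{N} P\Big(\tfrac{\beta k+2c}{2b},\,\tfrac{\beta}{2b}\TT^{2b}\Big),\qquad
V_N^{(\beta)}\Big(\tfrac{\TT}{N^{1/(2b)}}\Big) = \sum_{k=1}^{N} (PQ)\Big(\tfrac{\beta k+2c}{2b},\,\tfrac{\beta}{2b}\TT^{2b}\Big),
\end{equation}
where $PQ$ denotes the product of $P$ and $Q$ at the same arguments. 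Passing to $N\to\infty$ is then merely taking the limit of the partial sums of a series whose terms do not depend on $N$.

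The only genuine point to verify is convergence of these series, and I expect this to be the one (minor) obstacle. For fixed $x>0$ the estimate $\gamma(\alpha,x)\sim x^\alpha e^{-x}/\alpha$ as $\alpha\to\infty$ gives $P(\alpha,x)\sim x^\alpha e^{-x}/\Gamma(\alpha+1)$, which by Stirling decays super-exponentially in $\alpha$. Since the first argument $\frac{\beta k+2c}{2b}$ grows linearly in $k$, the terms $P(\frac{\beta k+2c}{2b},\frac{\beta}{2b}\TT^{2b})$ form a summable sequence; as $0\le Q\le 1$, the variance terms are dominated by the same sequence. Therefore both partial sums converge to the stated infinite sums \eqref{EN ML micro} and \eqref{VN ML micro}, and no interchange-of-limits subtlety arises because the summands are $N$-independent from the outset. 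The heart of the argument is thus the elementary Gamma-function evaluation of the (truncated) moments together with this standard tail bound on $P(\alpha,x)$.
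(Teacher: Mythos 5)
Your proposal is correct and follows essentially the same route as the paper's proof: both feed the Mittag-Leffler weight into Proposition~\ref{Prop_VN rep}, evaluate $h_j$, $h_{j,1}(a)$, $h_{j,2}(a)$ as (incomplete) Gamma integrals so that the ratios become $P$ and $Q$, and observe that the rescaling $a=\TT/N^{1/(2b)}$ makes the second argument exactly $N$-independent, reducing the limit to convergence of the partial sums. Your tail estimate $P(\alpha,x)\sim x^{\alpha}e^{-x}/\Gamma(\alpha+1)$, giving super-exponential decay in the linearly growing first argument, is the same large-first-argument asymptotic the paper invokes to justify the limiting series.
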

When setting $b=1$ and $c=0$, we obtain the result for the Ginibre ensembles, reproducing the findings of \cite{lacroix2019rotating} for $\beta=2$ in what was called the deep bulk there. As it was also discussed there, let us look at the small argument limit $\TT\to0$ of \eqref{EN ML micro} and \eqref{VN ML micro}. It is not difficult to see that using the expansion \cite[Eq.(8.7.1)]{olver2010nist}
\begin{equation}
P(\alpha,z) = \frac{1}{\Gamma(\alpha)}\sum_{k=0}^\infty \frac{(-1)^k z^{k+\alpha}}{k!(\alpha+k)},
\end{equation}
together with $Q(\alpha,z)=1-P(\alpha,z)$, we obtain for both quantities for small $\TT$
\begin{equation}\label{EVMLsmallT}
E^{(\beta)}( \TT), V^{(\beta)}( \TT)\sim\frac{(\frac{\beta}{2b})^{(\beta+2c)/(2b)}}{\Gamma( \frac{\beta+2c}{2b}+1)} \TT^{\beta+2c}.
\end{equation}
The leading order contribution comes from the summand at $j=1$ and in fact holds already at finite-$N$. The fact that mean and variance agree indicates an underlying Poisson distribution - if we could show this for all cumulants, as it was done for the Ginibre ensemble at $\beta=2$ in \cite{lacroix2019rotating}.
The large argument limit  $\TT$ should match the linear behaviour found in Theorem \ref{Thm_number variance} in the bulk limit as $Q^{\rm ML}$ satisfies the suitability conditions.
\end{eg*}

\medskip

\begin{eg*}[Product ensemble] When considering the product of $m \in \mathbb{N}_+$ complex or quaternionic Ginibre matrices, the complex eigenvalues of the product matrix form a determinantal respectively Pfaffian point process , see \cite{MR2993423} for $\beta=2$ and \cite{MR3066113} for $\beta=4$. The corresponding
 potential is given by a Meijer $G$-function
\begin{equation} \label{Q products}
Q_N^{(m)}(z) := -\frac{1}{N} \log \MeijerG{m , 0}{0,  m}{ -  \\  \bfs{0}}{ \Big(\frac{\beta N}{2}\Big)^m |z|^2 },
\end{equation}
see  e.g.\ \cite[Chapter 16]{olver2010nist} and Appendix \ref{Appendix_Meijer G} for its definition. Here $\bfs{0}=(0,\ldots,0)$ denotes a row vector of length $m$.
This potential satisfies the suitability conditions in Definition \ref{Qsuit}.
For instance, if $m=1$, it follows from
\begin{equation} \label{MeijG 1001}
	\MeijerG {1,0} {0 ,1} { - \\ 0 } {x^2}= e^{-x^2}
\end{equation}
that $Q_N^{(m=1)}(z)=Q^{\textup{Gin}}(z)$. More general products have been studied, e.g.\ rectangular or truncated unitary random matrices, see \cite{Akemann_2015} for a review.

\begin{figure}[ht]
	\begin{subfigure}{0.31\textwidth}
		\begin{center}
			\includegraphics[width=\textwidth]{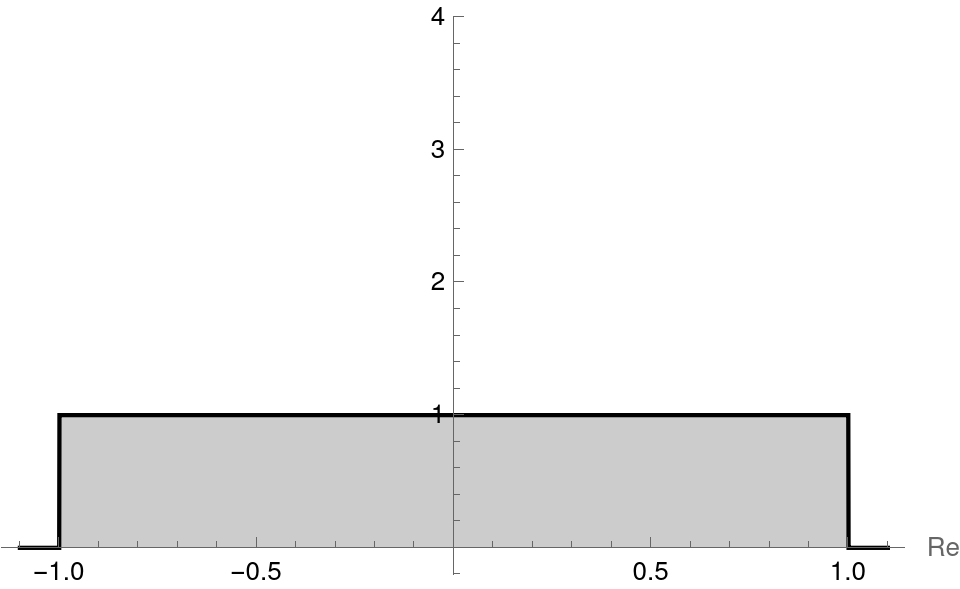}
		\end{center}
		\subcaption{$m=1$}
	\end{subfigure}
	\begin{subfigure}[h]{0.31\textwidth}
		\begin{center}
			\includegraphics[width=\textwidth]{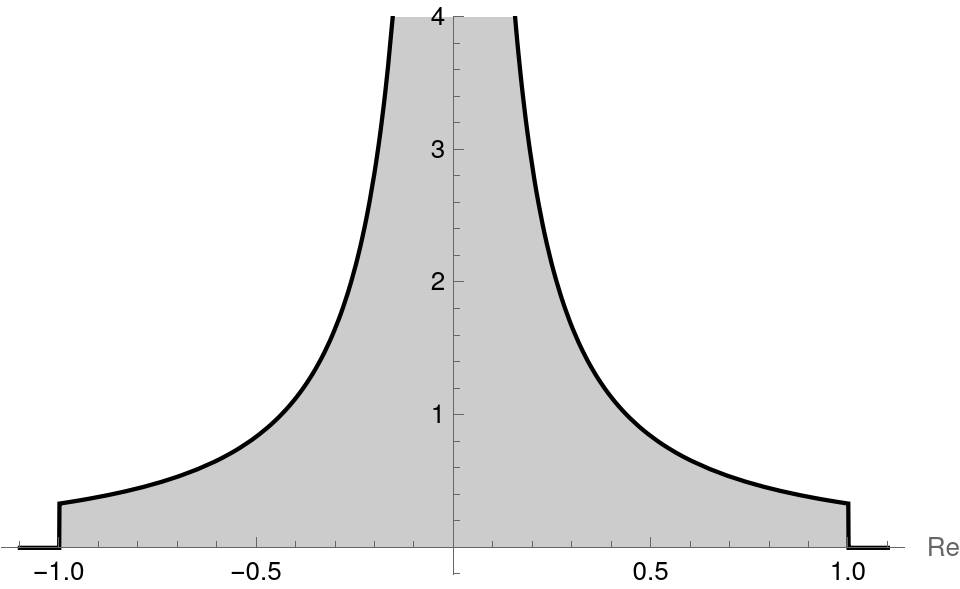}
		\end{center} \subcaption{$m=3$}
	\end{subfigure}
		\begin{subfigure}{0.31\textwidth}
		\begin{center}
			\includegraphics[width=\textwidth]{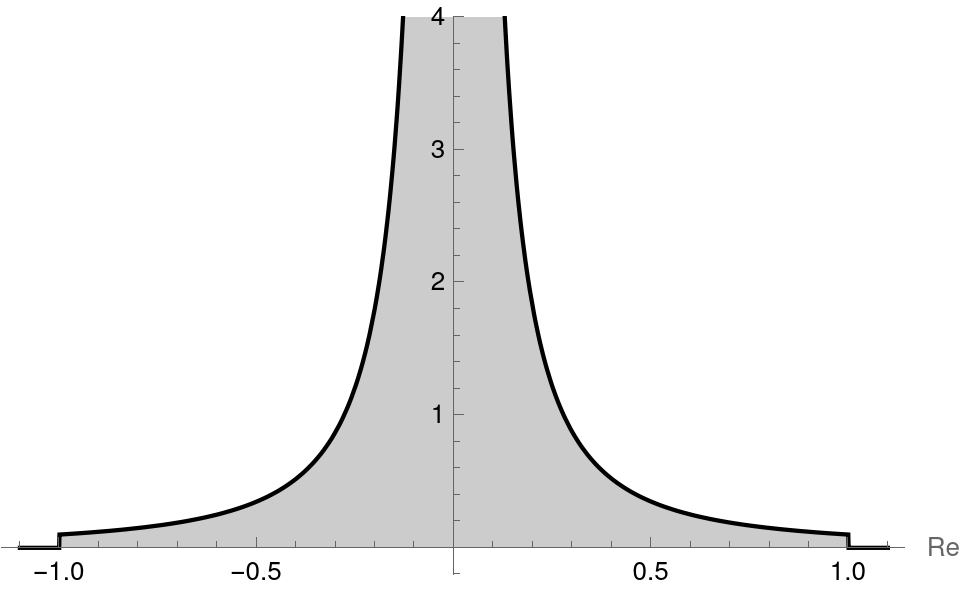}
		\end{center}
		\subcaption{$m=10$}
	\end{subfigure}

	\begin{subfigure}{0.31\textwidth}
		\begin{center}
			\includegraphics[width=\textwidth]{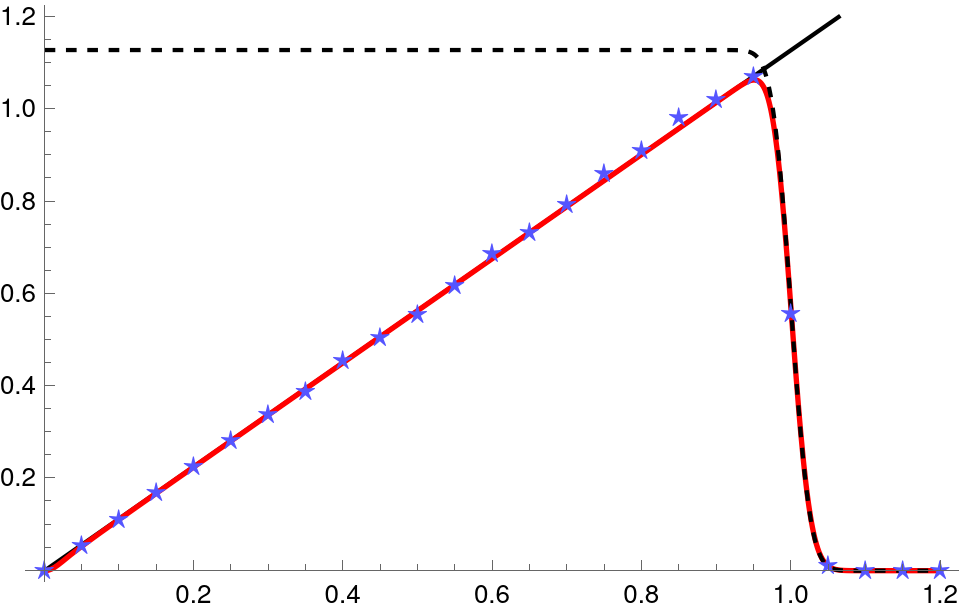}
		\end{center}
		\subcaption{$m=1$}
	\end{subfigure}
	\begin{subfigure}[h]{0.31\textwidth}
		\begin{center}
			\includegraphics[width=\textwidth]{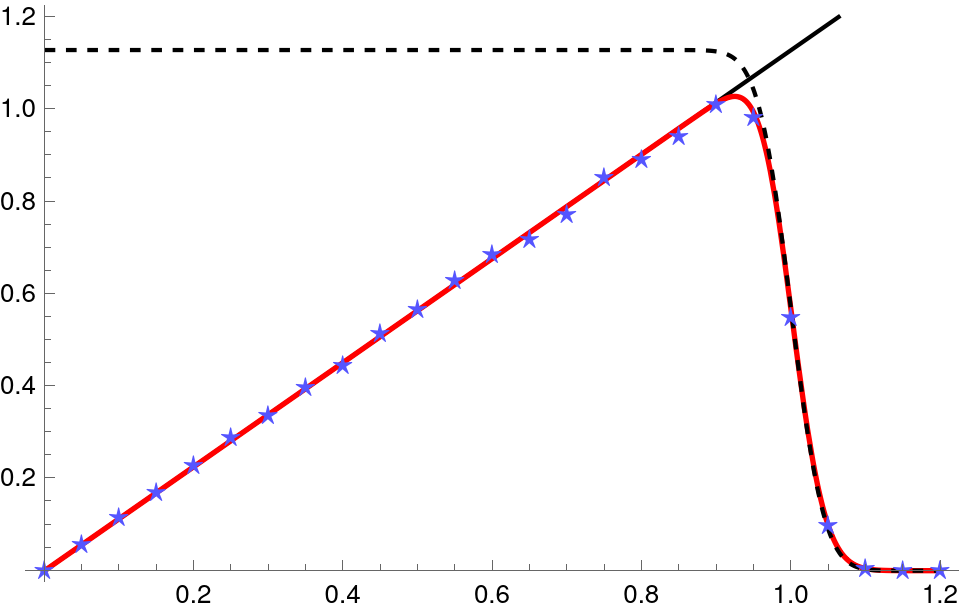}
		\end{center} \subcaption{$m=3$}
	\end{subfigure}
		\begin{subfigure}{0.31\textwidth}
		\begin{center}
			\includegraphics[width=\textwidth]{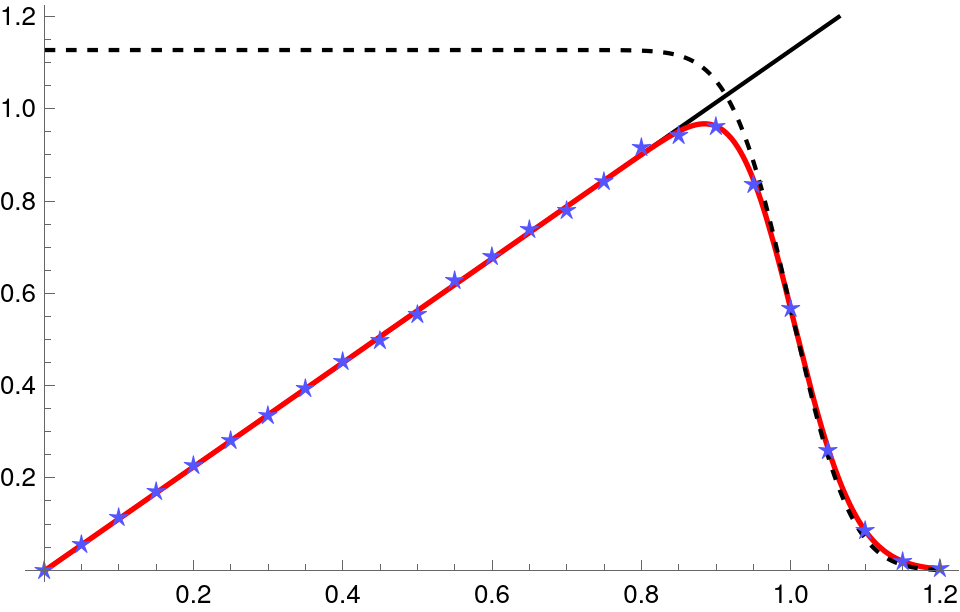}
		\end{center}
		\subcaption{$m=10$}
	\end{subfigure}

    	\begin{subfigure}{0.31\textwidth}
		\begin{center}
			\includegraphics[width=\textwidth]{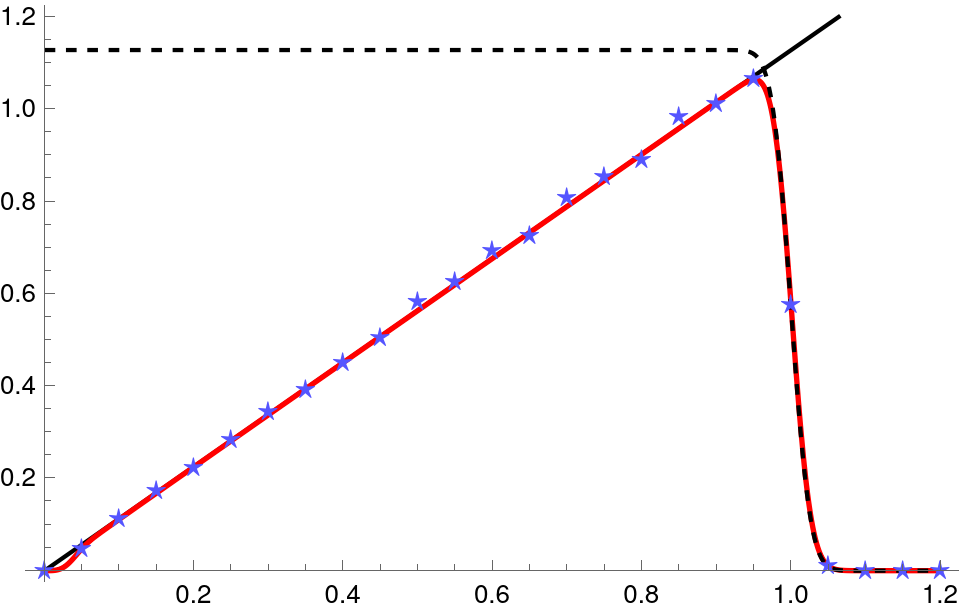}
		\end{center}
		\subcaption{$m=1$}
	\end{subfigure}
	\begin{subfigure}[h]{0.31\textwidth}
		\begin{center}
			\includegraphics[width=\textwidth]{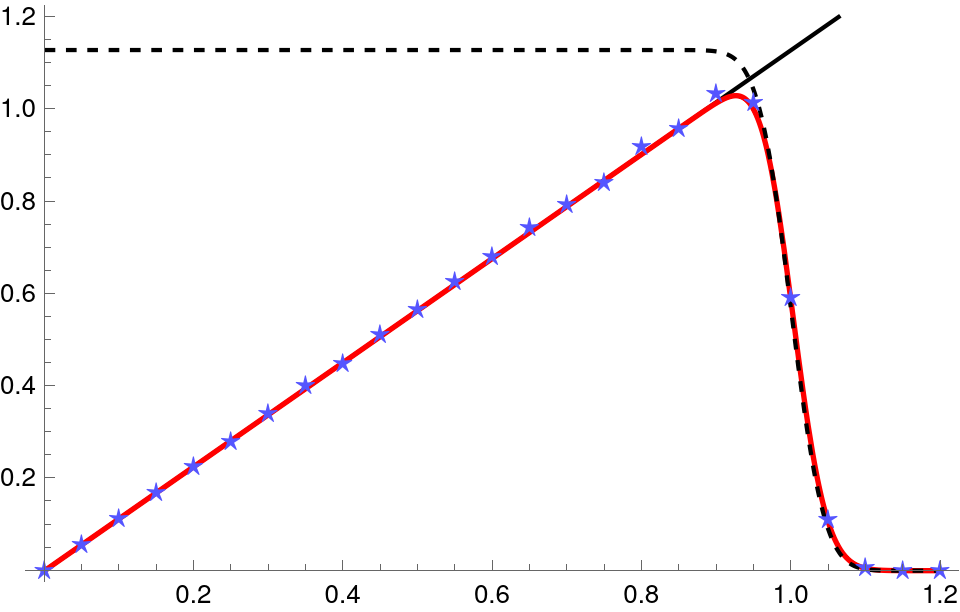}
		\end{center} \subcaption{$m=3$}
	\end{subfigure}
		\begin{subfigure}{0.31\textwidth}
		\begin{center}
			\includegraphics[width=\textwidth]{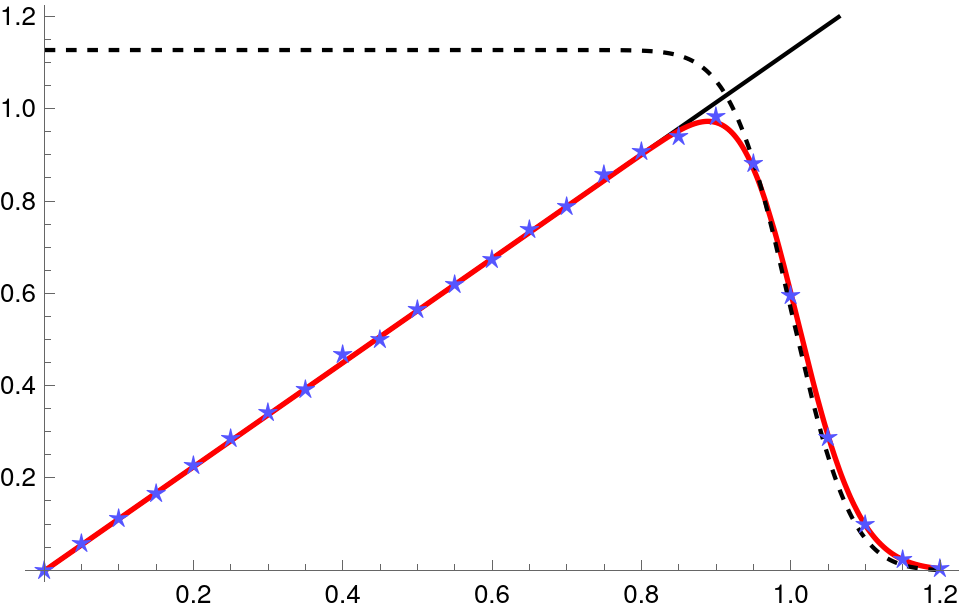}
		\end{center}
		\subcaption{$m=10$}
	\end{subfigure}
	\caption{
	The density (top line) and variance ($\beta=2$ middle and $\beta=4$ bottom line)  in the bulk and at the edge is shown for the product of $m=1,3$ and $10$ Ginibre matrices (left, middle, right column).
Figures (A)--(C) display the corresponding density \eqref{products macro density} which is singular at the origin for $m>1$. In figures (D)--(F) the variance $\frac{\beta }{ \sqrt{N \Delta Q(a)} }  V_N^{(\beta)}(a)$ is shown as a function of radius $a$ for the corresponding ensemble at $\beta=2$ with $N = 500$ (red line), and the comparison with the analytic results in Theorem~\ref{Thm_number variance}. Here, the full line corresponds to the linear prediction (i) in the bulk \eqref{VN bulk}, whereas the dotted line corresponds to the prediction (ii) for the edge \eqref{VN edge}.
The blue stars show the variance obtained via 10000 random samples. Figures (G)--(I) display the same figures at $\beta=4$, with $N = 250$.
	} \label{Fig_Vm products}
\end{figure}

The macroscopic density follows the ``Halloween hat'' law\footnote{This name was proposed by V.L. Girko.}
\begin{equation}\label{products macro density}
\frac{2}{\beta}\Delta Q^{(m)}(z) \cdot \mathbbm{1}_{\{ |z|<1 \} }= \frac{1}{m} \, |z|^{ \frac{2}{m}-2 } \cdot \mathbbm{1}_{\{ |z|<1 \} },
\end{equation}
as shown in \cite{2010PhRvE..81d1132B,2010arXiv1012.2710G,o2011products}, see Fig.\ \ref{Fig_Vm products} top line.
Note that as $N\to\infty$, we have the following large argument asymptotic \cite{fields1972asymptotic} of \eqref{Q products}: as $N\to\infty,$
\begin{equation}
Q_N^{(m)}(z) \sim \frac{\beta m}{2}|z|^{ \frac{2}{m} } - \frac{\frac{1}{m} - 1}{N} \log \abs{z},
\end{equation}
Thus asymptotically the potential \eqref{Q products} corresponds to \eqref{g ML potential} with $m=1/b$ and
$c = (1-m)/2m$.
One can also see the relation $m=1/b$ by comparing \eqref{ML macro density} and \eqref{products macro density}.

\begin{prop} \label{Prop_number variance products}
\textup{\textbf{(Expected number and variance of the product of $m$ Ginibre matrices at the origin)}}
Let $Q_N=Q_N^{(m)}$.
When rescaling $a=\TT/N^{m/2}$ with $\TT >0$, we have
\begin{align}
E^{(\beta)}( \TT)&=\lim_{N \to \infty} E_N^{(\beta)}\Big( \frac{\TT}{N^{ \frac{m}{2} }} \Big)= \sum_{j=1}^{\infty} \frac{1}{( \frac{\beta}{2}j - 1)!^{m}} \MeijerG{m , 1}{1,  m + 1}{ 1  \\  \frac{\beta }{2}\bfs{j }, 0}{\Big(\frac{\beta}{2}\Big)^m\TT^2} , \label{EN product bulk micro}
\\
\begin{split}
V^{(\beta)}( \TT) &=\lim_{N \to \infty} V_N^{(\beta)}\Big( \frac{\TT}{N^{ \frac{m}{2} }} \Big)
\\
&= \sum_{j=1}^{\infty} \frac{1}{( \frac{\beta}{2}j - 1)!^{2 m}} \MeijerG{m , 1}{1,  m + 1}{ 1  \\  \frac{\beta }{2}\bfs{j }, 0}{\Big(\frac{\beta}{2}\Big)^m\TT^2} \MeijerG{m+1 , 0}{1, m+1}{1 \\ 0, \frac{\beta}{2} \bfs{j }}{ \Big( \frac{\beta}{2} \Big)^m\TT^2 }.
\end{split}
\label{VN product bulk micro}
\end{align}
\end{prop}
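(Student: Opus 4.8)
The plan is to reduce the statement to the finite-$N$ identities of Proposition~\ref{Prop_VN rep} and to an explicit evaluation of the moments $h_k$, $h_{k,1}$, $h_{k,2}$ for the product weight, which I would carry out with the Mellin--Barnes representation of the Meijer $G$-function. By \eqref{Q products} the weight is $e^{-NQ_N^{(m)}(z)}=\MeijerG*{m,0}{0,m}{- \\ \bfs{0}}{(\tfrac{\beta N}{2})^m|z|^2}$, and I would use the Mellin transform $\int_0^\infty x^{s-1}\MeijerG*{m,0}{0,m}{- \\ \bfs{0}}{x}\,dx=\Gamma(s)^m$ (Appendix~\ref{Appendix_Meijer G}). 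Substituting $w=(\tfrac{\beta N}{2})^m r^2$ in \eqref{ortho norm} then yields $h_k=(\tfrac{2}{\beta N})^{m(k+1)}(k!)^m$, and the same substitution in \eqref{ortho norm trunc} turns the truncated norms into incomplete Mellin integrals with cut-off $w=(\tfrac{\beta N}{2})^m a^2$.

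The decisive point is the scaling: with $a=\TT/N^{m/2}$ the cut-off equals the $N$-independent value $u:=(\tfrac{\beta}{2})^m\TT^2$. Consequently the prefactors $(\tfrac{2}{\beta N})^{m(k+1)}$ cancel in every ratio in \eqref{EN 2 4 rep}--\eqref{VN 2 4 rep}, and each summand is \emph{exactly} independent of $N$: writing $G$ for the weight, $h_{k,1}(a)/h_k=(k!)^{-m}\int_0^u w^k G\,dw$ and $h_{k,2}(a)/h_k=(k!)^{-m}\int_u^\infty w^k G\,dw$. Thus no large-$N$ asymptotics of the moments is needed — the limit merely extends a finite sum to an infinite one, and the content lies in evaluating the two incomplete integrals.

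To identify them as Meijer $G$-functions I would insert the Barnes integral and integrate $w^{k-s}$ termwise. For $\int_0^u$ this produces the factor $\tfrac{1}{k+1-s}=\tfrac{\Gamma(k+1-s)}{\Gamma(k+2-s)}$, adding one upper and one lower parameter, while for $\int_u^\infty$ (convergent only for $\Re s>k+1$) a contour shift produces the factor $\tfrac{1}{s-k-1}$ with its pole at $s=k+1$. Pulling out the resulting prefactor $u^{k+1}$ by the homogeneity rule — multiplication by $z^\sigma$ shifts every parameter of a $G^{m,n}_{p,q}$ by $\sigma$ — with $\sigma=k+1$ gives $\int_0^u w^k G\,dw=\MeijerG*{m,1}{1,m+1}{1 \\ (k+1)\bfs{1},0}{u}$ and $\int_u^\infty w^k G\,dw=\MeijerG*{m+1,0}{1,m+1}{1 \\ 0,(k+1)\bfs{1}}{u}$, with the relation $\int_0^u+\int_u^\infty=(k!)^m$ as a consistency check. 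The reindexing $k=\tfrac{\beta}{2}j-1$ finally unifies both symmetry classes, running through $k=0,1,2,\dots$ for $\beta=2$ and through the odd $k=1,3,5,\dots$ for $\beta=4$, exactly as demanded by \eqref{EN 2 4 rep}--\eqref{VN 2 4 rep}; since $k+1=\tfrac{\beta}{2}j$ and $(k!)^m=((\tfrac{\beta}{2}j-1)!)^m$, the summands reproduce \eqref{EN product bulk micro}--\eqref{VN product bulk micro}.

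It remains to exchange limit and summation, which is painless here because each term is $N$-independent: I only need absolute convergence of the series, and the crude estimate $\int_0^u w^k G\,dw\le u^k\int_0^u G\,dw$ together with the super-exponential decay of $(k!)^{-m}$ provides a convergent dominating series for both mean and variance. I expect the main obstacle to be purely technical: justifying the Meijer $G$ manipulations — Fubini for interchanging the Barnes contour with the radial integral, and the contour shift required to represent $\int_u^\infty$ — and keeping the parameter bookkeeping consistent with the conventions of Appendix~\ref{Appendix_Meijer G}. These are standard for the entire, rapidly decaying weight $G^{m,0}_{0,m}$, and the complementary relation above supplies an independent check on the final parameter lists.
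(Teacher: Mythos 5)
Your proposal is correct and follows essentially the same route as the paper: reduce to Proposition~\ref{Prop_VN rep}, evaluate $h_k$, $h_{k,1}$, $h_{k,2}$ exactly in terms of Meijer $G$-functions, and observe that under $a=\TT/N^{m/2}$ every summand is exactly $N$-independent so the limit only extends the sum. The only (harmless) deviations are that you obtain $\int_u^\infty w^k G\,dw$ directly by a Barnes contour shift where the paper invokes the complementary identity of Proposition~\ref{prop:Meijer G identity} (with the table integral \cite[20.5.(1)]{BatemanV2} for the lower piece), and that you make the limit--sum interchange explicit via a dominating series, a point the paper leaves implicit.
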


Once again we denote by $\bfs{ j} = (j,\ldots,j)$ a row vector of lengths $m$.
Note that for $m=1$ we have
\begin{equation} \label{Meijer G PQ}
\frac{1}{(j-1)!} \MeijerG{1 , 1}{1, 2}{1 \\ j , 0}{x}=P(j,x), \qquad \frac{1}{(j-1)!}  \MeijerG{2 , 0}{1, 2}{1 \\ 0, j}{x}=Q(j,x).
\end{equation}
Thus one can notice that then Eq.\ \eqref{VN product bulk micro} corresponds to Eq.\ \eqref{VN ML micro} with $b=1,c=0$.

As before, we look at the small argument limit $\TT\to0$ of \eqref{EN product bulk micro} and \eqref{VN product bulk micro}.
Using Proposition~\ref{prop:Meijer G asymptotics near zero} we have
\begin{equation}
    \MeijerG{m, 1}{1, m + 1}{ 1 \\ \frac{\beta}{2}\bfs{j}, 0 }{\Big(\frac{\beta}{2}\Big)^m \TT^2}
    \sim \frac{(-1)^{m - 1}}{j \, (m - 1)!} 2^{m - 1} \Big(\frac{\beta}{2}\Big)^{m\frac{\beta}{2}j - 1} (\log \TT)^{m - 1} \TT^{\beta j}.
\end{equation}
Combining this with the identity \eqref{Meijer G sum j!}, we obtain that for small $\TT$,
\begin{equation}\label{EVproductsmallT}
    E^{(\beta)}(\TT), V^{(\beta)}(\TT) \sim \frac{(-1)^{m - 1}}{(m - 1)!} 2^{m - 1} \Big(\frac{\beta}{2}\Big)^{m\frac{\beta}{2} - 1} (\log \TT)^{m - 1} \TT^{\beta},
\end{equation}
where the leading order contribution again comes from the summand at $j=1$.
In particular, one can readily see that the asymptotic formula \eqref{EVproductsmallT} for $m=1$ agrees with \eqref{EVMLsmallT} for $b=1,c=0$.

Fig.\ \ref{Fig_Vm products} illustrates Thm.\ \ref{Thm_number variance} for products of Ginibre matrices. Because we use only the bulk scaling in this figure for the entire range of $a$, the bulk prediction given by the straight line remains the same for all plots, whereas the edge curve \eqref{VN edge} changes slightly in this scaling.
Recall that $m = 1$ corresponds to the Ginibre ensembles. The quadratic or quartic behaviour close to $a=0$
can be barely seen on this scale, compare Fig.\ \ref{Fig_Vm products} (G).
\end{eg*}

%%%%%%%%%%%%%%%%%%%
\subsection{Truncated unitary ensembles}

In this subsection we present the  truncated unitary and truncated symplectic unitary ensembles, \cite{MR1748745,khoruzhenko2021truncations} respectively. The reason is that they possess a further large-$N$ limit called weak non-unitarity, which is not covered by Theorem \ref{Thm_number variance}. It will be introduced below.
For $\beta = 2$ the ensemble is given by the joint eigenvalue distribution of the top-left corner $N \times N$  sub-matrix of a random unitary $(N + c + 1) \times (N + c + 1)$ matrix, with $c\in\mathbb{N}_+$ integer.
In the $\beta = 4$ case we consider the complex eigenvalues of a $(2 N + c + 1) \times (2 N + c + 1)$ random matrix from the compact symplectic group truncated to size $2N\times 2N$. Note that here $c$ must be odd, see \cite[Section~2]{khoruzhenko2021truncations} for details.
These truncations lead to the following potential, in contrast to the flat Haar measure without truncation:
\begin{equation} \label{Q truncated unitary}
Q_N^{\textup{trunc,w}}(z) :=
\begin{cases}
-\dfrac{c}{N} \log (1-|z|^2) & \textup{if }|z| \le 1,
\smallskip
\\
\infty & \textup{otherwise},
\end{cases} \qquad c>-1.
\end{equation}
From now on we also allow for a real valued $c$.
If $c \to -1$, this model degenerates to the circular unitary ensembles with a flat measure, where all eigenvalues lie on the unit circle, see e.g.\ \cite{forrester2010log}.

Let us first consider the \emph{weak non-unitarity limit} where $c>-1$ remains fixed as $N \to \infty$.
Here, the vast majority of the eigenvalues live on the unit circle, where the global density localises.
The potential \eqref{Q truncated unitary} is not in the general class of suitable potentials we consider above, cf. Def.\ \ref{Qsuit}. Hence the bulk and edge behaviour differ from Theorem~\ref{Thm_number variance}.
Recall that the incomplete (regularised) beta function $I_x(\alpha,b)$ is given by
\begin{equation}
\label{Idef+id}
I_x(\alpha,b)=\frac{B(x;\alpha,b)  }{ B(\alpha,b) }
=1-I_{1-x}(b,\alpha), \qquad B(x;\alpha,b)=\int_0^x t^{\alpha-1}(1-t)^{b-1}\,dt,
\end{equation}
where $B(\alpha,b)=\frac{\Gamma(\alpha)\Gamma(b)}{\Gamma(\alpha+b)}$ is the beta function, see \cite[Chapter 8]{olver2010nist}. The identity follows from simply changing variables $t\to1-t$ in the integral.

\begin{prop} \label{Prop_number variance truncated unitary}\textup{\textbf{(Number variance of truncated ensembles at weak non-unitarity)}}
Let $Q_N=Q_N^{\textup{trunc,w}}$ with $c>-1$ fixed. Then we have the following.
\begin{enumerate}[label=(\roman*)]
    \item \textup{\textbf{(Bulk)}}  For a fixed $a \in (0,1)$, we have
    \begin{equation}  \label{VN truncated weak bulk}
    \lim_{N \to \infty} V_N^{(\beta)}(a)=\sum_{j=1}^\infty I_{a^2}\Big(\frac{\beta j}{2},c+1\Big) I_{1-a^2}\Big(c+1,\frac{\beta j}{2}\Big).
    \end{equation}
    \item \textup{\textbf{(Edge)}}
    For $\SS \ge 0$, we have
    \begin{equation} \label{VN truncated weak edge}
    \lim_{N\to \infty} \frac{ \beta }{ N\beta } V_N^{(\beta)}\Big(1-\frac{\SS}{N\beta}\Big)= \frac{1}{\SS} \int_0^{\SS} P(c+1,u)Q(c+1,u)\,du.
    \end{equation}
\end{enumerate}
\end{prop}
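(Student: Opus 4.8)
The plan is to insert the explicit potential \eqref{Q truncated unitary} into the exact finite-$N$ identity \eqref{VN 2 4 rep} of Proposition~\ref{Prop_VN rep}, recognise the summand as a product of incomplete beta functions, and then pass to the limit: the bulk assertion (i) is then almost immediate, while the edge assertion (ii) reduces to a Riemann-sum computation after a gamma-type rescaling of the beta parameters. Since $e^{-NQ_N^{\textup{trunc,w}}(z)}=(1-|z|^2)^c\,\mathbbm{1}_{\{|z|<1\}}$, the substitution $t=r^2$ in \eqref{ortho norm} and \eqref{ortho norm trunc} gives, for $0<a\le 1$, the closed forms
\[ h_j=B(j+1,c+1),\qquad h_{j,1}(a)=B(a^2;j+1,c+1), \]
so that $h_{j,1}(a)/h_j=I_{a^2}(j+1,c+1)$ and, by the identity in \eqref{Idef+id}, $h_{j,2}(a)/h_j=1-I_{a^2}(j+1,c+1)=I_{1-a^2}(c+1,j+1)$. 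Substituting into \eqref{VN 2 4 rep} and relabelling the summation index, both symmetry classes collapse to the single expression
\[ V_N^{(\beta)}(a)=\sum_{k=1}^{N} I_{a^2}\Big(\tfrac{\beta k}{2},c+1\Big)\,I_{1-a^2}\Big(c+1,\tfrac{\beta k}{2}\Big), \]
where the first beta parameter is $\tfrac{\beta k}{2}$ (equal to $k$ for $\beta=2$ and to $2k$ for $\beta=4$, after the shift $k=j+1$).

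For the bulk limit each summand is nonnegative and independent of $N$, so $V_N^{(\beta)}(a)$ increases with $N$ and it only remains to establish summability of the series. First I would estimate the integral $B(a^2;\alpha,c+1)$ by Laplace's method and combine it with Stirling's formula for $B(\alpha,c+1)$, which yields $I_{a^2}(\alpha,c+1)=O(\alpha^{c}a^{2\alpha})$ as $\alpha\to\infty$ for fixed $a<1$; with $\alpha=\tfrac{\beta k}{2}$ this is geometrically small in $k$, and since the companion factor is bounded by $1$ the series converges. Monotone convergence then gives \eqref{VN truncated weak bulk}.

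For the edge limit set $a=1-\tfrac{\SS}{N\beta}$, so that $1-a^2=\tfrac{2\SS}{N\beta}\big(1+O(1/N)\big)$ and hence $\xi_k:=\tfrac{\beta k}{2}(1-a^2)=\tfrac{k\SS}{N}\big(1+O(1/N)\big)$ uniformly in $1\le k\le N$. The key input is the classical fact that a $\mathrm{Beta}(\alpha,b)$ law, rescaled by $\alpha$ at its right endpoint, converges to a $\mathrm{Gamma}(b,1)$ law; in terms of the regularised functions this reads
\[ I_{1-\xi/\alpha}(\alpha,b)\longrightarrow Q(b,\xi),\qquad I_{\xi/\alpha}(b,\alpha)\longrightarrow P(b,\xi)\qquad(\alpha\to\infty). \]
Because the left-hand sides and the limits are (complementary) distribution functions of $\xi$, P\'olya's theorem upgrades this to uniform convergence in $\xi\ge 0$, say with a rate $\eta(\alpha)\to0$. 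Applying it at $\alpha=\tfrac{\beta k}{2}$ and $\xi=\xi_k$ (note $1-\xi_k/\alpha=a^2$ exactly) replaces each summand by $P(c+1,\xi_k)Q(c+1,\xi_k)$ up to $\eta(\tfrac{\beta k}{2})$. Writing $G(u):=P(c+1,u)Q(c+1,u)$, which is continuous, bounded, and vanishes at $u=0$, the rescaled variance becomes a Riemann sum,
\[ \frac{1}{N} V_N^{(\beta)}\Big(1-\tfrac{\SS}{N\beta}\Big)=\frac{1}{N}\sum_{k=1}^{N}G\Big(\tfrac{k\SS}{N}\Big)+o(1)\longrightarrow\frac{1}{\SS}\int_0^{\SS}G(u)\,du, \]
which is exactly \eqref{VN truncated weak edge} (recall $\tfrac{\beta}{N\beta}=\tfrac{1}{N}$).

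The main obstacle is to make this last step rigorous given that the number of terms grows with $N$ while each is only controlled asymptotically. I would split the sum at $k=\epsilon N$: the head $k<\epsilon N$ contributes at most $\tfrac{1}{N}\cdot\epsilon N=\epsilon$ to $\tfrac{1}{N} V_N^{(\beta)}$ because every summand lies in $[0,1]$, and it is matched by $\tfrac{1}{\SS}\int_0^{\epsilon\SS}G=O(\epsilon)$; on the main range $k\ge\epsilon N$ one has $\alpha=\tfrac{\beta k}{2}\ge\tfrac{\beta\epsilon N}{2}\to\infty$, so $\eta(\alpha)\to0$ uniformly and the Riemann sum over a near-uniform grid of $[\epsilon\SS,\SS]$ converges by uniform continuity of $G$; one then lets $\epsilon\to0$ after $N\to\infty$. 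The uniform incomplete-beta asymptotic invoked above can be extracted from \cite{olver2010nist} or proved directly from the integral representation by dominated convergence, and this estimate, together with the bookkeeping of the two small-error regimes, is the only genuinely technical point.
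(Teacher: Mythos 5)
Your proposal is correct, and its skeleton coincides with the paper's: both insert the explicit weight $e^{-NQ_N^{\textup{trunc,w}}}=(1-|z|^2)^c$ into Proposition~\ref{Prop_VN rep} to get $h_{j,1}(a)/h_j=I_{a^2}(j+1,c+1)$, take the bulk limit termwise, and treat the edge as a Riemann sum after a distributional limit of the incomplete beta ratio. The one genuine difference is the key lemma at the edge. The paper interprets $I_{a^2}(j+1,c+1)=\Prob(X\le c)$ with $X\sim B(j+c+1,1-a^2)$ binomial and invokes the binomial-to-Poisson limit along $j=tN$; strictly speaking this interpretation requires $c$ to be a nonnegative integer (the number of trials $j+c+1$ must be an integer), whereas the proposition allows real $c>-1$, so the paper's probabilistic shortcut is really a heuristic outside the integer case. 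Your route via the beta-to-gamma limit $I_{\xi/\alpha}(b,\alpha)\to P(b,\xi)$, $I_{1-\xi/\alpha}(\alpha,b)\to Q(b,\xi)$ works directly for all real $c>-1$ and is therefore the more general (and more honest) argument. You also supply two pieces of rigor the paper leaves implicit: the tail bound $I_{a^2}(\alpha,c+1)=O(\alpha^c a^{2\alpha})$ justifying the interchange of limit and sum in the bulk, and the splitting at $k=\epsilon N$ together with P\'olya-type uniformity to control the edge sum where the number of terms grows with $N$ while each summand is only asymptotically known. Your unified bookkeeping with the parameter $\tfrac{\beta k}{2}$ (after the shift $k=j+1$, respectively $2j+2$ for $\beta=4$) also handles both symmetry classes simultaneously, where the paper does $\beta=2$ and asserts that $\beta=4$ is analogous; all index manipulations and the scalings $1-a^2=\tfrac{2\SS}{N\beta}(1+O(1/N))$, $\xi_k\sim k\SS/N$ check out against \eqref{VN truncated weak bulk} and \eqref{VN truncated weak edge}.
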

Note that both in \eqref{VN truncated weak bulk} and \eqref{VN truncated weak edge}, the right hand sides vanish when $c \to -1.$
This is intuitively obvious since for $c=-1$ we recover the circular ensembles, where all eigenvalues are on the unit circle.

Because the density is zero at the origin, we omit the analysis there. We keep the terminology ``bulk'' limit for fixed $a$, although somewhat misleading here. This is because the spectrum is concentrated on the unit circle, and only of the order $O(1)$ eigenvalues are inside. This follows from
\begin{equation}
E^{(\beta)}(a)= \lim_{N \to \infty} E_N^{(\beta)}(a)=\sum_{j=1}^\infty I_{a^2}\Big(\frac{\beta j}{2},c+1\Big) , \qquad a \in (0,1),
\end{equation}
which is of order unity, as in the origin limit in the previous example.
This also explains why  the variance is not rescaled with a power of $N$.
It satisfies $E^{(\beta)}(a)\sim a^2/B(\beta/2,c+1)$ for small argument.
Thus from what we find in \eqref{VN truncated weak bulk}, the bulk limit at weak non-unitarity corresponds to an ``extended''  origin limit compared to the previous examples, with $a$ replaced by $\TT$.

The edge behaviour from Prop.\ \ref{Prop_number variance truncated unitary} is illustrated in Fig.\ \ref{Fig_Vm truncated weak}.

\begin{figure}[ht]
	\begin{subfigure}{0.31\textwidth}
		\begin{center}
			\includegraphics[width=\textwidth]{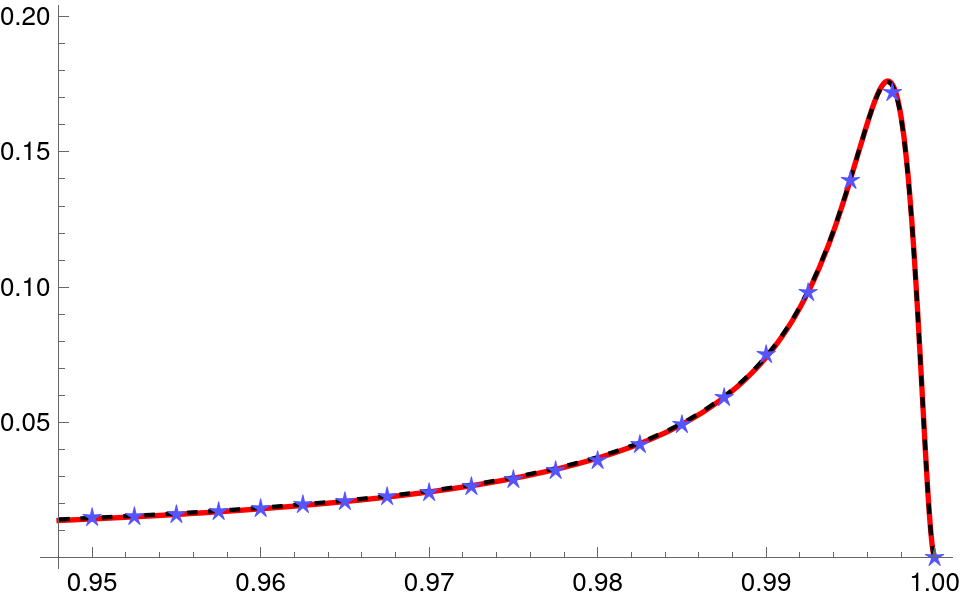}
		\end{center}
		\subcaption{$c=1$}
	\end{subfigure}
	\begin{subfigure}{0.31\textwidth}
		\begin{center}
			\includegraphics[width=\textwidth]{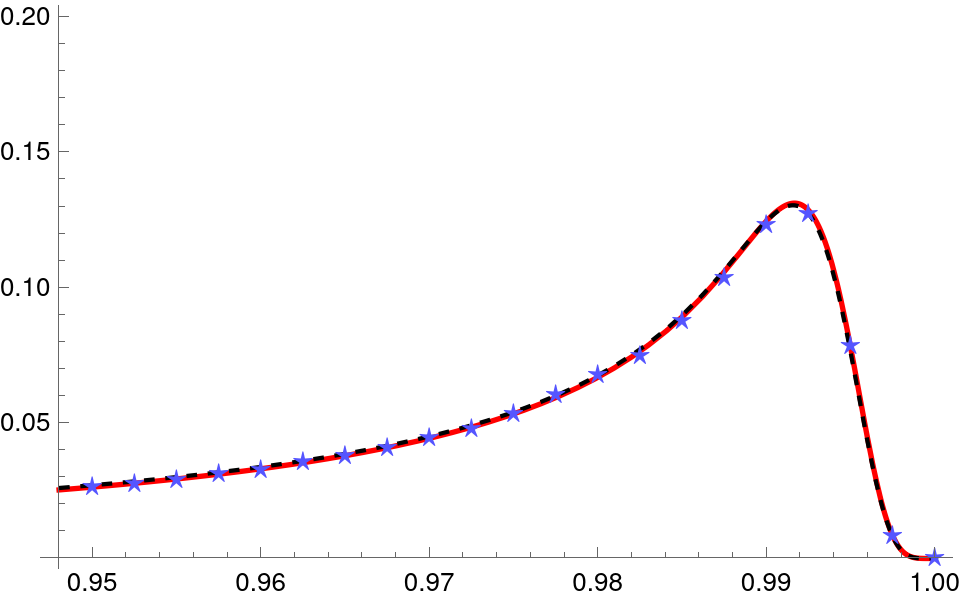}
		\end{center}
		\subcaption{$c=5$}
	\end{subfigure}
	\begin{subfigure}[h]{0.31\textwidth}
		\begin{center}
			\includegraphics[width=\textwidth]{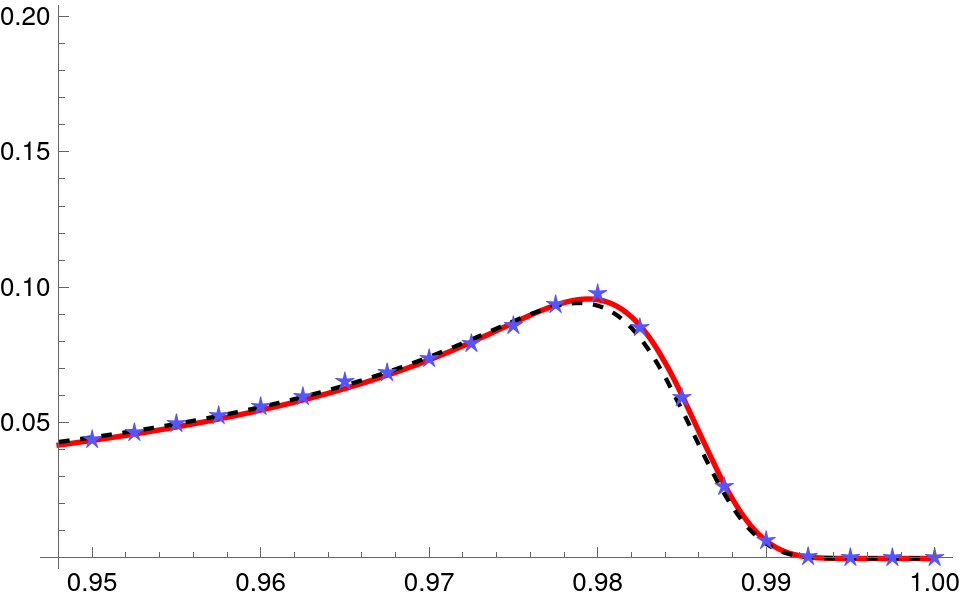}
		\end{center} \subcaption{$c=15$}
	\end{subfigure}

	\begin{subfigure}{0.31\textwidth}
		\begin{center}
			\includegraphics[width=\textwidth]{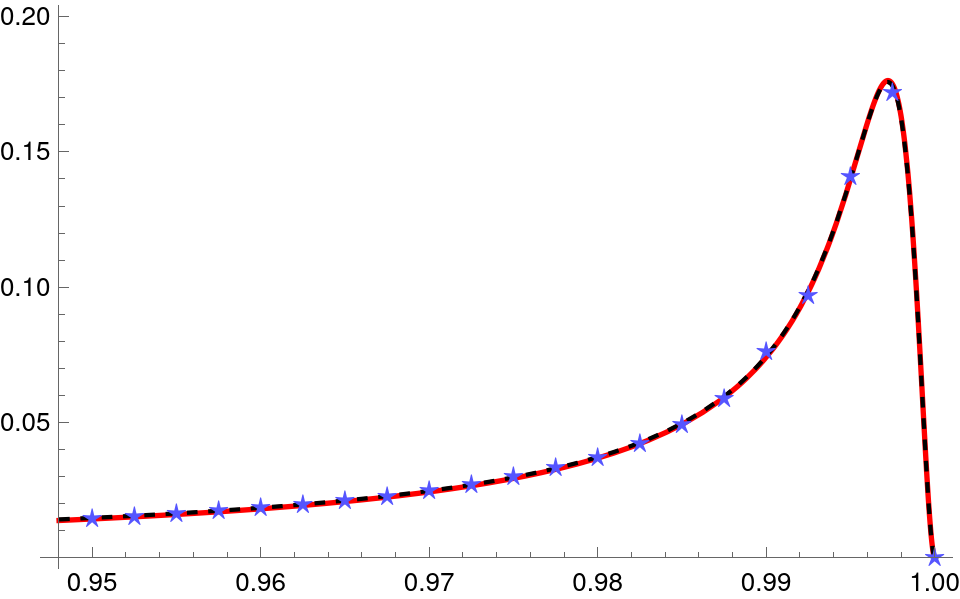}
		\end{center}
		\subcaption{$c=1$}
	\end{subfigure}
	\begin{subfigure}{0.31\textwidth}
		\begin{center}
			\includegraphics[width=\textwidth]{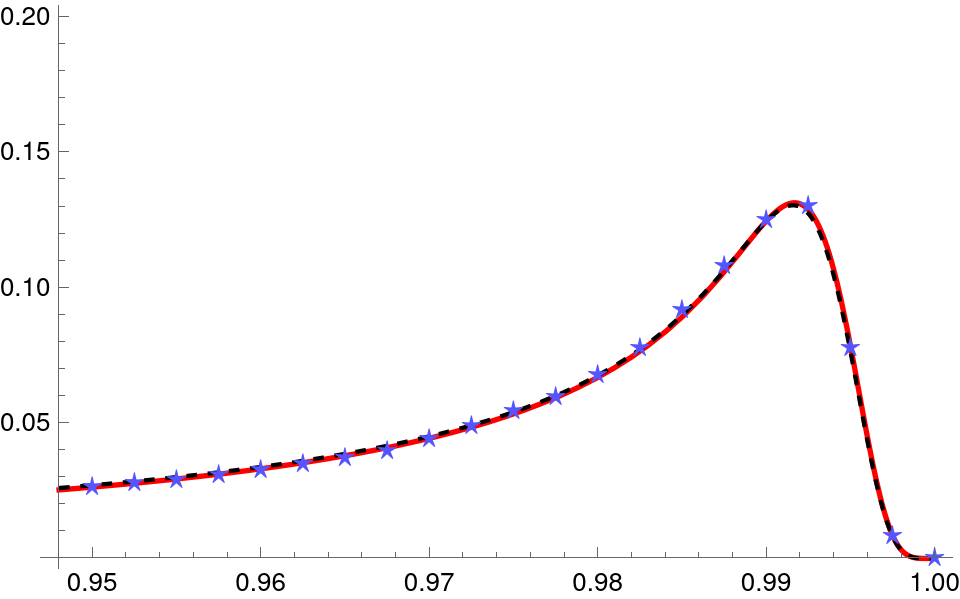}
		\end{center}
		\subcaption{$c=5$}
	\end{subfigure}
	\begin{subfigure}[h]{0.31\textwidth}
		\begin{center}
			\includegraphics[width=\textwidth]{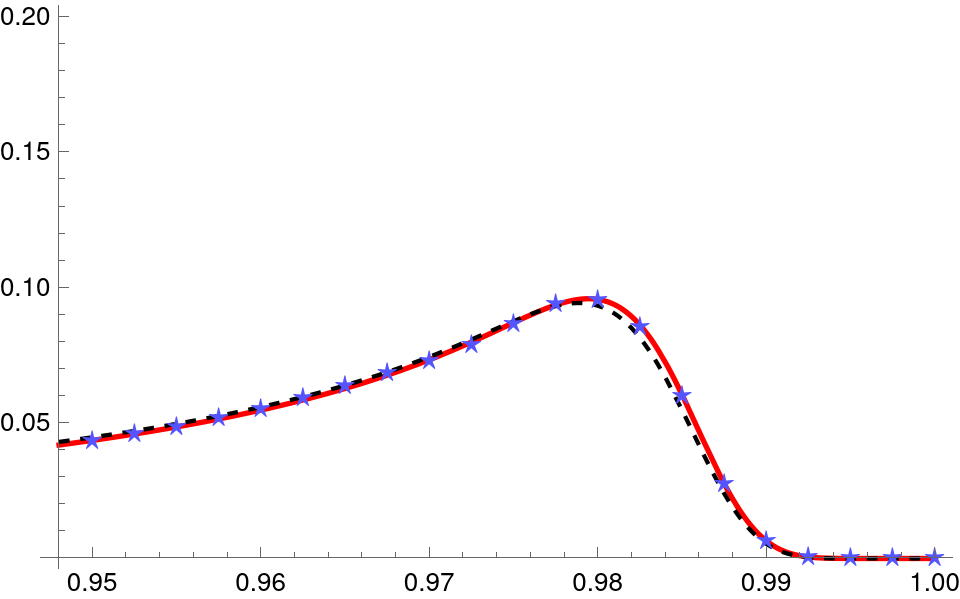}
		\end{center} \subcaption{$c=15$}
	\end{subfigure}

	\caption{ Truncated ensembles at weak non-unitarity (edge): Figures (A)--(C) in the top row display the graphs of $V_N^{(2)}(a)/N$ (red line) at $\beta=2$ and $c=1,5,15$, where $N = 500$. The comparisons with Proposition~\ref{Prop_number variance truncated unitary} (dashed black line) lie almost on top of each other. The blue stars show the variance obtained via 10000 random samples. Note that the chosen values for the radius are in the vicinity of the unit circle, $a \in [0.95, 1]$, i.e.\ here we only show the behaviour close to the edge. The figures in the bottom row (D)--(F) show the same quantities for $\beta=4$: $V_N^{(4)}(a)/N$, $N = 250$. Note that for large $c$ we begin to enter the regime of strong non-unitarity, compare Fig.\ \ref{Fig_Vm truncated strong}.
	} \label{Fig_Vm truncated weak}
\end{figure}

Let us now consider the limit at \textit{strong non-unitarity} when $c$ grows with $N$, keeping $\tilde{c}=\frac{2c}{\beta N}>0$ fixed. In the convention of \cite{MR1748745}, at $\beta=2$ this corresponds to parametrising $\mu=1/(1+\tilde{c})$ with fixed $\mu<1$ (not too close to 1). There it was shown, that the complex eigenvalues condense onto a droplet of radius $\sqrt{\mu}$.
The corresponding potential then reads
\begin{equation} \label{Q truncated unitary strong}
Q^{\textup{trunc,s}}(z):=
\begin{cases}
-\dfrac{ \beta\tilde{c} }{2} \, \log \Big( 1-\dfrac{|z|^2}{1+{ \tilde{c} } } \Big) &\textup{if } |z| \le \sqrt{1+\tilde{c}},
\smallskip
\\
\infty &\textup{otherwise},
\end{cases} \qquad \tilde{c}>0.
\end{equation}
This can be viewed as the potential \eqref{Q truncated unitary} with $c=\frac{\beta}{2}\tilde{c}N$ and fixed $\tilde{c}$, after the scaling $z \to z/\sqrt{1+\tilde{c} }$.
This rescaling is made so that the resulting droplet is the unit disc.
The potential $Q^{\textup{trunc,s}}$ is a one-parameter  generalisation of $Q^{\textup{Gin}}$, in the sense that $\lim\limits_{ \tilde{c} \to \infty } Q^{\textup{trunc,s}} = Q^{\textup{Gin}}.$
Note that the macroscopic density is given by
\begin{equation}
\label{rho strong non-unitary}
\frac{2}{\beta} \Delta Q^{\textup{trunc,s}}(z) \cdot \mathbbm{1}_{\{ |z|<1 \} } =  \frac{
\tilde{c}
(1+\tilde{c}) }{ (1+\tilde{c}-|z|^2)^2} \cdot \mathbbm{1}_{\{ |z|<1 \} }.
\end{equation}
That is unlike $Q_N^{\textup{trunc,w}}$, the potential $Q^{\textup{trunc,s}}$ is contained in the class of suitable potentials we consider above.
Hence Theorem~\ref{Thm_number variance} applies to the bulk and at the edge here. This is illustrated as another example in Fig.\ \ref{Fig_Vm truncated strong}.

\begin{figure}[ht]

\begin{subfigure}{0.31\textwidth}
		\begin{center}
			\includegraphics[width=\textwidth]{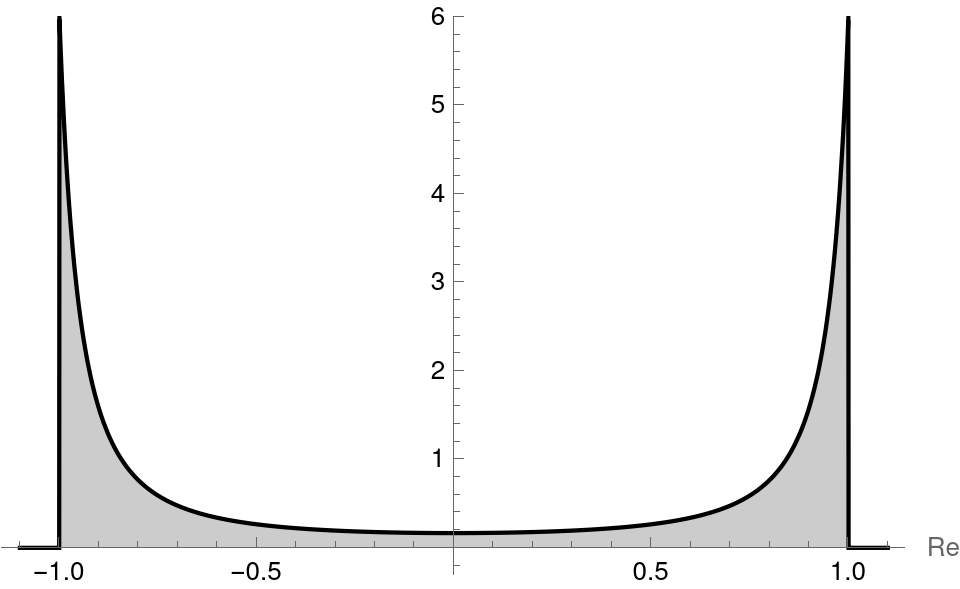}
		\end{center}
		\subcaption{$\tilde{c}=0.2$}
	\end{subfigure}
	\begin{subfigure}{0.31\textwidth}
		\begin{center}
			\includegraphics[width=\textwidth]{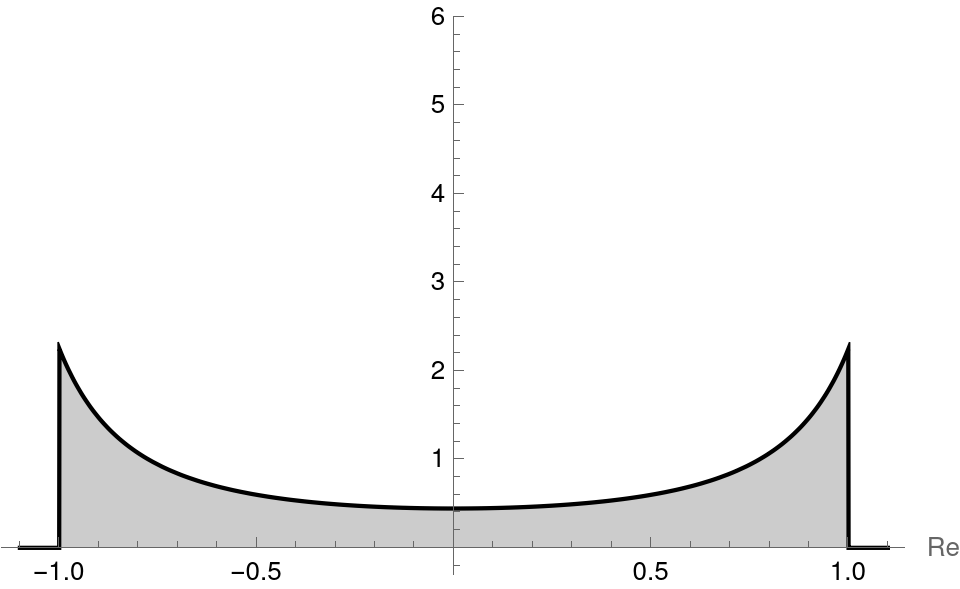}
		\end{center}
		\subcaption{$\tilde{c}=0.8$}
	\end{subfigure}
	\begin{subfigure}[h]{0.31\textwidth}
		\begin{center}
			\includegraphics[width=\textwidth]{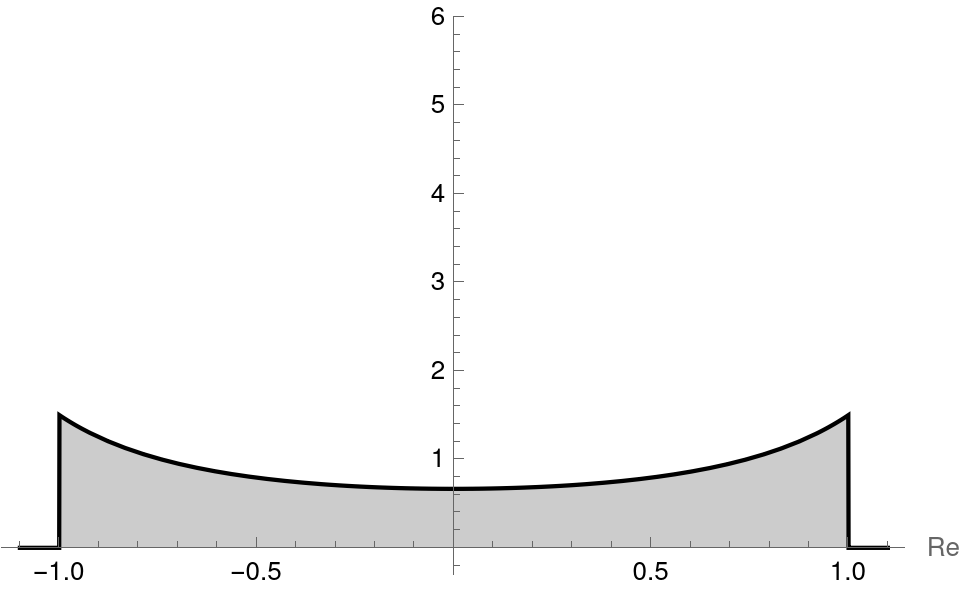}
		\end{center} \subcaption{$\tilde{c}=2.0$}
	\end{subfigure}

	\begin{subfigure}{0.31\textwidth}
		\begin{center}
			\includegraphics[width=\textwidth]{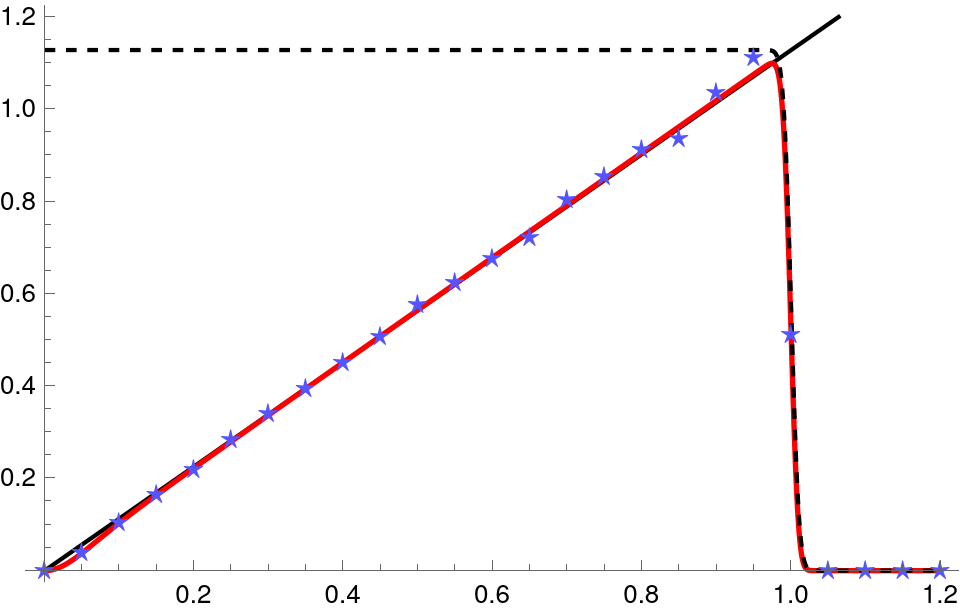}
		\end{center}
		\subcaption{$\tilde{c}=0.2$, ($c = 100$)}
	\end{subfigure}
	\begin{subfigure}{0.31\textwidth}
		\begin{center}
			\includegraphics[width=\textwidth]{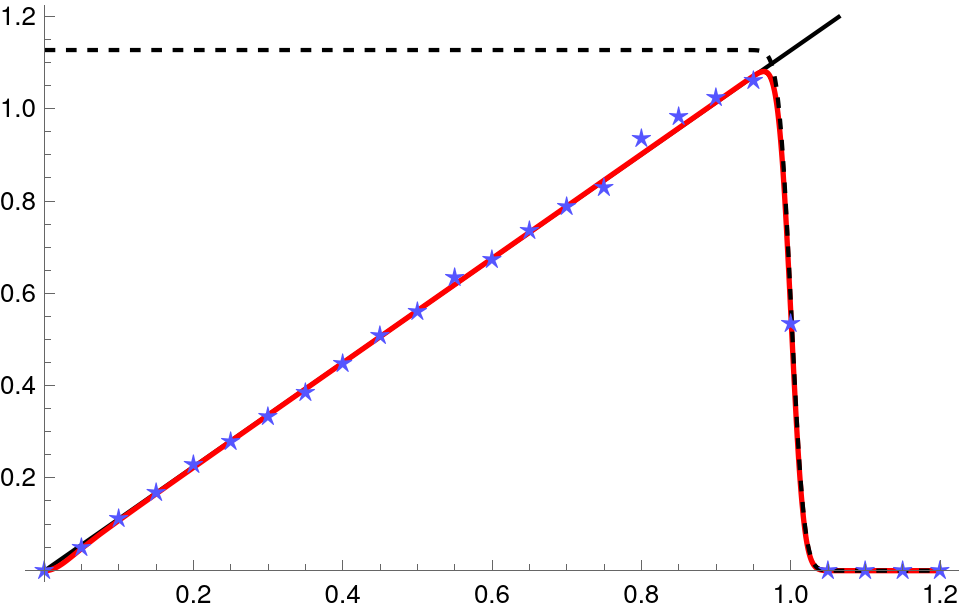}
		\end{center}
		\subcaption{$\tilde{c}=0.8$, ($c = 400$)}
	\end{subfigure}
	\begin{subfigure}[h]{0.31\textwidth}
		\begin{center}
			\includegraphics[width=\textwidth]{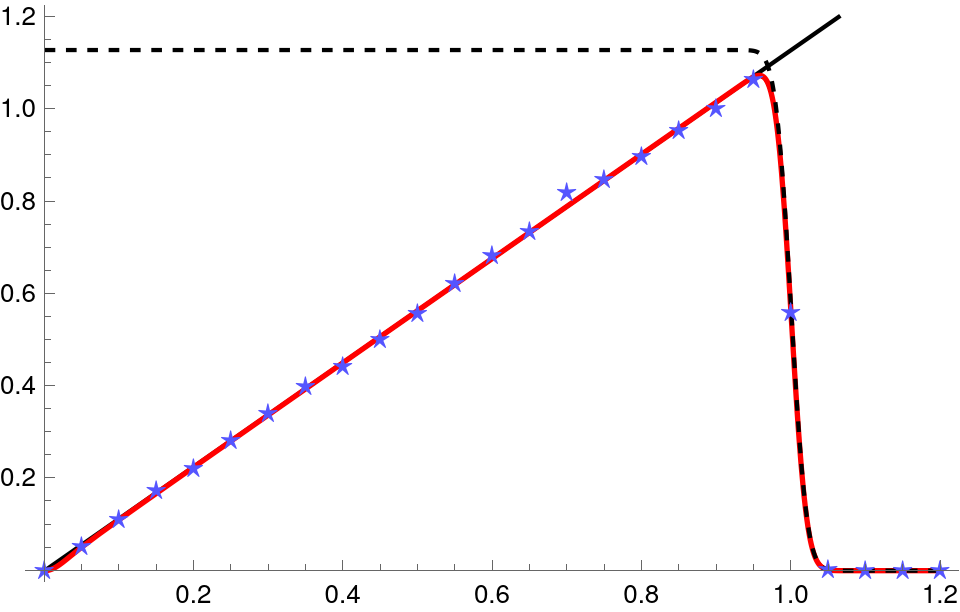}
		\end{center} \subcaption{$\tilde{c}=2.0$, ($c = 10000$)}
	\end{subfigure}

	\begin{subfigure}{0.31\textwidth}
		\begin{center}
			\includegraphics[width=\textwidth]{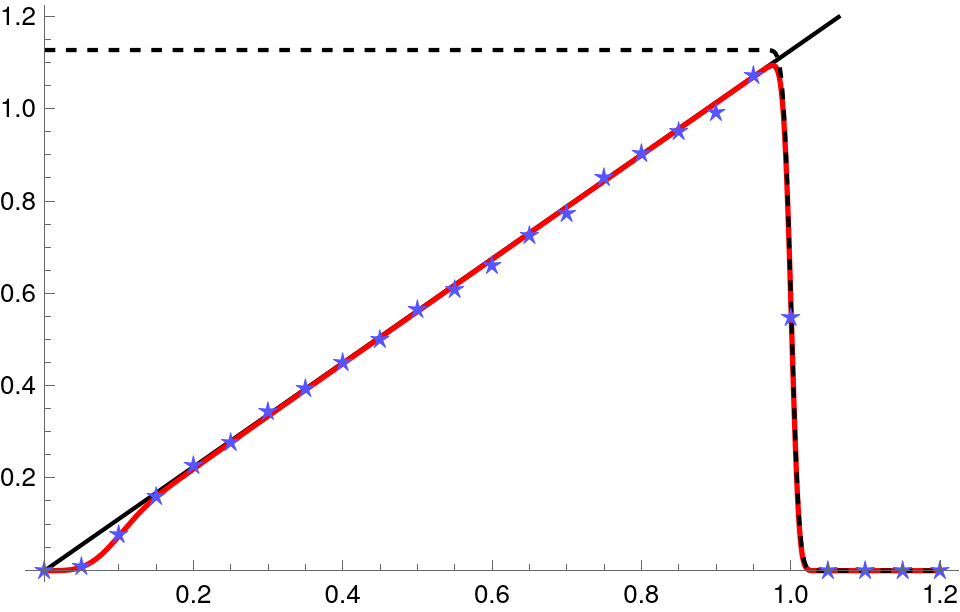}
		\end{center}
		\subcaption{$\tilde{c}=0.2$, ($c = 99$)}
	\end{subfigure}
	\begin{subfigure}{0.31\textwidth}
		\begin{center}
			\includegraphics[width=\textwidth]{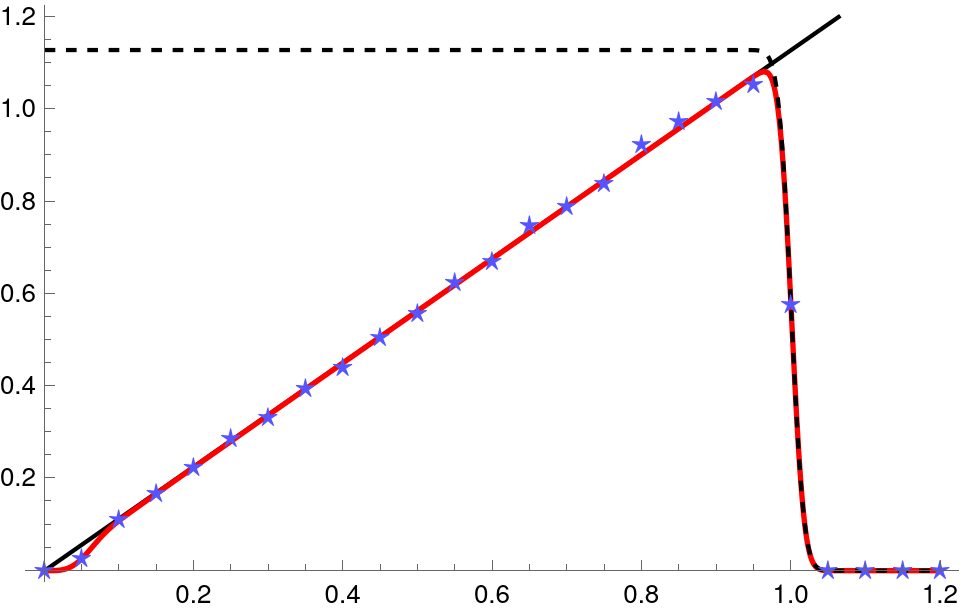}
		\end{center}
		\subcaption{$\tilde{c}=0.8$, ($c = 399$)}
	\end{subfigure}
	\begin{subfigure}[h]{0.31\textwidth}
		\begin{center}
			\includegraphics[width=\textwidth]{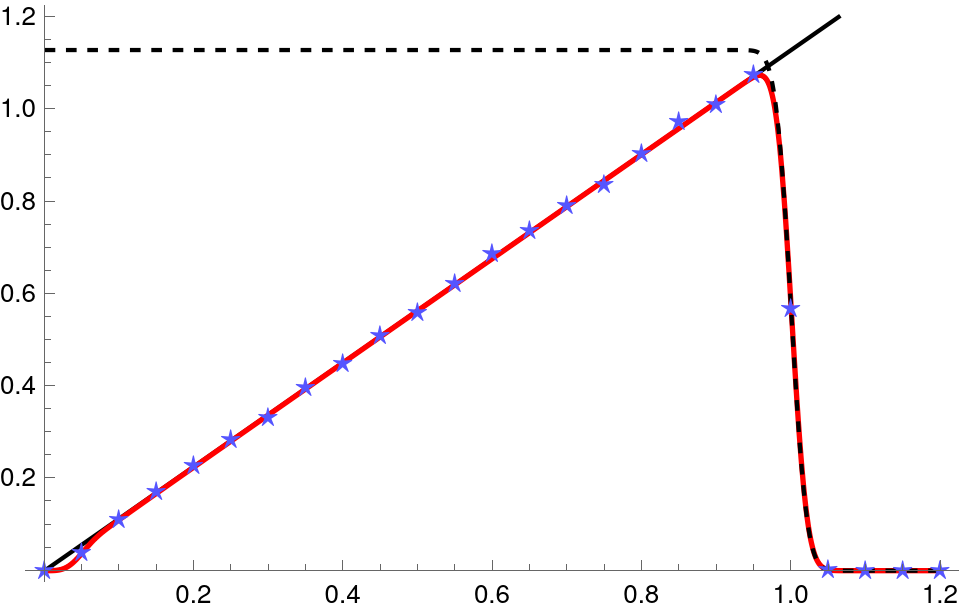}
		\end{center} \subcaption{$\tilde{c}=2.0$, ($c = 9999$)}
	\end{subfigure}

	\caption{
	Truncated ensembles at strong non-unitarity: The figures in the top row (A)--(C) show the limiting density \eqref{rho strong non-unitary}. For large values of $\tilde{c}$ we approach the flat density of the Ginibre ensemble, see Fig.\ \ref{Fig_Vm products} (A). The middle row (D)--(F)
	displays the graphs of $V_N^{(2)}(a)/\sqrt{N}$ (red line) where $N = 500$, and their comparisons with the bulk limit (full line) and edge limit (dashed line) from Theorem \ref{Thm_number variance}.  The blue stars show the variance obtained via 1000 random samples (eigenvalues rescaled by $\sqrt{1+\tilde{c}}$). The figures in the bottom row (G)--(I) show the same quantities for $V_N^{(4)}(a)/\sqrt{N}$, $N = 250$. Here, the deviation from the linear behaviour at the origin is more pronounced. Note that for $\tilde{c} \to \infty$ we recover the Ginibre case.
	} \label{Fig_Vm truncated strong}
\end{figure}

For comparison let us also consider the origin limit at strong non-unitarity.
\begin{prop} \label{Prop_number variance truncated unitary strong}\textup{\textbf{(Number variance of  truncated ensembles at strongly non-unitarity at the origin)}}
Let $Q=Q^{\textup{trunc,s}}$, $\tilde{c}>0$. When rescaling $a=\sqrt{ \frac{1+\tilde{c}}{ N\tilde{c} }} \TT$ with $\TT>0$ fixed, we have
\begin{align}
\lim_{N \to \infty} E_N\Big( \sqrt{ \frac{1+\tilde{c}}{ N\tilde{c} }}  \TT  \Big)&= \sum_{j=1}^\infty P\Big( \frac{\beta}{2} j, \frac{\beta}{2} \TT^2\Big), \label{EN trunc strong micro}
\\
\lim_{N \to \infty} V_N\Big( \sqrt{ \frac{1+\tilde{c}}{ N\tilde{c} }}  \TT  \Big)&= \sum_{j=1}^\infty P\Big( \frac{\beta}{2} j, \frac{\beta}{2} \TT^2\Big) Q\Big( \frac{\beta}{2} j, \frac{\beta}{2} \TT^2\Big).
\label{VN trunc strong micro}
\end{align}
\end{prop}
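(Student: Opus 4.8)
The plan is to reduce everything to the finite-$N$ formulas of Proposition \ref{Prop_VN rep} and then evaluate and take the limit of the resulting incomplete beta functions. First I would record that $Q^{\textup{trunc,s}}$ is $N$-independent, so that $e^{-NQ^{\textup{trunc,s}}(z)} = (1 - \frac{|z|^2}{1+\tilde{c}})^{N\beta\tilde{c}/2}$ on the disc $|z|\le\sqrt{1+\tilde{c}}$. Substituting $u = r^2/(1+\tilde{c})$ in the radial integrals \eqref{ortho norm}--\eqref{ortho norm trunc} turns both the full and the truncated moments into (incomplete) beta integrals,
\begin{equation}
h_j = (1+\tilde{c})^{j+1} B\Big(j+1, \tfrac{\beta\tilde{c}}{2}N + 1\Big), \qquad h_{j,1}(a) = (1+\tilde{c})^{j+1} B\Big(\tfrac{a^2}{1+\tilde{c}}; j+1, \tfrac{\beta\tilde{c}}{2}N + 1\Big),
\end{equation}
so that for the rescaled radius $a = \sqrt{(1+\tilde{c})/(N\tilde{c})}\,\TT$, which makes $a^2/(1+\tilde{c}) = \TT^2/(N\tilde{c})$, the ratios appearing in Proposition \ref{Prop_VN rep} become
\begin{equation}
\frac{h_{k,1}(a)}{h_k} = I_{\TT^2/(N\tilde{c})}\Big(k+1, \tfrac{\beta\tilde{c}}{2}N + 1\Big).
\end{equation}
This is exactly the finite-$N$ structure already used at weak non-unitarity in Proposition \ref{Prop_number variance truncated unitary}, now specialised to $c = \frac{\beta}{2}\tilde{c}N$. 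Writing $M = \frac{\beta\tilde{c}}{2}N$ and $\lambda = \frac{\beta}{2}\TT^2$, so that $\TT^2/(N\tilde{c}) = \lambda/M$, the mean and variance read as finite sums of $I_{\lambda/M}(\alpha, M+1)$ and $I_{\lambda/M}(\alpha, M+1)(1 - I_{\lambda/M}(\alpha, M+1))$, where $\alpha = k+1$ runs over $k = j$ (for $\beta=2$) or $k=2j+1$ (for $\beta=4$), and $1 - I_x = h_{k,2}/h_k$.

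The analytic core is the double-scaling limit $I_{\lambda/M}(\alpha, M+1)\to P(\alpha,\lambda)$ as $M\to\infty$ with $\alpha,\lambda$ fixed. I would prove this by substituting $t = s/M$ in the defining integral, which produces the factor $M^{-\alpha}\int_0^\lambda s^{\alpha-1}(1-s/M)^M\,ds$; bounded convergence on $[0,\lambda]$ together with $(1-s/M)^M\to e^{-s}$ turns the integral into $\gamma(\alpha,\lambda)$, while the normalisation contributes $\Gamma(\alpha+M+1)/(M^\alpha\Gamma(M+1))\to 1$, leaving $\gamma(\alpha,\lambda)/\Gamma(\alpha) = P(\alpha,\lambda)$. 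Applying this termwise, the summand at $\alpha = k+1$ converges to $P(\frac{\beta}{2}j, \frac{\beta}{2}\TT^2)$ after reindexing ($j+1\mapsto j$ for $\beta=2$, and $2j+2\mapsto 2j$ for $\beta=4$), and the variance summand to $P(1-P) = PQ$, which yields the two claimed series.

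The step that requires genuine care is the interchange of the $N\to\infty$ limit with the summation, since the number of terms grows with $N$. Here I would exploit that the summation index is tied to $M$: because $k\le N-1$ while $M = \frac{\beta\tilde{c}}{2}N$, one has $\alpha/M\le 1/\tilde{c}$ uniformly in $N$. Combining the elementary bounds $(1-s/M)^M\le e^{-s}$, $\gamma(\alpha,\lambda)\le\lambda^\alpha/\alpha$, and $\prod_{i=1}^\alpha(1+i/M)\le(1+1/\tilde{c})^\alpha$, I obtain the $N$-uniform domination
\begin{equation}
I_{\lambda/M}(\alpha, M+1)\le\frac{\big((1+1/\tilde{c})\lambda\big)^\alpha}{\Gamma(\alpha+1)},
\end{equation}
whose sum over $\alpha$ equals $\exp((1+1/\tilde{c})\lambda)<\infty$. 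Since the variance summand is bounded by $I_{\lambda/M}$ as well, dominated convergence justifies passing the limit inside both sums, completing the proof. As a consistency check one may note that the resulting series coincides with the Mittag-Leffler origin formula \eqref{VN ML micro} at $b=1$, $c=0$ (equivalently the Ginibre origin limit), reflecting the fact that after the rescaling the potential $Q^{\textup{trunc,s}}$ is Gaussian to leading order near the origin.
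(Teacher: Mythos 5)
Your proposal is correct and follows the same route as the paper: reduce to Proposition \ref{Prop_VN rep}, identify the ratios $h_{k,1}(a)/h_k$ as regularised incomplete beta functions $I_{\TT^2/(N\tilde{c})}(k+1,\frac{\beta}{2}\tilde{c}N+1)$, and pass to the double-scaling limit in which these become regularised incomplete gamma functions. The only methodological difference is how the limit $I_{\lambda/M}(\alpha,M+1)\to P(\alpha,\lambda)$ is obtained: the paper invokes the Poisson approximation of the binomial distribution (reading $I_x$ as a binomial tail probability, exactly as in its proof of Proposition \ref{Prop_number variance truncated unitary}), whereas you compute it directly from the beta integral via the substitution $t=s/M$, bounded convergence, and $\Gamma(\alpha+M+1)/(M^\alpha\,\Gamma(M+1))\to 1$; the two are equivalent, and your computation checks out. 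Where your write-up genuinely improves on the paper is the interchange of the $N\to\infty$ limit with the sum: the paper asserts the termwise limit ``for $j$ fixed'' and then sums without comment, while you supply an $N$-uniform domination. Your bound is valid: since $\alpha=k+1\le N$ (resp.\ $\alpha=2j+2\le 2N$ for $\beta=4$) and $M=\frac{\beta}{2}\tilde{c}N$, one has $\alpha/M\le 1/\tilde{c}$ uniformly, whence $\prod_{i=1}^{\alpha}(1+i/M)\le(1+1/\tilde{c})^{\alpha}$ and, using $(1-s/M)^M\le e^{-s}$,
\begin{equation}
I_{\lambda/M}(\alpha,M+1)\le \frac{\bigl((1+1/\tilde{c})\lambda\bigr)^{\alpha}}{\Gamma(\alpha+1)},
\end{equation}
which is summable in $\alpha$; since the variance summand $I(1-I)\le I$ inherits the same bound, Tannery's theorem (dominated convergence for series) legitimises the termwise passage for both \eqref{EN trunc strong micro} and \eqref{VN trunc strong micro}, and your reindexing ($j+1\mapsto j$ for $\beta=2$, $2j+2\mapsto 2j$ for $\beta=4$, with $\lambda=\frac{\beta}{2}\TT^2$) reproduces the stated series exactly. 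In short: same decomposition and same analytic core as the paper, executed with a rigor upgrade on the one step the paper leaves implicit.
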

Note that the right-hand side of \eqref{EN trunc strong micro}, \eqref{VN trunc strong micro} is same as that of \eqref{EN ML micro}, \eqref{VN ML micro} when choosing $b=1$, $c=0$ there, as well as \eqref{EN product bulk micro}, \eqref{VN product bulk micro} with $m=1$. In that sense also the origin limit of the Ginibre ensembles is universal.\\

The remainder of this paper is organised as follows. In the following Section \ref{Var-finite-N} we recall the determinantal and Pfaffian structure of our planar complex and symplectic ensembles. We then prove  Proposition~\ref{Prop_VN rep} giving the expected number and variance for general rotationally symmetric potentials at finite-$N$. In Section \ref{sec:Gin} the results for the Ginibre ensembles are discussed in preparation of the general case. This includes both finite- and large-$N$ formulae for the expected number and variance for the complex and symplectic ensemble. In Section \ref{sec:Var} we prove our main Theorem~\ref{Thm_number variance} about the bulk and edge limit of the variance for general rotationally invariant potentials. The proofs for the three examples for a singular behaviour at the origin, given in Propositions \ref{Prop_number variance ML ensemble} - \ref{Prop_number variance truncated unitary strong} are presented here as well. Appendix \ref{Appendix_Meijer G} contains the derivation of a general identity for Meijer G-functions, that is needed in a special case in the proof of Proposition \ref{Prop_number variance products}, and the asymptotic at the origin.

%\newpage

%%%%%%%%%%%%%%%%%%%%%%%%%%%%%%%%%%%%%%%%%%%%%%%%%%%%%%%%%%
\section{Analysis of the expected number and number variance at finite-\texorpdfstring{$N$}{N}}\label{Var-finite-N}

In this section, we give a brief exposition of the integrable structures of the ensembles we consider and derive Proposition~\ref{Prop_VN rep}.

\subsection{Correlation kernels}

By definition, the $k$-point function $R_{N,k}^{(\beta)}$ is given by
\begin{equation}\label{RNk beta}
R_{N,k}^{(\beta)}(z_1,\dots,z_k):= \frac{N!}{(N-k)!} \int_{ \C^{n-k} }    \P_N^{(\beta)} (z_1,\dots,z_N) \prod_{j=k+1}^{N} dA(z_j).
\end{equation}
When $\beta=2$ the point process \eqref{Gibbs cplx} is determinantal and
the $k$-point correlation function $R_{N,k}^{(2)}$ has the form
\begin{equation} \label{RNk det}
R_{N,k}^{(2)}(z_1,\dots,z_k)=\det \Big[ K_N(z_j,z_l) \Big]_{j,l=1}^k,
\end{equation}
where $K_N(z,w)$ is the correlation kernel of planar orthogonal polynomials.
Examples for the 1- and 2-point function are given in \eqref{ExampleR1R2beta2}.
For the rotationally invariant potentials $Q_N(z)=g_N(|z|)$ we consider here, these orthogonal polynomials are simply given by the monomials,
and \eqref{RNk det} reads \cite{Mehta}
\begin{align}
\label{KN ONP}
   K_N(z,w) & =e^{-\frac{N}{2} (Q_N(z) + Q_N(w))  } \sum_{j=0}^{N-1} \frac{(z\bar{w})^j}{h_j},
   \end{align}
where $h_j$ is given by \eqref{ortho norm}, the squared norms of these polynomials.

\begin{eg*}[Complex Ginibre ensemble] Let us consider the potential $Q^{\rm Gin}(z)=|z|^2$.
Then the correlation kernel $K_N$ of the Ginibre ensemble is given by \cite{Mehta}
\begin{equation} \label{KN Ginibre}
\begin{split}
K_N(z,w)&=N\,e^{- \frac{N}{2}(|z|^2+|w|^2) } \sum_{j=0}^{N-1} \frac{ (Nz \bar{w})^j }{ j! }.
\end{split}
\end{equation}
We refer to \cite[Section 2]{ameur2021szego} for a recent work on a detailed analysis of $K_N$.
\end{eg*}

At $\beta=4$, the point process \eqref{Gibbs symplectic} is of Pfaffian type and  $k$-point function $R_{N,k}^{(4)}$ has a Pfaffian structure \cite{MR1928853}
\begin{equation}  \label{RNk Pfa}
R_{N,k}^{(4)}(z_1,\dots, z_k) =\prod_{j=1}^{k} (\bar{z}_j-z_j)  \Pf \Big[
e^{ -\frac{N}{2}(Q(z_j)+Q(z_l)) }
\begin{pmatrix}
\kappa_N(z_j,z_l) & \kappa_N(z_j,\bar{z}_l)
\smallskip
\\
\kappa_N(\bar{z}_j,z_l) & \kappa_N(\bar{z}_j,\bar{z}_l)
\end{pmatrix}  \Big]_{ j,l=1 }^k,
\end{equation}
where $\kappa_N$ is called the (skew) pre-kernel. Examples for the 1- and 2-point function are given in \eqref{ExampleR1R2beta4}.
The skew-kernel $\kappa_N$ in \eqref{RNk Pfa} can be constructed explicitly in terms of skew-orthogonal polynomials $q_k$ and their skew-norms $s_k$ for a general rotationally invariant potential \cite{MR1928853}.
Namely, we have
\begin{equation}
q_{2k+1}(z)=z^{2k+1},\qquad q_{2k}(z)=z^{2k}+\sum_{l=0}^{k-1} z^{2l} \prod_{j=0}^{k-l-1} \frac{h_{2l+2j+2}}{h_{2l+2j+1}}, \qquad s_k=2h_{2k+1},
\end{equation}
see e.g.\ \cite{akemann2021skew,MR3066113}.
Then we have
\begin{equation}
\kappa_N(z,w)=G_N(z,w)-G_N(w,z),
\end{equation}
where
\begin{equation}
G_N(z,w) := \sum_{k=0}^{N-1} \frac{ q_{2k+1}(z)q_{2k}(w) }{ s_k }
= \frac12 \sum_{k=0}^{N-1}  \frac{ z^{2k+1} }{ h_{2k+1} }  \Big( w^{2k}+\sum_{l=0}^{k-1} w^{2l} \prod_{j=0}^{k-l-1} \frac{h_{2l+2j+2}}{h_{2l+2j+1}} \Big).
\end{equation}
Let us define
\begin{equation}
a_{2k+1}  := \frac{1}{\sqrt{2}} \frac{ h_{2k} }{ h_{2k+1} } \frac{ h_{2k-2} }{ h_{2k-1} } \cdots \frac{h_0}{h_1}, \qquad   a_{2l} :=  \frac{1}{\sqrt{2}}  \frac{h_{2l-1} }{ h_{2l} } \frac{ h_{2l-3} }{ h_{2l-2} } \cdots \frac{h_1}{h_2} \frac{1}{h_0}.
\end{equation}
Then one can write
\begin{equation} \label{G expression}
\begin{split}
G_N(z,w) = \sum_{ 0 \le l \le  k \le N-1 } a_{2k+1} a_{2l} z^{2k+1}  w^{2l}.
\end{split}
\end{equation}
It is obvious but noteworthy that
\begin{equation} \label{ak ak+1 hk+1}
a_k a_{k+1}= \frac12 \frac{1}{h_{k+1}}.
\end{equation}

\begin{eg*}[Symplectic Ginibre ensemble]
Let us consider the potential $Q^{\rm Gin}(z)=2|z|^2$. It then holds that
\begin{equation}
h_j= \frac{j!}{(2N)^{j+1}}, \qquad \prod_{j=0}^{k-l-1} \frac{h_{2l+2j+2}}{h_{2l+2j+1}}
= \frac{ k!}{ l!\,N^{k-l} } , \qquad q_{2k}(z)= \frac{ k! }{ N^k } \sum_{l=0}^{k} \frac{ N^l z^{2l}  }{ l!  }.
\end{equation}
Therefore we have
\begin{equation}
G_N(z,w)
= \sum_{ 0 \le l \le  k \le N-1 } a_{2k+1} a_{2l} z^{2k+1}  w^{2l}, \qquad a_k=(2N^3)^{\frac14}\, \frac{(2N)^{k/2}}{k!!}.
\end{equation}
\end{eg*}

%%%%%%%%%%%%%%%%%%%%%%%%%%%%%%%%%%%%%%%%%%%%
\subsection{Proof of Proposition~\ref{Prop_VN rep}}

In this subsection, we determine the mean number and variance for general rotationally symmetric potentials at finite-$N$.

\begin{proof}[Proof of Proposition~\ref{Prop_VN rep}]

By definition \eqref{EN VN}, we have
\begin{equation}
 \label{EN}
 E_N^{(\beta)}(a)= \int_{ |z|<a } R_{N,1}^{(\beta)}(z)\,dA(z).
\end{equation}
It is also well known that $V_N^{(\beta)}$ is given by
\begin{equation}
\begin{split} \label{VN R12}
V_N^{(\beta)}(a)&=\int_{|z|<a} R_{N,1}^{(\beta)}(z)\,dA(z)+ \int_{|z|<a} \int_{|w|<a} R_{N,2}^{(\beta)}(z,w)-R_{N,1}^{(\beta)}(z)R_{N,1}^{(\beta)}(w)\,dA(w)\,dA(z)
\\
&= \int_{|z|<a} \int_{|w|>a} R_{N,1}^{(\beta)}(z)R_{N,1}^{(\beta)}(w)-R_{N,2}^{(\beta)}(z,w) \,dA(w)\,dA(z),
\end{split}
\end{equation}
see e.g.\ \cite[Section 16.7]{Mehta}.
For the second identity, we have used the reproducing property and normalisation,
\begin{equation}
\int_\C R_{N,2}^{(\beta)}(z,w)\,dA(w)=(N-1) R_{N,1}^{(\beta)}(z),\qquad
\int_\C R_{N,1}^{(\beta)}(w)\,dA(w)=N,
\end{equation}
which follow from the definition \eqref{RNk beta}.

We first show the $\beta=2$ case.
For the determinantal Coulomb gas ensemble \eqref{Gibbs cplx}, by \eqref{RNk det}, we have
\begin{equation}
\label{ExampleR1R2beta2}
R_{N,1}^{(2)}(z)=K_N(z,z), \qquad R_{N,2}^{(2)}(z,w)=K_N(z,z)K_N(w,w)-|K_N(z,w)|^2.
\end{equation}

By \eqref{KN ONP}, we have
\begin{equation}
\begin{split}
R_N^{(2)}(z)= e^{ -N Q_N(z) } \sum_{j=0}^{N-1} \frac{ |z|^{2j} }{h_j} , \qquad |K_N(z,w)|^2 = e^{ -N( Q_N(z)+Q_N(w) ) }  \sum_{j,k=0}^{N-1} \frac{ (z\bar{w})^j \, (\bar{z}w)^k }{  h_j \, h_k }.
\end{split}
\end{equation}
Thus the expression $E_N^{(2)}$ in \eqref{EN 2 4 rep} follows from the definition \eqref{ortho norm trunc}.

For the variance, by \eqref{KN ONP} and \eqref{ExampleR1R2beta2}, we have
\begin{equation}
\begin{split}
 V_N^{(2)}(a)&= \int_{ |z|< a } \int_{|w|>a} |K_N(z,w)|^2\,dA(w)\,dA(z)
 \\
 &=  \sum_{j,k=0}^{N-1} \frac{1}{ h_j\, h_k }\int_{ |z|<a } e^{-N  Q_N(z) } z^j \bar{z}^k \, dA(z)\int_{ |w|>a } e^{-N Q_N(w) }  w^k \bar{w}^j  \,dA(w).
\end{split}
\end{equation}
The rotationally invariant domain and potential project onto the diagonal component,
\begin{equation}
\int_{ |z|<a } e^{-N Q_N(z)} z^j \bar{z}^k \, dA(z) = \frac{1}{\pi} \int_a^\infty e^{-N g_N(r)} r^{k+j+1} \, dr \int_0^{2\pi} e^{i(j-k)\theta}\,d\theta = \delta_{j,k}\,  2 \int_0^a e^{-N g_N(r)} r^{k+j+1}  \,dr,
\end{equation}
and likewise
\begin{equation}
\int_{|w|>a} e^{-N Q_N(w) }  w^k \bar{w}^j \, dA(w) =  \frac{1}{\pi} \int_a^\infty e^{-N g_N(r) } r^{k+j+1} dr \int_0^{2\pi} e^{i(k-j)\theta}d\theta = \delta_{j,k} 2 \int_a^\infty e^{-N g_N(r)} r^{k+j+1}dr.
\end{equation}
By  \eqref{ortho norm trunc}, this completes the proof for $\beta=2$.

Next, we compute $V_N^{(4)}(a)$ for $\beta=4$.
For the planar symplectic ensemble \eqref{Gibbs symplectic}, by \eqref{RNk Pfa}, we have
\begin{align}
\label{ExampleR1R2beta4}
\begin{split}
\hspace{2em} R_{N,1}^{(4)}(z)&=(\bar{z}-z) e^{-NQ_N(z)} \kappa_N(z,\bar{z}),
\\
\hspace{2em} R_{N,2}^{(4)}(z,w)&=(\bar{z}-z)(\bar{w}-w) e^{-N(Q_N(z)+Q_N(w))} (  \kappa_N(z,\bar{z}) \kappa_N(w,\bar{w})-|\kappa_N(z,w)|^2+|\kappa_N(z,\bar{w})|^2).
\end{split}
\end{align}
Note here that
\begin{equation}
\begin{split}
R_{N,1}^{(4)}(z)& = (\bar{z}-z) e^{-NQ_N(z)} \sum_{ 0 \le l \le k \le N-1} a_{2k+1} a_{2l} \Big( z^{2k+1} \bar{z}^{2l}- \bar{z}^{2k+1} z ^{2l} \Big)
\\
&= e^{-NQ_N(z)} \sum_{ 0 \le l \le k \le N-1} a_{2k+1} a_{2l} \Big( z^{2k+1} \bar{z}^{2l+1}- \bar{z}^{2k+2} z ^{2l} - z^{2k+2} \bar{z}^{2l}+ \bar{z}^{2k+1} z ^{2l+1}\Big).
\end{split}
\end{equation}
Thus we have
\begin{equation}
\begin{split}
E_N^{(4)}(a)&= \sum_{k=0}^{N-1} \int_{|z|<a} 2a_{2k}a_{2k+1}  e^{-NQ_N(z)} |z|^{4k+2}\,dA(z) =  \sum_{k=0}^{N-1} \frac{ h_{2k+1,1}(a) }{ h_{2k+1} },
\end{split}
\end{equation}
which leads to \eqref{EN 2 4 rep}.

For the variance we have
\begin{equation}
\label{VN Pfa}
\begin{split}
 V_N^{(4)}(a) & = \int_{ |z|< a } \int_{|w|>a} (\bar{z}-z)(\bar{w}-w) e^{-N(Q_N(z)+Q_N(w))} \Big(  |\kappa_N(z,w)|^2-|\kappa_N(z,\bar{w})|^2 \Big)\,dA(w)\,dA(z)
 \\
 &= 2\int_{ |z|< a } \int_{|w|>a} (\bar{z}-z)(\bar{w}-w) e^{-N(Q_N(z)+Q_N(w))}  |\kappa_N(z,w)|^2\,dA(w)\,dA(z),
\end{split}
\end{equation}
where we have used the
symmetry $w\to\bar{w}$ under the second integral for the second identity.
Note that
\begin{equation}
\begin{split}
&\quad (\bar{z}-z)(\bar{w}-w)   |\kappa_N(z,w)|^2 = \Big( zw-z\bar{w}-\bar{z}w+\bar{z}\bar{w} \Big)  \kappa_N(z,w) \kappa_N(\bar{z},\bar{w})
\\
&= \Big( zw-z\bar{w}-\bar{z}w+\bar{z}\bar{w} \Big)    \Big( G_N(z,w)-G_N(w,z) \Big)   \Big( G_N(\bar{z},\bar{w})-G_N(\bar{w},\bar{z}) \Big).
\end{split}
\end{equation}
Notice that the expression \eqref{G expression} for $G_N(z,w)$ contains only odd powers of $z$ and even powers of $w$.
Therefore, the terms involving
\begin{equation}
G_N(z,w)G_N(\bar{z},\bar{w}), \qquad G_N(w,z)G_N(\bar{w},\bar{z})
\end{equation}
do not contribute to the integral due to the angular integration.
Furthermore, by the change of the variable $z \mapsto \bar{z}$, $w \mapsto \bar{w}$, we have
\begin{equation}
\begin{split}
V_N^{(4)}(a) & = 4\int_{ |z|< a } \int_{|w|>a}  \Big( \bar{z}w+z\bar{w} -zw-\bar{z}\bar{w} \Big) e^{-N(Q(z)+Q(w))}  G_N(z,w)G_N(\bar{w},\bar{z})\,dA(w)\,dA(z).
\end{split}
\end{equation}
To analyse this expression, let us write
\begin{align}
V_{N,\RN{1}}^{(4)}(a) & = 4\int_{ |z|< a } \int_{|w|>a}  \bar{z}w\, e^{-N(Q_N(z)+Q_N(w))}  G_N(z,w)G_N(\bar{w},\bar{z})\,dA(w)\,dA(z),
\\
V_{N,\RN{2}}^{(4)}(a) & = 4\int_{ |z|< a } \int_{|w|>a}  z\bar{w}\, e^{-N(Q_N(z)+Q_N(w))}  G_N(z,w)G_N(\bar{w},\bar{z})\,dA(w)\,dA(z),
\\
V_{N,\RN{3}}^{(4)}(a) & = 4\int_{ |z|< a } \int_{|w|>a}  zw\, e^{-N(Q_N(z)+Q_N(w))}  G_N(z,w)G_N(\bar{w},\bar{z})\,dA(w)\,dA(z),
\\
V_{N,\RN{4}}^{(4)}(a) & = 4\int_{ |z|< a } \int_{|w|>a}  \bar{z}\bar{w}\, e^{-N(Q_N(z)+Q_N(w))}  G_N(z,w)G_N(\bar{w},\bar{z})\,dA(w)\,dA(z)
\end{align}
so that
\begin{equation}
V_N^{(4)}(a)=V_{N,\RN{1}}^{(4)}(a)+V_{N,\RN{2}}^{(4)}(a)-V_{N,\RN{3}}^{(4)}(a)-V_{N,\RN{4}}^{(4)}(a).
\end{equation}
We shall show that
\begin{equation}
V_{N,\RN{1}}^{(4)}(a)=  \sum_{j=0}^{N-1} \frac{ h_{2j+1,1}(a) \, h_{2j+1,2}(a)  }{ h_{2j+1}^2 }, \qquad V_{N,\RN{2}}^{(4)}(a)=V_{N,\RN{3}}^{(4)}(a)=V_{N,\RN{4}}^{(4)}(a)=0,
\end{equation}
which completes the proof.

Since
\begin{equation}
\begin{split}
G_N(z,w) G_N(\bar{w},\bar{z}) =  \sum_{  \substack{  0 \le l \le k \le N-1  \\  0 \le l' \le k' \le N-1} }   a_{2k+1} a_{2l} a_{2k'+1} a_{2l'} \, z^{2k+1}\bar{z}^{2l'}\,  w^{2l} \bar{w}^{2k'+1},
\end{split}
\end{equation}
we have
\begin{equation}
\begin{split}
&\quad z\bar{w} \, G_N(z,w) G_N(\bar{w},\bar{z}) = \sum_{  \substack{  0 \le l \le k \le N-1  \\  0 \le l' \le k' \le N-1} }   a_{2k+1} a_{2l} a_{2k'+1} \, a_{2l'}z^{2k+2} \bar{z}^{2l'} \, w^{2l} \bar{w}^{2k'+2} .
\end{split}
\end{equation}
Notice here that there is no set of indices  that satisfies
\begin{equation}
l<k+1=l', \qquad l'<k'+1=l\ .
\end{equation}
Therefore we have $V_{N,\RN{2}}^{(4)}(a)=0.$
Similarly, we have
\begin{align}
z w \, G_N(z,w) G_N(\bar{w},\bar{z}) & =  \sum_{  \substack{  0 \le l \le k \le N-1  \\  0 \le l' \le k' \le N-1} }   a_{2k+1} a_{2l}  a_{2k'+1} a_{2l'} \, z^{2k+2}  \bar{z}^{2l'} \, \bar{w}^{2k'+1} w^{2l+1},
\\
\bar{z}\bar{w}\, G_N(z,w) G_N(\bar{w},\bar{z}) & = \sum_{  \substack{  0 \le l \le k \le N-1  \\  0 \le l' \le k' \le N-1} }   a_{2k+1} a_{2l} a_{2k'+1} a_{2l'}\, z^{2k+1}  \bar{z}^{2l'+1} \, \bar{w}^{2k'+2} w^{2l}.
\end{align}
Therefore we obtain $V_{N,\RN{3}}^{(4)}(a)=V_{N,\RN{4}}^{(4)}(a)=0$ since there is no set of indices satisfying
\begin{equation}
k+1 =l'\le k', \qquad k'=l < k +1,
\end{equation}
or
\begin{equation}
k=l' <k'+1, \qquad k'+1=l \le k.
\end{equation}

We now compute $V_{N,\RN{1}}^{(4)}(a)$.
For this, note that
\begin{equation}
\bar{z}w\, G_N(z,w) G_N(\bar{w},\bar{z}) =  \sum_{  \substack{  0 \le l \le k \le N-1  \\  0 \le l' \le k' \le N-1} }   a_{2k+1} a_{2l} a_{2k'+1} a_{2l'} \, z^{2k+1}  \bar{z}^{2l'+1} \, \bar{w}^{2k'+1} w^{2l+1}
\end{equation}
Notice here that due to angular integration only the term
\begin{equation}
k=k'=l=l'
\end{equation}
contributes to the integral.
Thus by \eqref{ak ak+1 hk+1}, we have
\begin{equation}
\begin{split}
V_{N,\RN{1}}^{(4)}(a) &= 4 \sum_{k=0}^{N-1}  ( a_{2k} a_{2k+1} )^2  \int_{|z|<a} |z|^{4k+2} \, dA(z) \int_{|w|>a} |w|^{4k+2} \,dA(w) = \sum_{k=0}^{N-1} \frac{ h_{2k+1,1}(a) \, h_{2k+1,2}(a)  }{ h_{2k+1}^2 },
\end{split}
\end{equation}
which completes the proof.
\end{proof}

%%%%%%%%%%%%%%%%%%%%%%%%%%%%%%%%%5
\section{Ginibre ensembles revisited}\label{sec:Gin}

Before proving our main result, we discuss $E_N^{(\beta)}$ and $V_N^{(\beta)}$ for the Ginibre ensembles as an application of Proposition~\ref{Prop_VN rep}.

\subsection{Expected number}

Let $b_k(\lambda)$'s be the functions defined recursively by
\begin{equation} \label{bk functions}
b_k(\lambda)=\lambda(1-\lambda) b_{k-1}'(\lambda)+(2k-1) \lambda\,b_{k-1}(\lambda), \qquad b_0(\lambda)=1.
\end{equation}
We obtain the following.

\begin{prop} \textup{\textbf{(Expected number for Ginibre ensembles)}} \label{Prop_Ginibre EN}
Let $Q=Q^{\textup{Gin}}$. Then the following holds.
\begin{itemize}
    \item \textup{\textbf{(Finite-$N$ expression)}} For each $a>0$ and $N,$ we have
\begin{equation} \label{E_N cplx Ginibre}
\begin{split}
E_N^{(2)}(a) &=\sum_{k=0}^{N-1} P(k+1,Na^2)= Na^2+N(1-a^2) \,P(N,Na^2)-\frac{ (Na^2)^N\, e^{-Na^2} }{ (N-1)! }
\end{split}
\end{equation}
and
\begin{equation}
\begin{split}  \label{EN symplectic Ginibre}
&\quad E_N^{(4)}(a)  = \sum_{k=0}^{N-1}  P(2k+2,2Na^2) =\frac12 E_{2N}^{(2)}(a)- \frac12 \sum_{k=0}^{N-1} \frac{(2Na^2)^{2k+1} e^{-2Na^2}   }{(2k+1)!}
\\
&= N a^2 +N(1-a^2) P(2N,2Na^2) -\frac12 \frac{ (2Na^2)^{2N}\, e^{-2Na^2} }{ (2N-1)! } - \frac12 \sum_{k=0}^{N-1} \frac{(2Na^2)^{2k+1} e^{-2Na^2}   }{(2k+1)!},
\end{split}
\end{equation}
where $P(a,z)=\gamma(a,z) / \Gamma(a)$ is the (regularised) incomplete gamma function.
\smallskip
\item \textup{\textbf{(Large-$N$ expansion)}} For a fixed $a \in (0,1)$, we have \begin{equation} \label{E_N cplx Ginibre asym}
\begin{split}
E_N^{(2)}(a)  &\sim N a^2+ \frac{ (Na^2)^N\, e^{-Na^2} }{ (N-1)! } \sum_{k=1}^\infty \frac{ (-1)^k \, b_k(a^2) }{ (1-a^2)^{2k}} \frac{1}{N^{k}}
\end{split}
\end{equation}
and
\begin{equation} \label{E_N symp Ginibre asym}
\begin{split}
E_N^{(4)}(a)  &\sim N a^2+\frac12 \frac{ (2Na^2)^N\, e^{-2Na^2} }{ (2N-1)! } \sum_{k=1}^\infty \frac{ (-1)^k \, b_k(a^2) }{ (1-a^2)^{2k}} \frac{1}{(2N)^{k}} - \frac12 \sum_{k=0}^{N-1} \frac{(2Na^2)^{2k+1} e^{-2Na^2}   }{(2k+1)!},
\end{split}
\end{equation}
with $b_k$'s given by \eqref{bk functions}.
Furthermore we have when rescaling $Na^2=\TT^2$ with $\TT$ fixed
\begin{equation} \label{EN beta Ginibre micro}
E_N^{(2)}(\TT/\sqrt{ N })\sim \TT^2, \qquad  E_N^{(4)}(\TT/\sqrt{ N }) \sim \TT^2-\frac14(1-e^{-4\TT^2}).
\end{equation}
\end{itemize}

\end{prop}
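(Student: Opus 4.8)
The plan is to reduce everything to the regularised incomplete gamma function and then evaluate the resulting sums. Applying Proposition~\ref{Prop_VN rep} with the Gaussian potential $Q^{\textup{Gin}} = \frac{\beta}{2}|z|^2$, the moments \eqref{ortho norm} become elementary: the substitution $u = \frac{\beta N}{2}r^2$ gives $h_j = j!/(\tfrac{\beta N}{2})^{j+1}$ and $h_{j,1}(a) = \gamma(j+1,\tfrac{\beta N}{2}a^2)/(\tfrac{\beta N}{2})^{j+1}$, so each summand $h_{j,1}(a)/h_j$ collapses to $P(j+1,\tfrac{\beta N}{2}a^2)$. Feeding this into \eqref{EN 2 4 rep} immediately produces the first equalities, $E_N^{(2)}(a) = \sum_{k=0}^{N-1}P(k+1,Na^2)$ and $E_N^{(4)}(a) = \sum_{k=0}^{N-1}P(2k+2,2Na^2)$, the $\beta=4$ case using that only the odd index $2j+1$ enters.

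For the closed forms I would use the identity $\sum_{k=1}^{n}P(k,x) = \int_0^x Q(n,t)\,dt$, which follows by writing $P(k,x) = \frac{1}{(k-1)!}\int_0^x t^{k-1}e^{-t}\,dt$, summing the exponential series, and invoking $\sum_{m=0}^{n-1}t^m/m! = e^t Q(n,t)$. One integration by parts, together with the recurrence $\gamma(n+1,x) = n\gamma(n,x) - x^n e^{-x}$, then yields $\int_0^x Q(n,t)\,dt = x + (n-x)P(n,x) - x^n e^{-x}/(n-1)!$, which is exactly \eqref{E_N cplx Ginibre} after setting $n=N$, $x=Na^2$. For $\beta=4$ I would split $E_{2N}^{(2)}(a) = \sum_{k=1}^{2N}P(k,2Na^2)$ into even and odd indices; the telescoping relation $P(2j-1,x) - P(2j,x) = x^{2j-1}e^{-x}/(2j-1)!$ links the two halves and gives $E_{2N}^{(2)} = 2E_N^{(4)} + \sum_{k=0}^{N-1}(2Na^2)^{2k+1}e^{-2Na^2}/(2k+1)!$, i.e.\ the first line of \eqref{EN symplectic Ginibre}; substituting the closed form of $E_{2N}^{(2)}$ then produces the last line.

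The large-$N$ expansion is the crux. Writing $\lambda = a^2 \in (0,1)$ and rescaling $t = Ns$, I would represent $\gamma(N,N\lambda) = N^N \int_0^\lambda e^{N(\log s - s)}\,\frac{ds}{s}$. Since $\phi(s) = \log s - s$ is strictly increasing on $(0,\lambda]$ with no interior critical point (its maximiser $s=1$ lies outside), this is an endpoint-dominated Laplace integral, and repeated integration by parts (Watson's lemma at $s=\lambda$) generates the full asymptotic series in $1/N$ with prefactor $e^{N\phi(\lambda)} = \lambda^N e^{-N\lambda}$. The leading term reproduces $N(1-\lambda)P(N,N\lambda) \sim (N\lambda)^N e^{-N\lambda}/(N-1)!$, cancelling the explicit last term of \eqref{E_N cplx Ginibre}. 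To identify the coefficients with $b_k$, I would set $G_0(s) = 1/(1-s)$ and $G_n = G_{n-1}'/\phi'$, and check that $b_n(\lambda) := (1-\lambda)^{2n+1}G_n(\lambda)$ obeys precisely the recursion \eqref{bk functions}, using $\phi'(s) = (1-s)/s$. This is the one genuinely delicate bookkeeping step, since one must track how the factors of $\phi'$ interact with differentiation. The $\beta=4$ expansion \eqref{E_N symp Ginibre asym} then follows by inserting the $\beta=2$ expansion at $2N$ into the exact relation from the previous paragraph.

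Finally, the microscopic regime is transparent in the integral representation. Setting $a = \TT/\sqrt N$ makes the arguments $N$-independent: $E_N^{(2)}(\TT/\sqrt N) = \int_0^{\TT^2}Q(N,t)\,dt \to \TT^2$ by dominated convergence since $Q(N,t)\to 1$, while $E_N^{(4)}(\TT/\sqrt N) = \sum_{j=1}^N P(2j,2\TT^2) \to \int_0^{2\TT^2}e^{-t}\sinh t\,dt = \TT^2 - \tfrac14(1-e^{-4\TT^2})$, using $\sum_{j\ge 1}t^{2j-1}/(2j-1)! = \sinh t$. This gives \eqref{EN beta Ginibre micro}. I expect the only real obstacle to be the coefficient identification in the Watson's-lemma step; the remaining manipulations are routine incomplete-gamma algebra.
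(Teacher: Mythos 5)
Your proposal is correct, and for the finite-$N$ and microscopic parts it is essentially the paper's argument in slightly different clothing: the paper obtains $E_N^{(2)}(a)=\int_0^{Na^2}Q(N,x)\,dx$ by radially integrating the one-point function $R_{N,1}^{(2)}(z)=N\,Q(N,N|z|^2)$, whereas you reach the same integral by summing $P(k,x)=\frac{1}{(k-1)!}\int_0^x t^{k-1}e^{-t}\,dt$ over $k$; both then perform the same integration by parts, and your telescoping identity $P(2j-1,x)-P(2j,x)=x^{2j-1}e^{-x}/(2j-1)!$ is exactly the recurrence \cite[Eq.(8.8.5)]{olver2010nist} the paper invokes to relate $E_N^{(4)}$ to $E_{2N}^{(2)}$; likewise your microscopic evaluation $\sum_{j\ge1}P(2j,2\TT^2)=\int_0^{2\TT^2}e^{-t}\sinh t\,dt$ is the same term-wise integration as the paper's \eqref{sum of P}. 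The genuine divergence is at the large-$N$ expansion: the paper simply cites the known uniform asymptotics of $P(n,\lambda n)$ from \cite[Eq.(8.11.6)]{olver2010nist}, whose coefficients are \emph{defined} by the recursion \eqref{bk functions}, so no verification is needed there; you instead rederive the expansion from the endpoint-dominated Laplace integral $\gamma(N,N\lambda)=N^N\int_0^\lambda e^{N(\log s-s)}\,ds/s$. Your bookkeeping does check out: with $\phi'(s)=(1-s)/s$, iterated integration by parts yields the coefficient $(-1)^n G_n(\lambda)$ at order $N^{-n}$, where $G_0=1/(1-s)$ and $G_n=G_{n-1}'/\phi'$, and setting $b_n=(1-s)^{2n+1}G_n$ one computes
\begin{equation}
b_n=(1-s)^{2n+1}\,\frac{s}{1-s}\Big[b_{n-1}'(1-s)^{-(2n-1)}+(2n-1)\,b_{n-1}(1-s)^{-2n}\Big]=s(1-s)\,b_{n-1}'+(2n-1)\,s\,b_{n-1},
\end{equation}
which is precisely \eqref{bk functions} (e.g.\ $b_1=s$, consistent with the recursion). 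So your route buys a self-contained proof of the expansion, at the cost of this extra Watson's-lemma verification, while the paper's citation is shorter. One bonus of your structural derivation: pushing the $\beta=2$ expansion at $2N$ through the exact relation $E_N^{(4)}=\frac12E_{2N}^{(2)}-\frac12\sum_{k=0}^{N-1}(2Na^2)^{2k+1}e^{-2Na^2}/(2k+1)!$ produces the prefactor $(2Na^2)^{2N}$ in \eqref{E_N symp Ginibre asym}, consistent with the finite-$N$ formula \eqref{EN symplectic Ginibre} --- the exponent $N$ printed there is a typo, which your approach would have exposed.
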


See \cite{MR2536111,MR1181356} for some related statement.

Notice that by Stirling's formula, we have
\begin{equation}
\frac{ (Na^2)^N\, e^{-Na^2} }{ (N-1)! } \sim \sqrt{ \frac{N}{2\pi} } a^{2N} e^{N(1-a^2)}=  \sqrt{ \frac{N}{2\pi} }  e^{N(2\log a+1-a^2)},
\end{equation}
and we observe that  $2\log a+1-a^2<0$ for $a \in (0,1)$.

Note also that for small $\TT$, we obtain a quartic behaviour $E_N^{(4)}(\TT/\sqrt{N})\sim2\TT^4+O(\TT^6)$, in contrast to the quadratic behaviour at $\beta=2$.

\begin{proof}[Proof of Proposition~\ref{Prop_Ginibre EN}]
Let us first show the finite-$N$ expressions \eqref{E_N cplx Ginibre} and \eqref{EN symplectic Ginibre}.
Note that the first expressions
\begin{equation} \label{EN 2 4 Ginibre sum}
E_N^{(2)}(a) =\sum_{k=0}^{N-1} P(k+1,Na^2), \qquad  E_N^{(4)}(a)  = \sum_{k=0}^{N-1}  P(2k+2,2Na^2)
\end{equation}
immediately follow from \eqref{Q Gin} and \eqref{EN 2 4 rep}.

By \cite[Eq.(8.4.10)]{olver2010nist} and \eqref{KN Ginibre}, we have
\begin{equation} \label{RN Ginibre}
R_{N,1}^{(2)}(z)=N\,e^{- N|z|^2 } \sum_{j=0}^{N-1} \frac{ (N|z|^2)^j }{ j! }=N \, Q(N,N|z|^2).
\end{equation}
Using integration by parts and properties of the incomplete gamma-functions one can show that
\begin{equation}
\begin{split}
\int_0^b Q(N,x)\,dx&=b \, Q(N,b)+N-N\, Q(N+1,b) =b \, Q(N,b)+N\, P(N+1,b)
\\
&=b \, Q(N,b)+ N\,P(N,b)-\frac{ b^N \, e^{-b} }{ (N-1)! } = N+(b-N)\,Q(N,b)
-\frac{ b^n \, e^{-b} }{ (N-1)! } .
\end{split}
\end{equation}
Thus we have
\begin{equation}
\begin{split}
E_N^{(2)}(a)&=2N \int_0^a r\,Q(N,Nr^2)\,dr=\int_0^{Na^2} Q(N,x)\,dx
\\
&= N+(Na^2-N) \,Q(N,Na^2)- \frac{ (Na^2)^N\, e^{-Na^2} }{ (N-1)! },
\end{split}
\end{equation}
which gives \eqref{E_N cplx Ginibre}.
Furthermore, it follows from the recurrence relation
\cite[Eq.(8.8.5)]{olver2010nist}
of the incomplete gamma function that
\begin{equation}
\begin{split}
\sum_{k=0}^{N-1}  P(2k+2,2Na^2)- \sum_{k=0}^{N-1}  P(2k+1,2Na^2)=- e^{-2Na^2} \sum_{k=0}^{N-1} \frac{ (2Na^2)^{2k+1} }{ (2k+1)! }.
\end{split}
\end{equation}
This gives \eqref{EN symplectic Ginibre}.

We now show \eqref{E_N cplx Ginibre asym} and \eqref{E_N symp Ginibre asym}.
For this, recall that for $0<\lambda<1$
\begin{equation}
P(n, \lambda n) \sim -\frac{ (\lambda n)^n\, e^{-\lambda n} }{ (n-1)! } \sum_{k=0}^\infty \frac{ (-n)^k \, b_k(\lambda) }{ (\lambda n-n)^{2k+1} } = \frac{ (\lambda n)^n\, e^{-\lambda n} }{ n! } \sum_{k=0}^\infty \frac{ (-1)^k \, b_k(\lambda) }{ (1-\lambda)^{2k+1}\, n^{k} },
\end{equation}
see \cite[Eq.(8.11.6)]{olver2010nist}.
Using this, we have
\begin{equation}
(1-a^2) \,P(N,Na^2) \sim \frac{ (Na^2)^N\, e^{-Na^2} }{ N! } \sum_{k=0}^\infty \frac{ (-1)^k \, b_k(a^2) }{ (1-a^2)^{2k}\, N^{k} },
\end{equation}
which leads to \eqref{E_N cplx Ginibre asym}.
Moreover, \eqref{E_N symp Ginibre asym} follows from the relation between \eqref{E_N cplx Ginibre} and \eqref{EN symplectic Ginibre}.

It remains to show \eqref{EN beta Ginibre micro}.
By \eqref{EN 2 4 Ginibre sum}, we have
\begin{equation}
E_N^{(2)}(\TT/\sqrt{N}) \sim \sum_{k=0}^{\infty} P(k+1,\TT^2), \qquad E_N^{(4)}(\TT/\sqrt{N})\sim \sum_{k=0}^\infty P(2k+2,2\TT^2).
\end{equation}
Integrating the following two elementary summations term wise on the left- and right-hand side,
\begin{equation}
e^{-x} \sum_{k=0}^\infty \frac{x^k}{k!}=1, \qquad 2 e^{-2x} \sum_{k=0}^\infty \frac{(2x)^{2k+1} }{ (2k+1)! } =1-e^{-4x},
\end{equation}
we obtain the desired relations by applying the definition of the incomplete gamma function under the sum:
\begin{equation} \label{sum of P}
\sum_{k=0}^{\infty} P(k+1,x)=x,\qquad  \sum_{k=0}^\infty P(2k+2,2x)= x-\frac14(1-e^{-4x}).
\end{equation}
This gives \eqref{EN beta Ginibre micro}.
\end{proof}

\subsection{Number variance}

It is instructive to derive Theorem~\ref{Thm_number variance} for the Ginibre ensembles with the potential \eqref{Q Gin} first, as this method will be generalised to prove Theorem~\ref{Thm_number variance} for a general potential.
Note that for the complex case when $\beta=2$, it was derived in  \cite{lacroix2019rotating} but we use a slightly different computation.

\begin{proof}[Proof of Theorem~\ref{Thm_number variance} for $Q=Q^{\textup{Gin}}$]

It follows from Proposition~\ref{Prop_VN rep} that
\begin{equation} \label{VN Ginibre}
V_N^{(\beta)}(a)=  \sum_{j=1}^{N}  P\Big( \frac{\beta}{2} j,\frac{\beta}{2} Na^2\Big)\,  Q\Big( \frac{\beta}{2} j,\frac{\beta}{2} Na^2\Big).
\end{equation}
Let $X \sim \textup{Poi}( \frac{\beta}{2} Na^2)$ be the Poisson random variable with intensity $ \frac{\beta}{2} Na^2$.
Since $Q(a,z)$ is the cumulative distribution function for Poisson random variables, we have
\begin{gather}
   Q\Big( \frac{\beta}{2} j,\frac{\beta}{2} Na^2\Big)=\Prob\Big(X < \frac{\beta}{2} j\Big)= \Prob \Big(  \frac{X-\frac{\beta}{2} Na^2}{ \sqrt{ \frac{\beta}{2} N}a } <  \sqrt{ \frac{\beta}{2} } \, \frac{ j-Na^2 }{ \sqrt{N} a} \Big),
\\
 P\Big( \frac{\beta}{2} j,\frac{\beta}{2} Na^2\Big)=\Prob\Big(X \ge  \frac{\beta}{2} j\Big)= \Prob \Big(  \frac{X-\frac{\beta}{2} Na^2}{ \sqrt{ \frac{\beta}{2} N}a } \ge \sqrt{ \frac{\beta}{2} } \, \frac{ j-Na^2 }{ \sqrt{N} a} \Big) .
\end{gather}
Furthermore, by the normal approximation of the Poisson random variable, we have
\begin{equation}
 Q\Big( \frac{\beta}{2} j,\frac{\beta}{2} Na^2\Big) \sim \Prob \Big(Z < \sqrt{ \frac{\beta}{2} } \, \frac{ j-Na^2 }{ \sqrt{N} a} \Big), \qquad  P\Big( \frac{\beta}{2} j,\frac{\beta}{2} Na^2\Big) \sim \Prob \Big(Z \ge \sqrt{ \frac{\beta}{2} } \, \frac{ j-Na^2 }{ \sqrt{N} a} \Big),
\end{equation}
where $Z$ is the standard normal distribution.
Therefore if $|j-Na^2|>M \sqrt{N}$ for a large $M$, the summand in \eqref{VN Ginibre} is exponentially small, i.e.
\begin{equation}
P\Big( \frac{\beta}{2} j,\frac{\beta}{2} Na^2\Big) Q\Big( \frac{\beta}{2} j,\frac{\beta}{2} Na^2\Big)= O(e^{-c M^2}) \Big( 1- O(e^{-c M^2}) \Big) =  O(e^{-c M^2})
\end{equation}
for some $c>0.$
Thus we have
\begin{equation} \label{VN Ginibre main}
V_N^{(\beta)}(a) \sim  \sum_{j=Na^2-M\sqrt{N}}^{Na^2+M \sqrt{N}}  P\Big( \frac{\beta}{2} j,\frac{\beta}{2} Na^2\Big)\,  Q\Big( \frac{\beta}{2} j,\frac{\beta}{2} Na^2\Big).
\end{equation}

To analyse the main contribution \eqref{VN Ginibre main}, we write
$j=Na^2+2 \sqrt{ \frac{N}{\beta} } a \, t.$
Then we have
\begin{equation}
 Q\Big( \frac{\beta}{2} j,\frac{\beta}{2} Na^2\Big) \sim \Prob (Z < \sqrt{2}t) = \frac{ 1+\erf(t) }{ 2 }, \qquad
 P\Big( \frac{\beta}{2} j,\frac{\beta}{2} Na^2\Big) \sim \frac{ 1-\erf(t) }{ 2 }.
\end{equation}
For the bulk case when $ a \in (0,1)$, by the Riemann sum approximation, we obtain
\begin{equation}
\begin{split}
V_N^{(\beta)}(a) \sim 2\sqrt{ \frac{N}{\beta} } a \int_\R \frac{ 1-\erf(t)^2 }{ 4 } \,dt =  a \, \sqrt{ \frac{2N}{\beta \pi} }.
\end{split}
\end{equation}
Here we have used
\begin{equation} \label{erf int real axis}
 \int_{-\infty}^\infty \frac{ 1-\erf(t)^2 }{ 4 } \,dt = \frac{2}{\sqrt{\pi}}\int_0^\infty t \erf(t) e^{-t^2} \,dt
 =\frac{1}{\sqrt{\pi}}\int_0^\infty \erf(\sqrt{s}) e^{-s} \,ds
 =  \frac{1}{\sqrt{2\pi}},
\end{equation}
after integration by parts and upon applying \cite[Eq.(7.14.3)]{olver2010nist}. Together with $\Delta Q^{\rm Gin}(z)=\frac{\beta}{2}$ we obtain $\lim_{N\to\infty}\frac{\beta}{\sqrt{N\beta/2}}V_N^{(\beta)}(a) =2a/\sqrt{\pi}$ in the bulk.

For the near edge case, it follows from rescaling $a=1-\frac{\SS}{\sqrt{N\beta}}$ and
\begin{equation}
j= N+ \frac{2}{\sqrt{\beta}} \, (t-\SS)\sqrt{N}+O(1)
\end{equation}
that we have
\begin{equation}
\begin{split}
V_N^{(\beta)}(1-\frac{\SS}{\sqrt{N\beta}}) \sim  2\sqrt{ \frac{N}{\beta} }  \int_{-\infty}^\SS \frac{ 1-\erf(t)^2 }{ 4 } \,dt=2\sqrt{ \frac{N}{2\pi\beta} } f(\SS).
\end{split}
\end{equation}
As stated in \eqref{f(SS)}, the supplement of \cite{lacroix2019rotating} gives a second form for this integral without derivation\footnote{Notice a typo in the sign of the second term there.}. It is obtained as follows
\begin{align*}
\frac{1}{\sqrt{2\pi}}f(\SS)&= \int_{-\infty}^\SS \frac{ \erfc(t)\erfc(-t)}{ 4 } \, dt=
\frac{\SS}{4}\erfc(\SS)\erfc(-\SS)+\frac{1}{\sqrt{\pi}}\int_{-\infty}^\SS t\erf(t)e^{-t^2} \, dt
\\
&= \frac{\SS}{4}\erfc(\SS)\erfc(-\SS)-\frac{1}{2\sqrt{\pi}}\erf(\SS)e^{-\SS^2}+
\frac{1}{2\sqrt{2\pi}}(1+\erf(\sqrt{2}\SS)),
\end{align*}
upon integrating by parts twice and using the definition of the error function.
Now the proof is complete for the Ginibre ensembles.
\end{proof}

\section{Number variance for general radially symmetric potentials at large-\texorpdfstring{$N$}{N}}\label{sec:Var}

In this section, we prove our main result. For simplicity we shall drop the subscript in the potential $Q_N$ and $g_N$ here.

\subsection{Proof of Theorem~\ref{Thm_number variance}}

For the proof, we shall use the standard Laplace method to derive the asymptotic of $h_{j, 1}(a) / h_j$.
\begin{proof}[Proof of Theorem~\ref{Thm_number variance}]

We first consider the case $\beta=2$.
By Proposition~\ref{Prop_VN rep}, the case $\beta=4$ follows along the same lines; the only difference comes from the step size when applying the Riemann sum approximation.

Write
\begin{equation}
\widehat{g}_j(r):=g(r)-\frac{2j+1}{N} \log r.
\end{equation}
Thus
\begin{equation}
h_j=2 \int_0^\infty e^{-N \widehat{g}_j(r)}\,dr, \qquad h_{j,1}(a)= 2 \int_0^a e^{-N \widehat{g}_j(r)}\,dr, \qquad h_{j,2}(a)= 2 \int_a^\infty e^{-N \widehat{g}_j(r)}\,dr.
\end{equation}

Since $\Delta=\frac14(\pa_r^2+\frac{1}{r} \pa_r)$, the function $f(r):=rg_N'(r)$ satisfies $f'(r)>0$, according  to our assumptions for a suitable potential.
Therefore
\begin{equation}
r\widehat{g}_j'(r)=rg'(r)-\frac{2j+1}{N}
\end{equation}
has a unique critical point $r_j,$ i.e.
\begin{equation} \label{eq:critical rj}
r_j g'(r_j)= \frac{2j+1}{N}.
\end{equation}

Notice that
\begin{equation}
\begin{split}
\widehat{g}_j''(r) &= g''(r)+\frac{2j+1}{N} \frac{1}{r^2} = g''(r)+\frac{g'(r)}{r} -  \frac{g'(r)}{r}  + \frac{2j+1}{N} \frac{1}{r^2}
 = 4 \Delta Q(r) - \frac{1}{r^2} \Big( rg'(r)-\frac{2j+1}{N}\Big).
\end{split}
\end{equation}
Thus we have as $r\to r_j,$
\begin{equation}
\widehat{g}_j(r)=\widehat{g}_j(r_j)+2 \Delta Q(r_j)  (r-r_j)^2+O(|r-r_j|^3).
\end{equation}
Using the change of variable
\begin{equation}
r=r_j+\frac{t}{\sqrt{ N \Delta Q(r_j) }},
\end{equation}
this gives
\begin{equation}
h_j \sim \frac{ 2 \, e^{-N \widehat{g}_j(r_j)}   }{\sqrt{ N \Delta Q(r_j) }} \int_{-\infty}^\infty e^{-2t^2 }\,dt  = \sqrt{ \frac{ 2\pi }{   N \Delta Q(r_j)   } }  e^{-N \widehat{g}_j(r_j)}.
\end{equation}

Let us turn to $h_{j,1}(a)$ and $h_{j,2}(a)$.
We denote $j_* = \frac{1}{2}(N a g'(a) - 1)$.
Due to \eqref{eq:critical rj}, we have $r_{j*}=a$.
By the Laplace method, for $j$ such that $|j-j_*|> M \sqrt{N}$ for a large $M,$
\begin{equation}
\frac{ h_{j,1}(a) }{ h_j }
\frac{ h_{j,2}(a) }{ h_j }= \Big( 1-O(e^{-cM^2}) \Big) O(e^{-cM^2}).
\end{equation}
Therefore by Proposition~\ref{Prop_VN rep}, we have
\begin{equation}
V_N^{(2)}(a) \sim \sum_{ j=j_*-M\sqrt{N} }^{ j_*+M\sqrt{N} } \frac{ h_{j,1}(a)  h_{j,2}(a) }{ h_j^2 } .
\end{equation}

To analyse the main contributions, let
\begin{equation}
r_{j_*+\sqrt{N}s} \sim a+ \frac{x_s}{ \sqrt{N} }.
\end{equation}
Then we have
\begin{equation}
\begin{split}
r_{ j_*+\sqrt{N}s } g'( r_{ j_*+\sqrt{N}s } ) & = \Big(a+\frac{x_s}{\sqrt{N}}\Big)g'\Big( a+\frac{x_s}{\sqrt{N}} \Big) \sim  \Big(a+\frac{x_s}{\sqrt{N}}\Big) \Big( g'( a) +\frac{x_s}{\sqrt{N}} g''(a) \Big)
\\
&\sim a g'(a)+ \frac{1}{\sqrt{N}} (a g''(a)+g'(a)) x_s =  \frac{2 j_*+2\sqrt{N}s +1 }{ N },
\end{split}
\end{equation}
which leads to
\begin{equation}
x_s= \frac{ 2s }{ ag''(a)+g'(a) } = \frac{ s }{ 2a } \frac{1}{ \Delta Q(a)} .
\end{equation}

Using the change of variable
\begin{equation}
r=r_{j_*+\sqrt{N}s}+\frac{t}{ \sqrt{N \Delta Q(a)} },
\end{equation}
we have
\begin{equation}
\begin{split}
h_{j_*+\sqrt{N}s,1}(a) & \sim    \frac{ 2 \, e^{-N \widehat{g}_{j_*}(r_{j_*})}   }{\sqrt{ N \Delta Q(a) }} \int_{-\infty}^{ -\frac{s}{2a \sqrt{\Delta Q(a)}} } e^{-2t^2 }\,dt
\\
&=   \sqrt{ \frac{ 2\pi }{   N \Delta Q(a)   } }  e^{-N \widehat{g}_{j_*}(r_{j_*})} \frac12 \Big[ 1+\erf\Big( -\frac{s}{a \sqrt{2\Delta Q(a)}} \Big)  \Big].
\end{split}
\end{equation}
Using $h_{j,2}(a)=h_j-h_{j,1}(a)$ we have thus shown that
\begin{equation}
\frac{ h_{j_*+\sqrt{N}s,1}(a) }{ h_{j_*+\sqrt{N}s} } \sim  \frac12 \Big[ 1-\erf\Big( \frac{s}{a \sqrt{2\Delta Q(a)}} \Big)  \Big], \qquad
\frac{ h_{j_*+\sqrt{N}s,2}(a) }{ h_{j_*+\sqrt{N}s} } \sim  \frac12 \Big[ 1+\erf\Big( \frac{s}{a \sqrt{2\Delta Q(a)}} \Big)  \Big].
\end{equation}

For $a \in (0,1)$, by \eqref{erf int real axis}, we obtain
\begin{equation}
\begin{split}
V_N^{(2)}(a) & \sim  \sqrt{N} \, \frac14 \int_{-\infty}^\infty 1- \erf\Big( \frac{s}{a \sqrt{2\Delta Q(a)}} \Big) ^2 \,ds
\\
& = \sqrt{N} a \sqrt{2\Delta Q(a)} \int_{-\infty}^\infty \frac{ 1-\erf(t)^2 }{ 4 } \,dt = \sqrt{N}\, \frac{a \sqrt{ \Delta Q(a) } }{\sqrt{\pi}}.
\end{split}
\end{equation}

On the other hand, for the edge we define
\begin{equation}
a=1-\frac{ \SS }{ \sqrt{ 2 \Delta Q(1)  N} } ,
\end{equation}
and obtain
\begin{equation}
\begin{split}
V_N^{(2)}(a) & \sim  \sqrt{N} \, \frac14 \int_{-\infty}^{\sqrt{2\Delta Q(1)}\SS} 1- \erf\Big( \frac{s}{a \sqrt{2\Delta Q(a)}} \Big) ^2 \,ds
\sim \sqrt{N} a \sqrt{2\Delta Q(a)} \int_{-\infty}^{\SS} \frac{ 1-\erf(t)^2 }{ 4 } \,dt .
\end{split}
\end{equation}
This completes the proof.
\end{proof}

We now prove Corollary~\ref{Cor_in prob conv}.

\begin{proof}[Proof of Corollary~\ref{Cor_in prob conv}]
It is well known \cite{HM13,MR2934715} that the empirical measures converge to Frostman's equilibrium measure \cite{ST97}.
As a consequence, as $N \to \infty$, we have
\begin{equation}\label{RN1 beta 2 4}
\frac1{N}R_{N,1}^{(\beta)}(z) \sim  \frac{2}{\beta} \Delta Q(z) \cdot \mathbbm{1}_S(z).
\end{equation}

By \eqref{EN} and \eqref{RN1 beta 2 4}, we have that for $a \in (0,1]$,
\begin{equation} \label{EN asymp}
E_N^{(\beta)}(a) \sim N\, \frac{4}{\beta} \int_0^a r \Delta Q(r) \,dr = \frac1{\beta} \int_0^a (rg'(r))' \,dr = N \, \frac{ a g'(a) }{ \beta }
\end{equation}
as $N \to \infty$.
Now \eqref{in prob conv} follows from Theorem~\ref{Thm_number variance} and \eqref{EN asymp}.
\end{proof}

\subsection{Proof of Propositions~\ref{Prop_number variance ML ensemble} and \ref{Prop_number variance products} }

\begin{proof}[Proof of Proposition~\ref{Prop_number variance ML ensemble}]

For the potential \eqref{g ML potential} with $\beta=2$, we have
\begin{equation}
h_j=2 \int_0^\infty r^{2j+2c+1} e^{-N \frac{ r^{2b} }{ b }}\,dr =\frac{1}{b} \Big(  \frac{b}{N} \Big)^{ \frac{j+c+1}{b} } \Gamma\Big( \frac{j+c+1}{b} \Big).
\end{equation}
Similarly,
\begin{equation}
h_{j,1}(a)= \frac{1}{b} \Big(  \frac{b}{N} \Big)^{ \frac{j+c+1}{b} } \gamma\Big( \frac{j+c+1}{b}, \frac{a^{2b}}{b}N \Big),  \qquad
h_{j,2}(a)= \frac{1}{b} \Big(  \frac{b}{N} \Big)^{ \frac{j+c+1}{b} } \Gamma\Big( \frac{j+c+1}{b}, \frac{a^{2b}}{b}N \Big).
\end{equation}
Thus by Proposition~\ref{Prop_VN rep}, we have
\begin{align}
\label{EN ML}
E_N^{(2)}(a)&=\sum_{j=0}^{N-1} P\Big( \frac{j+c+1}{b}, \frac{a^{2b}}{b}N \Big),
\\
\label{VN ML}
V_N^{(2)}(a)&=\sum_{j=0}^{N-1} P\Big( \frac{j+c+1}{b}, \frac{a^{2b}}{b}N \Big)  Q\Big( \frac{j+c+1}{b}, \frac{a^{2b}}{b}N \Big).
\end{align}
Similarly, we have
\begin{align}
E_N^{(4)}(a)&=\sum_{j=0}^{N-1} P\Big( \frac{2j+c+2}{b}, \frac{2a^{2b}}{b} N \Big) ,\\
V_N^{(4)}(a)&=\sum_{j=0}^{N-1} P\Big( \frac{2j+c+2}{b}, \frac{2a^{2b}}{b} N \Big) Q\Big( \frac{2j+c+2}{b}, \frac{2a^{2b}}{b} N \Big).
\end{align}
The rescaling $a=\TT/N^{\frac{1}{2b}}$ removes the $N$-dependence inside the normalised incomplete Gamma-functions. The existence of the limiting sums can be seen for example using the quotient criterion.  Combined with the large first argument asymptotic \cite[Eq.(8.11.5)]{olver2010nist}
\begin{equation}
P(\alpha,z)\sim \frac{1}{2\pi\alpha}e^{\alpha-z}\frac{z^\alpha}{\alpha^\alpha},
\end{equation}
together with $Q(\alpha,z)=1-P(\alpha,z)$, this completes the proof.
\end{proof}

We now prove Proposition~\ref{Prop_number variance products}.

\begin{proof}[Proof of Proposition~\ref{Prop_number variance products}]

Let $\beta=2$. Note that by \eqref{Q products}, we have
\begin{equation}
e^{-NQ_N(z)}=\MeijerG{m , 0}{0,  m}{ -  \\  \bfs{0}}{ N^m |z|^2 }.
\end{equation}
It can be shown that
\begin{equation} \label{Norm products}
h_j=\int_\C |z|^{2j} e^{-NQ_N(z)}\,dA(z)= \Big( \int_\C |z|^{2j} e^{-N|z|^2} \,dA(z) \Big)^m= \Big( \frac{j!}{N^{j+1}} \Big)^m,
\end{equation}
see e.g.\ \cite[Subsection 4.1]{MR2993423}.
On the other hand, we have
\begin{align}
h_{j,1}(a)&= 2\int_0^a  \MeijerG{m , 0}{0,  m}{ -  \\  \bfs{0}}{ N^m r^2 } \,r^{2j+1}\,dr
= N^{-m(j+1)}
\int_{0}^{N^m a^2}  \MeijerG{m , 0}{0,  m}{ -  \\  \bfs{0}}{ u } \,u^{j}\,du, \\
&=  N^{-m(j+1)}  \MeijerG{m , 1}{1,  m+1}{ 1  \\  \bfs{j+1},0}{ N^ma^2 },
\end{align}
upon  change of variables $u = N^m r^2$, and the integral can be found in \cite[20.5.(1)]{BatemanV2}
This yields our first result
\begin{equation}
\frac{h_{j,1}(a)}{h_j}=\frac{1}{(j!)^m} \MeijerG{m , 1}{1,  m+1}{ 1  \\  \bfs{j+1},0}{ N^ma^2 },
\label{hj1Prod}
\end{equation}
leading to the expected number at $\beta=2$.
For $h_{j,2}(a)$ we may use Proposition~\ref{prop:Meijer G identity}, to see that
\begin{equation} \label{Meijer G sum j!}
    (j!)^m - \MeijerG{m , 1}{1,  m + 1}{ 1  \\  \bfs{j+1}, 0}{ z } = \MeijerG{m+1 , 0}{1, m+1}{1 \\ 0, \bfs{j+1}}{ z },
\end{equation}
which yields
\begin{equation}
\frac{h_{j,2}(a)}{h_j}=\frac{1}{(j!)^m} \MeijerG{m +1, 0}{1,  m+1}{ 1  \\  0,\bfs{j+1}}{ N^ma^2 },
\label{hj2Prod}
\end{equation}
In summary, this leads to the variance
\begin{equation}
V_N^{(2)}(a)= \sum_{j=1}^{N} \frac{1}{(j-1)!^{2 m}} \MeijerG{m , 1}{1,  m + 1}{ 1  \\  \bfs{j }, 0}{ N^m a^2 } \MeijerG{m+1 , 0}{1, m+1}{1 \\ 0, \bfs{j}}{ N^m a^2 }.
\end{equation}
For $\beta=4$ a simple shift $N\to2N$ and $j\to2j+1$ results into
\begin{equation}
h_{2j+1}=\Big(\frac{(2j+1)!}{(2N)^{2j+2}}\Big)^m, \qquad
h_{2j+1,1}(a) = \frac{1}{(2N)^{m(2j+2)}}\MeijerG{m , 1}{1,  m+1}{ 1  \\  \bfs{2j+2},0}{ (2N)^ma^2 },
\end{equation}
and thus to the claimed expressions for the expected number and variance for $\beta=4$.

In the microscopic regime we set $N^m a^2 = \TT^2$, to obtain
\begin{equation}
\lim_{N\to\infty }V_N^{(2)}(a)=
\sum_{j=1}^{\infty}  \frac{1}{ (j-1)!^{2m} }  \MeijerG{m , 1}{1, m+1}{1 \\ \bfs{j}, 0}{\TT^2}\MeijerG{m+1 , 0}{1, m+1}{1 \\ 0,  \bfs{2j} }{\TT^2},
\end{equation}
and likewise for the other quantities. This completes the proof.
\end{proof}

\subsection{Proof of Propositions~\ref{Prop_number variance truncated unitary} and \ref{Prop_number variance truncated unitary strong}}

We first prove Proposition~\ref{Prop_number variance truncated unitary} which is at weak non-unitarity.

\begin{proof}[Proof of Proposition~\ref{Prop_number variance truncated unitary}]

For $a \le 1$, we have
\begin{equation}
\frac{h_{j,1}(a)}{h_j}=I_{a^2}(j+1,c+1)=\Prob(X \le c),
\end{equation}
where $X \sim B(j+c+1,1-a^2).$ Here $B$ is the binomial distribution.

We present the proof for $\beta=2$. Then $\beta=4$ follows along the same lines.
Note that if $a\in (0,1)$ is fixed, we have for the bulk limit
\begin{equation}
\begin{split}
\lim_{N \to \infty} V_N^{(2)}(a) & = \sum_{j=0}^\infty I_{a^2}(j+1,c+1) \Big( 1-I_{a^2}(j+1,c+1) \Big)
\\
&= \sum_{j=0}^\infty I_{a^2}(j+1,c+1) I_{1-a^2}(c+1,j+1).
\end{split}
\end{equation}

On the other hand, for  the edge limit with rescaling
\begin{equation}
a=1-\frac{\SS}{2N}, \qquad j=t N,
\end{equation}
the binomial distribution $X$ tends to Poisson distribution with intensity $\lambda=\SS t$.
Thus we have
\begin{equation}
\Prob(X \le c) \sim Q(c+1, \SS t).
\end{equation}
Therefore we obtain
\begin{equation}
V_N^{(2)}(a) \sim N \int_0^1 P(c+1,\SS t) Q(c+1,\SS t)\,dt= \frac{N}{\SS} \int_0^{\SS} P(c+1,u)Q(c+1,u)\,du,
\end{equation}
which completes the proof.
\end{proof}

Now we prove Proposition~\ref{Prop_number variance truncated unitary strong} which is at strong non-unitarity.

\begin{proof}[Proof of Propositions~\ref{Prop_number variance truncated unitary strong}]
Note that for $Q$ given by \eqref{Q truncated unitary strong},
\begin{equation}
e^{-NQ_N(z)}= \Big( 1-\frac{|z|^2}{1+\tilde{c}}\Big)^{\tilde{c}N \frac{\beta}{2} }.
\end{equation}
Thus for $a \le \sqrt{1+\tilde{c}},$ we have
\begin{equation}
\frac{h_{j,1}(a) }{ h_j }= I_{ \frac{a^2}{1+\tilde{c}} }\Big(j+1,\frac{\beta}{2}\tilde{c}N+1\Big)=1- I_{ 1-\frac{a^2}{1+\tilde{c}} }\Big(\frac{\beta}{2}\tilde{c}N+1,j+1\Big) .
\end{equation}
Then, due to the Poisson approximation of the binomial distribution, we have that for $j$ fixed
\begin{equation}
\frac{h_{j,1}(a) h_{j,2}(a)  }{ h_j^2 } \sim P\Big( j+1, \frac{\beta}{2} \TT^2\Big) Q\Big( j+1, \frac{\beta}{2} \TT^2\Big),
\end{equation}
after parametrising  $a=\sqrt{\frac{1+\tilde{c}}{N\tilde{c}}}\TT$. Thus we obtain
\begin{equation}
V_N^{(\beta)}\Big( \sqrt{\frac{1+\tilde{c}}{N\tilde{c}}}\TT\Big) \sim \sum_{j=1}^\infty P\Big( \frac{\beta}{2} j, \frac{\beta}{2} \TT^2\Big) Q\Big( \frac{\beta}{2} j, \frac{\beta}{2} \TT^2\Big),
\end{equation}
which completes the proof.
\end{proof}

\appendix

\section{An identity for Meijer \texorpdfstring{$G$}{G}-functions} \label{Appendix_Meijer G}

The Meijer $G$-function is defined by the Mellin-Barnes integral representation
\begin{equation} \label{Meijer G}
\MeijerG {m,n} {p,q} { a_1,\dots,a_p\\
   b_1,\dots,b_q| } {z}:= \frac{1}{2\pi i} \oint_L  z^s \frac{     \prod_{j=1}^{m}\Gamma(b_j-s)   \prod_{j=1}^{n}\Gamma(1-a_j+s)      }{  \prod_{j=1+m}^{q}\Gamma(1-b_j+s) \prod_{j=1+n}^{p}\Gamma(a_j-s)       }\,ds,
\end{equation}
for $p \geq n\geq 0$ and $q \geq m \geq 0$. Here, the integration contour $L$ in $\C$ depends on the poles of the gamma functions in the numerator, see e.g.\ \cite[Chapter 16]{olver2010nist}.
Using this representation we will prove the proposition below, that expresses a particular linear combination of Meijer $G$-functions in terms of a finite series.
A special case of \eqref{eq:Meijer G identity} is then used in the proof of Proposition~\ref{Prop_number variance products}.

Recall that the integration path $L$ is chosen such that the poles of $\Gamma(b_j - s)$ ($j = 1, \dots, m$) lie to the right of $L$, and those of $\Gamma(1 - a_j + s)$ ($j = 1, \dots, n$) lie to the left, see e.g.~\cite[Section 16.17]{olver2010nist} for details.
Since $\Gamma(z)$ has (simple) poles at $z = 0, -1, -2, \dots$, here we must consider the points $s = b_j + n$ and $s = a_j - 1 - n$ for $n = 0, 1, 2, \dots$.
Furthermore, we assume that the parameters of the Meijer $G$-function satisfy $a_k - b_j \notin \mathbb{N}_{>0}$ for all $k = 1, \dots, n$ and $j = 1, \dots, m$, so that the poles of $\Gamma(b_j - s)$ do not coincide with those of $\Gamma(1 - a_j + s)$. In the proposition below there is one exception to this, for the last indices
to satisfy $a_{n+1} - b_{m+1} \in \mathbb{N}_{>0}$.

\begin{prop} \label{prop:Meijer G identity}
Let $p \geq n\geq 0$, $q \geq m\geq0$ and $\bfs{a} = \{ a_1, \dots, a_{p + 1} \}$ and $\bfs{b} = \{ b_1, \dots, b_{q + 1} \}$ such that $a_{n + 1} - b_{m + 1}=k$ is a positive integer, then
\begin{equation} \label{eq:Meijer G identity}
\begin{split}
    &\quad \MeijerG{m , n + 1}{p + 1,  q + 1}{ \bfs{a}  \\  \bfs{b} }{ z } - (-1)^{a_{n + 1} - b_{m + 1}} \MeijerG{m + 1, n}{p + 1,  q + 1}{ \bfs{a}  \\  \bfs{b} }{ z } \\
    &= \sum_{c = b_{m + 1}}^{a_{n + 1} - 1} (-1)^{a_{n + 1} - 1 - c} \, \frac{
        \prod_{l = 1}^{m} \Gamma(b_l - c) \prod_{l = 1}^{n} \Gamma(1 - a_l + c)
    }{
        \prod_{l = m + 1}^{q + 1} \Gamma(1 - b_l + c) \prod_{l = n + 1}^{p + 1} \Gamma(a_l - c)
    } \, z^c,
\end{split}
\end{equation}
where the sum rums from $c=b_{m+1}, b_{m+1}+1,\ldots,a_{n+1}-1$ in integer steps.
\end{prop}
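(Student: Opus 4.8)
The plan is to work directly with the Mellin--Barnes representation \eqref{Meijer G} and to exploit the fact that the two Meijer $G$-functions on the left of \eqref{eq:Meijer G identity} have integrands differing only in the treatment of the single pair $(a_{n+1},b_{m+1})$. First I would isolate the factor common to both,
\begin{equation}
F(s) := \frac{\prod_{l=1}^{m}\Gamma(b_l-s)\,\prod_{l=1}^{n}\Gamma(1-a_l+s)}{\prod_{l=m+2}^{q+1}\Gamma(1-b_l+s)\,\prod_{l=n+2}^{p+1}\Gamma(a_l-s)}.
\end{equation}
Reading off \eqref{Meijer G}, the integrand of $\MeijerG{m,n+1}{p+1,q+1}{\bfs{a}\\\bfs{b}}{z}$ is $z^{s}F(s)G_1(s)$ with $G_1(s)=\Gamma(1-a_{n+1}+s)/\Gamma(1-b_{m+1}+s)$, while the integrand of $\MeijerG{m+1,n}{p+1,q+1}{\bfs{a}\\\bfs{b}}{z}$ is $z^{s}F(s)G_2(s)$ with $G_2(s)=\Gamma(b_{m+1}-s)/\Gamma(a_{n+1}-s)$; here $a_{n+1}$ has merely been reclassified from the numerator (upper-left) group to the denominator group, and $b_{m+1}$ conversely.

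Next comes the key algebraic step. Since $a_{n+1}-b_{m+1}=k\in\mathbb{N}_{>0}$, the functional equation $\Gamma(x+k)=\prod_{j=0}^{k-1}(x+j)\,\Gamma(x)$ collapses both ratios to the \emph{same} rational function:
\begin{equation}
G_1(s)=\frac{1}{\prod_{c=b_{m+1}}^{a_{n+1}-1}(s-c)}=(-1)^{k}\,G_2(s),
\end{equation}
the sign arising from $\Gamma(a_{n+1}-s)=\prod_{j=0}^{k-1}(b_{m+1}-s+j)\,\Gamma(b_{m+1}-s)$. Hence $z^{s}F(s)G_2(s)=(-1)^{a_{n+1}-b_{m+1}}\,z^{s}F(s)G_1(s)$, so that upon forming the combination in \eqref{eq:Meijer G identity} the factor $(-1)^{a_{n+1}-b_{m+1}}$ is absorbed and the left-hand side becomes the difference of two contour integrals of one and the same integrand $\Phi(s):=z^{s}F(s)G_1(s)$, over the contour $L$ belonging to the first function and the contour $L'$ belonging to the second.

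Then I would locate these contours relative to the poles of $\Phi$. The $k$ simple poles of $G_1$ lie at $s=c$ for $c=b_{m+1},\dots,a_{n+1}-1$; in the first function they stem from $\Gamma(1-a_{n+1}+s)$ and so sit to the \emph{left} of $L$, whereas in the second they stem from $\Gamma(b_{m+1}-s)$ and sit to the \emph{right} of $L'$. All remaining poles (those of $F$) are classified identically for $L$ and $L'$. Consequently $L$ and $L'$ coincide except that $L$ passes to the right and $L'$ to the left of $\{b_{m+1},\dots,a_{n+1}-1\}$, and the residue theorem gives $\frac{1}{2\pi i}\big(\oint_L-\oint_{L'}\big)\Phi=\sum_{c=b_{m+1}}^{a_{n+1}-1}\Res_{s=c}\Phi(s)$. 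A direct residue computation then supplies
\begin{equation}
\Res_{s=c}G_1(s)=\frac{(-1)^{a_{n+1}-1-c}}{\Gamma(1-b_{m+1}+c)\,\Gamma(a_{n+1}-c)},
\end{equation}
where $\Gamma(1-b_{m+1}+c)=(c-b_{m+1})!$ and $\Gamma(a_{n+1}-c)=(a_{n+1}-1-c)!$ are precisely the two denominator factors that were removed in passing to $F$; reinstating them in $F(c)$ reconstructs the full Gamma-ratio and the sign $(-1)^{a_{n+1}-1-c}$ of the right-hand side of \eqref{eq:Meijer G identity}.

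The hard part will be the contour bookkeeping rather than any single estimate: I must verify that $L-L'$ is positively oriented around the $k$ enclosed poles, so that the residues enter with a plus sign and reproduce the stated alternating signs, and that the joining arcs at $\pm i\infty$ contribute nothing, which is exactly where the standard convergence conditions for the Meijer $G$-function (Stirling decay of the Gamma-ratio along vertical lines) are needed; the hypothesis $a_k-b_j\notin\mathbb{N}_{>0}$ for the remaining indices guarantees that no spurious pole collisions occur and that the poles being crossed are simple. The remaining verification that the residue carries exactly the factorials and the alternating sign is routine, and I would close by specialising to $m=1$, $n=0$, which reduces \eqref{eq:Meijer G identity} to the incomplete-gamma relation $(j-1)!^{-1}\,G-$identity behind \eqref{Meijer G sum j!}, as a consistency check on all signs and indices.
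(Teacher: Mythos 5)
Your proposal is correct and follows essentially the same route as the paper: isolate the common factor, show the two leftover Gamma-ratios coincide up to the sign $(-1)^{k}$, recognise that the two Mellin--Barnes contours differ only in passing on opposite sides of the $k$ simple poles $s=b_{m+1},\dots,a_{n+1}-1$, and evaluate the difference by residues, which reproduce exactly the stated sign and factorial factors. The only (cosmetic) deviation is that you derive the key identity $\Gamma(1-a_{n+1}+s)/\Gamma(1-b_{m+1}+s)=(-1)^{k}\,\Gamma(b_{m+1}-s)/\Gamma(a_{n+1}-s)$ from the Gamma recurrence, collapsing both ratios to the explicit rational function $1/\prod_{c}(s-c)$, whereas the paper obtains it via the reflection formula; your variant makes the pole locations and residues slightly more transparent but is otherwise the same argument.
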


\begin{proof}
First, note that the integrands of the two Meijer $G$-functions have the common factor
\begin{equation*}
   \gamma(z; s) := \frac{
        \prod_{l = 1}^{m} \Gamma(b_l - s) \prod_{l = 1}^{n} \Gamma(1 - a_l + s)
    }{
        \prod_{l = m + 2}^{q + 1} \Gamma(1 - b_l + s) \prod_{l = n + 2}^{p + 1} \Gamma(a_l - s)
    } \, z^s.
\end{equation*}
The factors that differ are $\Gamma(1 - a_{n + 1} + s) / \Gamma(1 - b_{m + 1} + s)$
present in  the first Meijer $G$-function $G^{m , n + 1}_{p + 1,  q + 1}$ with contour $L_1$,
and the factor $\Gamma(b_{m + 1} - s) / \Gamma(a_{n + 1} - s)$
in the second Meijer $G$-function  $G^{m + 1, n}_{p + 1,  q + 1}$ with contour $L_2$.
Due to the condition that $a_{n + 1} - b_{m + 1}=k\in \mathbb{N}_{>0}$ is an integer, it is not difficult to see that these two ratios of Gamma-functions have only a finite number of poles, located  at the very same positions,
\begin{equation}
s= b_{m + 1}, b_{m + 1} + 1, \dots, a_{n + 1} - 2, a_{n + 1} - 1,
\label{poles}
\end{equation}
because poles and zeros cancel otherwise. In fact these two ratios are proportional. To see this first notice that
\begin{equation*}
    \sin( \pi (a_{n + 1} - s) ) =  \sin( \pi (b_{m + 1} +k- s) ) = (-1)^{k
    } \sin( \pi (b_{m + 1} - s) ).
\end{equation*}
Using the relation \cite[Eq.(5.5.3)]{olver2010nist} for the Gamma function valid at non-integer argument, we thus obtain
\begin{equation} \label{eq:Gamma identity}
    \frac{ \Gamma(1 - a_{n + 1} + s) }{ \Gamma(1 - b_{m + 1} + s) }
    = (-1)^{a_{n + 1} - b_{m + 1}} \frac{ \Gamma(b_{m + 1} - s) }{ \Gamma(a_{n + 1} - s) }.
\end{equation}
Thus the two Meijer $G$-functions in \eqref{eq:Meijer G identity} have the same integrand (up to a sign).

The integrals, however, do not simply add up, because their integration paths $L_1$ and $L_2$ are slightly different: following the definition \eqref{Meijer G}, the poles \eqref{poles}
lie to the left of $L_1$ in the first function in \eqref{eq:Meijer G identity}, but to the right of $L_2$ in the second.
Because all other factors and indices agree the locations of all other poles with respect to $L_1$ and $L_2$ agree.

In order to determine the left-hand side of \eqref{eq:Meijer G identity}, we modify  $L_1$ by introducing clockwise contours around each pole \eqref{poles} and adding appropriate lines to connect with $L_{1}$.
This new path $L_{1}'$ can now be deformed continuously into $L_{2}$ without crossing any other poles.
We obtain
\begin{equation*}
\begin{split}
    &\quad \MeijerG{m , n + 1}{p + 1,  q + 1}{ \bfs{a}  \\  \bfs{b} }{ z } + (-1)^{a_{n + 1} - b_{m + 1}} \MeijerG{m + 1, n}{p + 1,  q + 1}{ \bfs{a}  \\  \bfs{b} }{ z } \\
    &= \frac{1}{2 \pi i} \int_{L_{1}} \gamma(z; s) \, \frac{\Gamma(1 - a_{n + 1} + s)}{\Gamma(1 - b_{m + 1} + s)} \, ds - (-1)^{a_{n + 1} - b_{m + 1}} \frac{1}{2 \pi i} \int_{L_{2}} \gamma(z; s) \, \frac{\Gamma(b_{m + 1} - s)}{\Gamma(a_{n + 1} - s)} \, ds \\
    &= \frac{1}{2 \pi i} \Big( \int_{L_{1}'} \gamma(z; s) \, \frac{\Gamma(1 - a_{n + 1} + s)}{\Gamma(1 - b_{m + 1} + s)} \, ds - \sum_{c = b_{m + 1}}^{a_{n + 1} - 1} \oint_{c} \gamma(z; s) \, \frac{\Gamma(1 - a_{n + 1} + s)}{\Gamma(1 - b_{m + 1} + s)} \, ds  \Big) \\
    &\quad - (-1)^{a_{n + 1} - b_{m + 1}} \frac{1}{2 \pi i} \int_{L_{2}} \gamma(z; s) \, \frac{\Gamma(b_{m + 1} - s)}{\Gamma(a_{n + 1} - s)} \, ds \\
        &= - \frac{1}{2 \pi i} \sum_{c = b_{m + 1}}^{a_{n + 1} - 1} \oint_{c} \gamma(z; s) \, \frac{ \Gamma(1 - a_{n + 1} + s) }{ \Gamma(1 - b_{m + 1} + s) } \, ds,
\end{split}
\end{equation*}
where we used \eqref{eq:Gamma identity} to cancel the two integrals over $L_2$.
The remaining sum is evaluated with the residue theorem.
Note that the contours around the poles at $s=c$ are clockwise, and that all factors except $\Gamma(1 - a_{n + 1} + s)$, which  has a simple pole at $c$, are smooth near $s = c$.

Furthermore the residue of $\Gamma(z)$ at $z = - n$ is $(-1)^n / \Gamma(n + 1)$, see \cite[5.1(i)]{olver2010nist}.
Thus for each summand we have
\begin{equation*}
\begin{split}
    &\quad -\frac{1}{2 \pi i} \oint_{c} \gamma(z; s) \, \frac{ \Gamma(1 - a_{n + 1} + s) }{ \Gamma(1 - b_{m + 1} + s) } \, ds
    = \gamma(z; c) \, \frac{\Res_{s = c}(\Gamma(1 - a_{n + 1} + s))}{\Gamma(1 - b_{m + 1} + c)} \\
    &= \frac{
        \prod_{l = 1}^{m} \Gamma(b_l - c) \prod_{l = 1}^{n} \Gamma(1 - a_l + c)
    }{
        \prod_{l = m + 2}^{q + 1} \Gamma(1 - b_l + c) \prod_{l = n + 2}^{p + 1} \Gamma(a_l - c)
    } \ \frac{(-1)^{a_{n + 1} - 1 - c}}{\Gamma(1 - b_{m + 1} + c) \Gamma(a_{n + 1} - c)} \, z^c
\end{split}
\end{equation*}
and merging the fractions by adjusting the indices in the denominator yields the summands in \eqref{eq:Meijer G identity}.
\end{proof}

\begin{rmks*}
We give some remarks on special cases of Proposition~\ref{prop:Meijer G identity}.
\begin{enumerate}
    \item If $a_{n + 1} - b_{m + 1}$ is a negative integer or zero, then \eqref{eq:Meijer G identity} still holds with the convention that the empty sum is zero.
    In this case no surgery of the integration path is necessary and it is clear from \eqref{eq:Gamma identity} that the end result must be zero.
    This case is known in the literature (e.g.~\cite[Section 5.3.2, Eq.~(7)]{luke1969special}).

    \item In the proof of Proposition~\ref{Prop_number variance products} we need the case $b_{m + 1} = 0$, $a_{n + 1} = 1$.
    Here the sum on the right-hand side consists of a single term with $c = 0$ and thus it is independent of $z$:
    \begin{equation}
        \MeijerG{m , n + 1}{p + 1,  q + 1}{ \bfs{a}  \\  \bfs{b}}{ z } + \MeijerG{m + 1, n}{p + 1,  q + 1}{ \bfs{a}  \\  \bfs{b} }{ z }
    = \frac{
        \prod_{l = 1}^{m} \Gamma(b_l) \prod_{l = 1}^{n} \Gamma(1 - a_l)
    }{
        \prod_{l = m + 1}^{q+1} \Gamma(1 - b_l) \prod_{l = n + 1}^{p+1} \Gamma(a_l)
    }.
    \end{equation}
    \item In general, when $a_{n+1}>b_{m+1}\geq0$ are integers, the right hand side of \eqref{eq:Meijer G identity} becomes a polynomial.
    \end{enumerate}
\end{rmks*}

For the small argument limit $\TT \to 0$ of the results in Proposition~\ref{Prop_number variance products} we need the following small argument asymptotic of the Meijer $G$-function.
\begin{prop}\label{prop:Meijer G asymptotics near zero}
    Let $m \geq 0$ and $j \geq 1$. Then as $z \to 0$ we have
    \begin{equation*}
        \MeijerG{m, 1}{1, m+1}{ 1 \\ \bfs{j}, 0 }{z}
        = \frac{(-1)^{m - 1}}{j \, (m - 1)!} (\log z)^{m - 1} z^j + \begin{cases}
            O(z^{j + 1}), & m = 1, \\
            O( (\log z)^{m - 2} z^j), & m \geq 2.
        \end{cases}
    \end{equation*}
\end{prop}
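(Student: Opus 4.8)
The plan is to evaluate the defining Mellin--Barnes integral \eqref{Meijer G} directly. Reading off the parameters of $\MeijerG{m, 1}{1, m+1}{ 1 \\ \bfs{j}, 0 }{z}$, the numerator carries the $m$ factors $\Gamma(j-s)$ coming from the repeated index $\bfs{j}$, together with the single factor $\Gamma(1-a_1+s)=\Gamma(s)$ from $a_1=1$, while the denominator contributes only $\Gamma(1-b_{m+1}+s)=\Gamma(1+s)$ from the last bottom index $b_{m+1}=0$ (the factor $\prod_{l=n+1}^{p}\Gamma(a_l-s)$ is empty since $n=p=1$). Using $\Gamma(s)/\Gamma(1+s)=1/s$, the function collapses to
\begin{equation*}
\MeijerG{m, 1}{1, m+1}{ 1 \\ \bfs{j}, 0 }{z}=\frac{1}{2\pi i}\int_L z^s\,\frac{\Gamma(j-s)^m}{s}\,ds .
\end{equation*}
The integrand has a simple pole at $s=0$, lying to the left of $L$, and poles of order $m$ at $s=j,j+1,j+2,\dots$, lying to the right; since $j\ge 1$ the contour may be taken as a vertical line with $0<\re s<j$, so the pole at $s=0$ never interferes.

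First I would justify closing $L$ to the right for $z\to 0^+$. For real $z\in(0,1)$ one has $|z^{s}|=z^{\re s}$, which decays as $\re s\to+\infty$, and Stirling's asymptotics bound $\Gamma(j-s)^m/s$ on large semicircles that avoid the poles, so the arc contribution vanishes. Because the resulting closed contour is traversed clockwise, the integral equals $-\sum_{k\ge0}\Res_{s=j+k}$. This yields an expansion in ascending powers of $z$ whose dominant term is the residue at the leftmost right pole $s=j$.

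Next I would compute $\Res_{s=j}$. Writing $u=s-j$ and using $\Gamma(j-s)=\Gamma(-u)=-\Gamma(1-u)/u$ gives the clean identity $(s-j)^m\Gamma(j-s)^m=(-1)^m\Gamma(1-u)^m$, hence
\begin{equation*}
\Res_{s=j}\Big[z^s\,\frac{\Gamma(j-s)^m}{s}\Big]=\frac{(-1)^m z^j}{(m-1)!}\,\frac{d^{m-1}}{du^{m-1}}\Big[\Gamma(1-u)^m\,\frac{e^{u\log z}}{j+u}\Big]_{u=0}.
\end{equation*}
Since $\tfrac{d^k}{du^k}e^{u\log z}=(\log z)^k e^{u\log z}$, the Leibniz rule shows the highest power $(\log z)^{m-1}$ arises by differentiating the exponential all $m-1$ times while evaluating the analytic factor $\Gamma(1-u)^m/(j+u)$ at $u=0$, where it equals $1/j$. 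Including the overall minus sign from the clockwise orientation and $(-1)^{m+1}=(-1)^{m-1}$ reproduces exactly the stated leading term $\frac{(-1)^{m-1}}{j\,(m-1)!}(\log z)^{m-1}z^j$; I would cross-check this against the paper's own identity \eqref{Meijer G PQ}, which at $m=1$ forces $\MeijerG{1,1}{1,2}{1\\ j,0}{z}=\gamma(j,z)\sim z^j/j$, matching $(\log z)^0 z^j/j$.

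Finally the error terms come from two sources. The subleading Leibniz contributions at $s=j$, where one or more derivatives fall on $\Gamma(1-u)^m/(j+u)$ instead of the exponential, carry at most $(\log z)^{m-2}$ while keeping the factor $z^j$, giving $O((\log z)^{m-2}z^j)$ once $m\ge2$; when $m=1$ the residue is exactly $z^j/j$ with no logarithm, so the first correction is instead the order-$m$ residue at $s=j+1$, of size $O(z^{j+1})$. The main technical point to pin down is the uniform decay estimate on $\Gamma(j-s)^m/s$ that legitimises discarding the right-hand arcs, together with careful bookkeeping of the clockwise sign and of the powers of $\log z$ in the Leibniz expansion; the remaining steps are routine residue computations.
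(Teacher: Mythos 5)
Your proposal is correct and takes essentially the same route as the paper's proof: both reduce the Mellin--Barnes integrand to $z^s\,\Gamma(j-s)^m/s$, represent the function as minus the sum of residues at the right-hand poles $s=j,j+1,\dots$ (the paper's \eqref{eq:MeijerG series}), and extract the top power of $\log z$ from the order-$m$ pole at $s=j$ via the Leibniz rule, with the $m=1$ correction coming from the pole at $s=j+1$. Your substitution $u=s-j$ together with $\Gamma(-u)=-\Gamma(1-u)/u$ is merely a tidier packaging of the residue computation than the paper's triple-sum generalised Leibniz expansion, and your explicit contour-closing justification fills in a step the paper leaves implicit.
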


\begin{proof}
By the similar computation in the proof of Proposition~\ref{prop:Meijer G identity}, using the residue theorem, we have
\begin{equation} \label{eq:MeijerG series}
    \MeijerG{m, 1}{1, m+1}{ 1 \\ \bfs{j}, 0 }{z}
    = - \sum_{c = j}^{\infty} \Res_{s = c}\Big( \Gamma(j - s)^m \frac{z^s}{s} \Big),
\end{equation}
where $c \in \{j, j+1, \dots \}$ are the poles of the function $\Gamma(j - s)$.
It then follows from the generalised Leibniz rule for derivatives that
\begin{equation*}
    \Res_{s = c} \Big( \Gamma(j - s)^m \frac{z^s}{s} \Big)
    = \sum_{k = 0}^{m - 1} \sum_{l = 0}^{k} \sum_{r = 0}^{l} \binom{m}{k + 1} \frac{(-1)^{l - r}}{(k - l)! \, r!} \lim_{s \to c} \Big[ (s - c)^{k + 1} \partial_{s}^{k - l} \Gamma(j - s)^m \Big] \frac{(\log z)^r z^c}{c^{1 + l - r}}.
\end{equation*}
As $z \to 0$, only the summand with $r = l = k = m - 1$ contributes to the leading term. Thus we obtain that as $z \to 0$,
\begin{equation*}
    \Res_{s = c} \Big( \Gamma(j - s)^m \frac{z^s}{s} \Big)
    = \frac{1}{(m - 1)!} \Big( \frac{(-1)^{c - j + 1}}{(c - j)!} \Big)^m \frac{(\log z)^{m - 1} z^c}{c} + \begin{cases}
        0, & m = 1, \\
        O((\log z)^{m - 2} z^c), & m \geq 2.
    \end{cases}
\end{equation*}
Furthermore, the leading term in \eqref{eq:MeijerG series} comes from the $c = j$ summand.
Simplifying the resulting expression leads to the claimed asymptotic formula.
Note that in the $m = 1$ case the first subleading term is given by the $c = j + 1$ summand.
\end{proof}

%%%%%%%%%%bibliography%%%%%%%%%%%%%%%%%%%%%%%%%%%%%%%%%%%
\bibliographystyle{abbrv}
\bibliography{RMTbib}

\begin{thebibliography}{10}

\bibitem{adhikari2018hole}
K.~Adhikari.
\newblock Hole probabilities for {$\beta$}-ensembles and determinantal point
  processes in the complex plane.
\newblock {\em Electron. J. Probab.}, 23:1--21, 2018.

\bibitem{AR2016hole}
K.~Adhikari and N.~K. Reddy.
\newblock Hole probabilities for finite and infinite {G}inibre ensembles.
\newblock {\em Int. Math. Res. Not.}, pages 6694--6730, 2016.

\bibitem{MR2993423}
G.~Akemann and Z.~Burda.
\newblock Universal microscopic correlation functions for products of
  independent {G}inibre matrices.
\newblock {\em J. Phys. A}, 45(46):465201, 18, 2012.

\bibitem{akemann2021scaling}
G.~Akemann, S.-S. Byun, and N.-G. Kang.
\newblock Scaling limits of planar symplectic ensembles.
\newblock {\em SIGMA Symmetry Integrability Geom. Methods Appl.}, 18:Paper No.
  007, 40, 2022.

\bibitem{akemann2021skew}
G.~Akemann, M.~Ebke, and I.~Parra.
\newblock Skew-orthogonal polynomials in the complex plane and their
  {B}ergman-like kernels.
\newblock {\em Comm. Math. Phys.}, 389:621--659, 2022.

\bibitem{Akemann_2015}
G.~Akemann and J.~Ipsen.
\newblock Recent exact and asymptotic results for products of independent
  random matrices.
\newblock {\em Acta Physica Polonica B}, 46(9):1747, 2015.

\bibitem{akemann2014permanental}
G.~Akemann, J.~R. Ipsen, and E.~Strahov.
\newblock Permanental processes from products of complex and quaternionic
  induced {G}inibre ensembles.
\newblock {\em Random Matrices Theory Appl.}, 3(04):1450014, 2014.

\bibitem{MR2536111}
G.~Akemann, M.~J. Phillips, and L.~Shifrin.
\newblock Gap probabilities in non-{H}ermitian random matrix theory.
\newblock {\em J. Math. Phys.}, 50(6):063504, 32, 2009.

\bibitem{akemann2013hole}
G.~Akemann and E.~Strahov.
\newblock Hole probabilities and overcrowding estimates for products of complex
  {G}aussian matrices.
\newblock {\em J. Stat. Phys.}, 151(6):987--1003, 2013.

\bibitem{ameur2021szego}
Y.~Ameur and J.~Cronvall.
\newblock Szeg\"{o} type asymptotics for the reproducing kernel in spaces of
  full-plane weighted polynomials.
\newblock {\em preprint arXiv:2107.11148}, 2021.

\bibitem{ameur2018random}
Y.~Ameur, N.-G. Kang, and S.-M. Seo.
\newblock The random normal matrix model: insertion of a point charge.
\newblock {\em Potential Anal. (online), arXiv:1804.08587}, 2021.

\bibitem{BatemanV2}
H.~Bateman.
\newblock {\em Tables of integral transforms [volume II]}, volume~2.
\newblock McGraw-Hill Book Company, 1954.

\bibitem{MR2934715}
F.~Benaych-Georges and F.~Chapon.
\newblock Random right eigenvalues of {G}aussian quaternionic matrices.
\newblock {\em Random Matrices Theory Appl.}, 1(2):1150009, 18, 2012.

\bibitem{2010PhRvE..81d1132B}
Z.~{Burda}, R.~A. {Janik}, and B.~{Waclaw}.
\newblock {Spectrum of the product of independent random Gaussian matrices}.
\newblock {\em Phys. Rev. E}, 81(4):041132, Apr. 2010.

\bibitem{byun2022almost}
S.-S. Byun and C.~Charlier.
\newblock On the almost-circular symplectic induced {G}inibre ensemble.
\newblock {\em preprint arXiv:2206.06021}, 2022.

\bibitem{byun2022characteristic}
S.-S. Byun and C.~Charlier.
\newblock On the characteristic polynomial of the eigenvalue moduli of random
  normal matrices.
\newblock {\em preprint arXiv:2205.04298}, 2022.

\bibitem{chafai2014first}
D.~Chafa{\"\i}, N.~Gozlan, and P.-A. Zitt.
\newblock First-order global asymptotics for confined particles with singular
  pair repulsion.
\newblock {\em Ann. Appl. Probab.}, 24(6):2371--2413, 2014.

\bibitem{charlier2021large}
C.~Charlier.
\newblock Large gap asymptotics on annuli in the random normal matrix model.
\newblock {\em preprint arXiv:2110.06908}, 2021.

\bibitem{charlier2022asymptotics}
C.~Charlier.
\newblock Asymptotics of determinants with a rotation-invariant weight and
  discontinuities along circles.
\newblock {\em Adv. Math.}, 408:108600, 2022.

\bibitem{CL22}
C.~Charlier and J.~Lenells.
\newblock Exponential moments for disk counting statistics of random normal
  matrices in the critical regime.
\newblock {\em preprint arXiv:2205.00721}, 2022.

\bibitem{chau1998structure}
L.-L. Chau and O.~Zaboronsky.
\newblock On the structure of correlation functions in the normal matrix model.
\newblock {\em Commun. Math. Phys.}, 196(1):203--247, 1998.

\bibitem{Cooper_2008}
N.~Cooper.
\newblock Rapidly rotating atomic gases.
\newblock {\em Adv. Phys.}, 57(6):539--616, 2008.

\bibitem{cunden2016large}
F.~D. Cunden, F.~Mezzadri, and P.~Vivo.
\newblock Large deviations of radial statistics in the two-dimensional
  one-component plasma.
\newblock {\em J. Stat. Phys.}, 164(5):1062--1081, 2016.

\bibitem{Dean_2019}
D.~S. Dean, P.~L. Doussal, S.~N. Majumdar, and G.~Schehr.
\newblock Noninteracting fermions in a trap and random matrix theory.
\newblock {\em J. Phys. A: Math. Gen.}, 52(14):144006, 2019.

\bibitem{fenzl2022precise}
M.~Fenzl and G.~Lambert.
\newblock Precise deviations for disk counting statistics of invariant
  determinantal processes.
\newblock {\em Int. Math. Res. Not.}, 2022(10):7420--7494, 2022.

\bibitem{fields1972asymptotic}
J.~L. Fields.
\newblock The asymptotic expansion of the {M}eijer {G}-function.
\newblock {\em Mathematics of Computation}, 26(119):757--765, 1972.

\bibitem{MR2881072}
J.~Fischmann, W.~Bruzda, B.~A. Khoruzhenko, H.-J. Sommers, and
  K.~\.{Z}yczkowski.
\newblock Induced {G}inibre ensemble of random matrices and quantum operations.
\newblock {\em J. Phys. A}, 45(7):075203, 31, 2012.

\bibitem{MR1181356}
P.~J. Forrester.
\newblock Some statistical properties of the eigenvalues of complex random
  matrices.
\newblock {\em Phys. Lett. A}, 169(1-2):21--24, 1992.

\bibitem{forrester2010log}
P.~J. Forrester.
\newblock {\em Log-gases and {R}andom {M}atrices (LMS-34)}.
\newblock Princeton University Press, Princeton, 2010.

\bibitem{forrester2016analogies}
P.~J. Forrester.
\newblock Analogies between random matrix ensembles and the one-component
  plasma in two-dimensions.
\newblock {\em Nucl. Phys. B}, 904:253--281, 2016.

\bibitem{Fyodorov_2003}
Y.~V. Fyodorov and H.-J. Sommers.
\newblock Random matrices close to hermitian or unitary: overview of methods
  and results.
\newblock {\em J. Phys. A: Math. Gen.}, 36(12):3303--3347, 2003.

\bibitem{ghosh2018point}
S.~Ghosh and A.~Nishry.
\newblock Point processes, hole events, and large deviations: random complex
  zeros and {C}oulomb gases.
\newblock {\em Constr. Approx.}, 48(1):101--136, 2018.

\bibitem{ginibre1965statistical}
J.~Ginibre.
\newblock Statistical ensembles of complex, quaternion, and real matrices.
\newblock {\em J. Math. Phys.}, 6(3):440--449, 1965.

\bibitem{2010arXiv1012.2710G}
F.~G{\"o}tze and A.~Tikhomirov.
\newblock On the asymptotic spectrum of products of independent random
  matrices.
\newblock {\em preprint arXiv:1012.2710}, 2010.

\bibitem{Grobe_1988}
R.~Grobe, F.~Haake, and H.-J. Sommers.
\newblock Quantum distinction of regular and chaotic dissipative motion.
\newblock {\em Phys. Rev. Lett.}, 61:1899--1902, 1988.

\bibitem{HM13}
H.~Hedenmalm and N.~Makarov.
\newblock Coulomb gas ensembles and {L}aplacian growth.
\newblock {\em Proc. Lond. Math. Soc. (3)}, 106(4):859--907, 2013.

\bibitem{MR3066113}
J.~R. Ipsen.
\newblock Products of independent quaternion {G}inibre matrices and their
  correlation functions.
\newblock {\em J. Phys. A}, 46(26):265201, 16, 2013.

\bibitem{jancovici1993large}
B.~Jancovici, J.~L. Lebowitz, and G.~Manificat.
\newblock Large charge fluctuations in classical {C}oulomb systems.
\newblock {\em J. Stat. Phys.}, 72(3):773--787, 1993.

\bibitem{MR1928853}
E.~Kanzieper.
\newblock Eigenvalue correlations in non-{H}ermitean symplectic random
  matrices.
\newblock {\em J. Phys. A}, 35(31):6631--6644, 2002.

\bibitem{khoruzhenko2021truncations}
B.~A. Khoruzhenko and S.~Lysychkin.
\newblock Truncations of random symplectic unitary matrices.
\newblock {\em preprint arXiv:2111.02381}, 2021.

\bibitem{kiessling1999note}
M.~K.-H. Kiessling and H.~Spohn.
\newblock A note on the eigenvalue density of random matrices.
\newblock {\em Comm. Math. Phys.}, 199(3):683--695, 1999.

\bibitem{krishnapur2006overcrowding}
M.~Krishnapur.
\newblock Overcrowding estimates for zeroes of planar and hyperbolic {G}aussian
  analytic functions.
\newblock {\em J. Stat. Phys.}, 124(6):1399--1423, 2006.

\bibitem{lacroix2019intermediate}
B.~Lacroix-A-Chez-Toine, J.~A.~M. Garz{\'o}n, C.~S.~H. Calva, I.~P. Castillo,
  A.~Kundu, S.~N. Majumdar, and G.~Schehr.
\newblock Intermediate deviation regime for the full eigenvalue statistics in
  the complex {G}inibre ensemble.
\newblock {\em Phys. Rev. E}, 100(1):012137, 2019.

\bibitem{lacroix2018extremes}
B.~Lacroix-A-Chez-Toine, A.~Grabsch, S.~N. Majumdar, and G.~Schehr.
\newblock Extremes of 2d coulomb gas: universal intermediate deviation regime.
\newblock {\em J. Stat. Mech.}, 2018(1):013203, 2018.

\bibitem{lacroix2019rotating}
B.~Lacroix-A-Chez-Toine, S.~N. Majumdar, and G.~Schehr.
\newblock Rotating trapped fermions in two dimensions and the complex {G}inibre
  ensemble: Exact results for the entanglement entropy and number variance.
\newblock {\em Phys. Rev. A}, 99(2):021602, 2019.

\bibitem{luke1969special}
Y.~L. Luke.
\newblock {\em The special functions and their approximations}, volume~53.
\newblock Academic press, 1969.

\bibitem{Mehta}
M.~L. Mehta.
\newblock {\em Random matrices}, volume 142 of {\em Pure and Applied
  Mathematics (Amsterdam)}.
\newblock Elsevier/Academic Press, Amsterdam, third edition, 2004.

\bibitem{nishry2020forbidden}
A.~Nishry and A.~Wennman.
\newblock The forbidden region for random zeros: appearance of quadrature
  domains.
\newblock {\em preprint arXiv:2009.08774}, 2020.

\bibitem{olver2010nist}
F.~W. Olver, D.~W. Lozier, R.~F. Boisvert, and C.~W. Clark~(Editors).
\newblock {\em NIST Handbook of Mathematical Functions}.
\newblock Cambridge University Press, Cambridge, 2010.

\bibitem{o2011products}
S.~O'Rourke and A.~Soshnikov.
\newblock Products of independent non-hermitian random matrices.
\newblock {\em Electronic Journal of Probability}, 16:2219--2245, 2011.

\bibitem{ST97}
E.~B. Saff and V.~Totik.
\newblock {\em Logarithmic potentials with external fields}, volume 316 of {\em
  Grundlehren der Mathematischen Wissenschaften [Fundamental Principles of
  Mathematical Sciences]}.
\newblock Springer-Verlag, Berlin, 1997.
\newblock Appendix B by Thomas Bloom.

\bibitem{smith2021counting}
N.~R. Smith, P.~L. Doussal, S.~N. Majumdar, and G.~Schehr.
\newblock Counting statistics for non-interacting fermions in a rotating trap.
\newblock {\em Phys. Rev. A}, 105:043315, 2022.

\bibitem{MR1748745}
K.~\.{Z}yczkowski and H.-J. Sommers.
\newblock Truncations of random unitary matrices.
\newblock {\em J. Phys. A}, 33(10):2045--2057, 2000.

\end{thebibliography}
%%%%%%%%%%%%%%%%%%%%%%%%%%%%%%%%%%%%%%%%%%%%%%%%%%%%%%%%%%%%%%

% For Journal of Statistical Physics
\noindent\emph{Data availability:}
The numerical data generated for part of the figures are merely an illustration of our entirely theoretical findings. These data are available upon request from the authors.

\noindent\emph{Conflict of interest:}
The authors have not disclosed any competing interests.
\end{document}